\documentclass{article}
\input{tikzit.tikzdefs}
\usepackage{authblk}
\usepackage{subfiles}     
\AtBeginDocument{%
}
\AtEndDocument{%
}

\title{Quantum-like Cognition in Process Theories: \\ An Analysis} 

\author[1]{{Sean Tull}}
\author[2,3,4,5]{{Masanao Ozawa}}

\affil[1]{Quantinuum, Partnership House, London, United Kingdom} 
\affil[2]{Department of Mathematical Informatics, Nagoya University, Japan}
\affil[3]{Kinugasa Research Organization, Ritsumeikan University, Japan}
\affil[4]{Center for Mathematical Science and Artificial Intelligence, Academy of Emerging Sciences, Chubu University, Japan}
\affil[5]{Fundamental Quantum Science Program, RIKEN, Japan}

\date{}
\begin{document}

\maketitle
\begin{abstract}
Various effects in human cognition, often considered `non-classical', have been argued to be most naturally modelled by quantum-like models of decision making. We extend this approach to describe models of cognition and decision-making in general probabilistic process theories, which include both classical probabilistic models and quantum instrument models as special cases. We show how many aspects of quantum-like cognition can be described diagrammatically in process theories, before using our approach to assess the arguments for quantum-like models. While standard Bayesian classical models are insufficient, we prove that any sequential decision data can in fact be given a more general form of classical instrument model, and see that even simple deterministic models can exhibit all cognitive effects. Restricting attention to instruments induced by measurements, such as classical Bayesian and quantum POVM models, rules out such a result, but is challenged by the fact that such instruments cannot account for certain effects. Finally, we argue that to strictly rule out classical instrument models one should make use of parallel composition in the modelling of joint decisions, and find real world cognitive data violating Bell inequalities. 
\end{abstract}

\section{Introduction} \label{sec:intro}
In recent years, a proposal that has gained increasing attention within cognitive science is that the mathematics of \emph{quantum theory} can provide a natural high-level modelling language for various aspects of human cognition, decision-making and reasoning \cite{aerts1995applications,aerts2009quantum,busemeyer2009empirical,wang2014context,busemeyer2012quantum,pothos2022quantum,huang2025overview}. 
Importantly, such claims are distinct from `quantum brain' hypotheses which postulate the relevance of quantum physics to the brain or mind \cite{hameroff1996orchestrated,jedlicka2017revisiting}. Instead, these \emph{quantum-like} models of cognition claim only that quantum representations provide a natural high-level account of mental processes, which may or may not emerge from an underlying classical system such as a neural network
\cite{khrennikov2025coupling}. 

A basic motivation 
comes from the modelling of various effects and so-called `fallacies' in human decision making, each of which violate the rules of classical logic and probability theory. For example, decisions are often found to \emph{interfere} with each other, leading to \emph{order effects} whereby the order in which decisions are made influences their outcome probabilities \cite{moore2002measuring,wang2014context}. 
These effects motivate researchers to pursue alternative models besides classical logic. A natural candidate has been to instead use `quantum logic', and hence model decisions as non-commuting quantum projective measurements which famously exhibit interference and order effects. To capture yet further cognitive effects researchers have more recently extended these to more general quantum \emph{instruments}, describing arbitrary quantum-classical interactions \cite{ozawa2019application,jacobs2017quantum}.

However, quantum-like cognition remains a controversial topic, and it is natural to wonder: is it really necessary to invoke the full apparatus of quantum theory specifically to explain such cognitive effects? 

Within the foundations of physics, great progress has been made in the understanding of \emph{general probabilistic theories} (GPTs), which generalise both classical and quantum information theory \cite{davies1970operational,gudder1973convex,barrett2007information,plavala2023general}. While often merely specifying a single isolated system, in their modern form a GPT is instead best described as a \emph{process theory} \cite{coecke2018picturing}, containing an entire class of systems and processes between them, including rules for how such systems and processes compose and interact with classical systems in measurement. Initially motivated by the desire for operational and informational understandings of quantum theory \cite{chiribella2011informational,hardy2011reformulating}, there are now various categorical formalisations of GPTs as process theories such as \cite{tull2016operational,gogioso2017categorical} beginning with the framework of operational-probabilistic theories \cite{chiribella2010probabilistic}.  Formally, a process theory is a special case of a \emph{(symmetric monoidal) category} \cite{heunen2019categories}, and as a result process theories benefit from an intuitive but formal graphical calculus based on \emph{string diagrams} \cite{selinger2010survey}.

Now, a priori, all we know of a cognitive decision is that it forms some probabilistic interaction process between an unknown system, describing the participant's mental state, and a known classical system describing the reported outcome of the decision. It is natural to thus initially regard a decision as such a process within an unknown process theory. By studying the properties of decisions one can then aim to deduce whether they are best described via classical theory, quantum theory or some other process theory. As a result clarity on the current status of the arguments in quantum-like cognition would be gained by giving an account of them in the language of process theories. 
In this article we give such an analysis of quantum-like cognition in terms of probabilistic process theories. We use these to give high-level diagrammatic  accounts of many cognitive effects, aiming to give a succinct presentation while also generalising beyond classical and quantum models. Following this, we use the process-theoretic perspective to assess the status of existing arguments for the use of quantum-like models. Overall, we find that while these cognitive effects rule out standard `Bayesian' classical models, which feature no state updating, they do not rule out more general classical instrument models, which are in fact always available for any data, as we prove in Theorem \ref{thm:-simple-M-Model}. We then explore two potential routes forward for arguments for non-classicality of cognition, based on either restricting to `measurement models' or considering `Bell-type' arguments for joint decision data.

\paragraph{Process theories.} 
Let us now explain our approach in more detail. 
Throughout we work in a process theory $\catC$, also known as symmetric monoidal category. Specifically, we work in what we call a \emph{probabilistic process theory} which contains a subset of objects which we call \emph{classical} objects, and which we will use to model the outcomes of decisions. Any such process theory $\catC$ consists of a class of objects (systems) drawn generally as thick wires, with classical objects drawn as thinner wires, and processes drawn as boxes read from bottom to top, which may be composed in string diagrams.

As examples, classical processes are described by the process theory $\Class$ of finite sets and sub-probability channels. The process theory $\QC$ additionally contains finite-dimensional quantum systems (Hilbert spaces) and quantum processes (completely positive trace non-increasing maps) \cite{coecke2018picturing}. 

\paragraph{Decision models.}
Consider a subject making a cognitive decision $A$ which leads to an observable outcome from a finite set of possible outcomes $\XAintro$. For example, the subject could simply be asked a question, to which they give an answer from the set $\XAintro$. We model such a decision as a process of the following form, called an \emph{instrument} in $\catC$, read bottom to top.
\[
\input{introA.tikz}
\]
The input is given by some `hidden' object $S$, describing the state space for their mental state before the decision. The outputs are another instance of $S$, corresponding to their mental state after the decision, and a classical object $\XAintro$ corresponding to the set of possible outcomes of the decision. Given some initial state $\Mo$ of the object $S$, we obtain a probability distribution over the outcomes $\XAintro$ of the decision $A$ as follows. 
\[
\input{introPA.tikz}
\]
Here the symbol $\discard{}$ corresponds to `discarding' or `ignoring' $S$ (classically via marginalisation or in the quantum case via the partial trace). In a probabilistic process theory, the left-hand side will then correspond to a probability distribution $\PA$ over the possible outcomes $\XAintro$.

More generally, \emph{decision data} from an experiment consists of a set of probability distributions over the outcomes of various possible \emph{sequences} of decisions. An \emph{instrument model} of the data in a process theory $\catC$ is then given by specifying an object $S$, an initial state $\Mo$, and for each decision a corresponding instrument, such that each probability distribution in  the data is given by composing the instruments in sequence. For example, data for decision $A$ followed by decision $B$ is given by a distribution $\PAB$ over their respective outcomes $X_A \times X_B$, and must satisfy the following. 
\[
\input{PABintro.tikz}
\]

\paragraph{Cognitive effects.}
Quantum-like cognition often concerns studying models of decisions which account for various effects in decision data. Examples include order effects in which $\PAxBy \neq \PByAx$ for some outcomes $x,y$, and interference effects\footnote{Not to be confused with interference in the sense of quantum superpositions, see Section \ref{subsec:interference-effect}.} in which the marginal of $\PAB$ on $B$ is distinct from $\PB$ alone. Each such effect is closely related to a similar (in)equality at the level of instruments in $\catC$, allowing us to describe their key properties diagrammatically, as in the example diagrams below (explained fully in later sections). 
\begin{equation} \label{eq:effects-simple}
\scalebox{0.9}{
\input{intro-OE.tikz}
\qquad \qquad
\input{intro-int.tikz}
\qquad \qquad
\input{intro-QQ.tikz}
}
\end{equation}
In Section \ref{sec:cog-effects}, we give a high-level diagrammatic account of several such cognitive effects including interference, order and replicability effects, conjunction fallacies, and the QQ-equality. Appendices \ref{app:cond-data} and \ref{app:further-effects} extend this to also cover disjunction fallacies, double stochasticity, reciprocity, `changes of context' and the Temporal Bell inequality.

\paragraph{Classical and quantum models.}
By varying our process theory and choice of instruments, we reach various classes of models which one may use to attempt to model these cognitive effects, outlined in Table \ref{table:instruments}. In a classical model, given by working in the classical process theory $\Class$, $S$ is given by a finite set, $\rho$ is an initial probability distribution over $S$, and an instrument with outcome set $X$ in general corresponds to a probability channel $P(X, S' \mid S)$ from the previous state to the next state along with the outcome $X$. Special cases are \emph{Bayesian} instruments, which simply copy and return the state on $S$ while yielding a probabilistic outcome in $X$, and \emph{deterministic} instruments 
which map each state of $S$ via a function to an outcome in $X$ and updated state of $S$. 

A quantum model, in the process theory $\QC$, replaces $S$ by a finite dimensional Hilbert space $\hilbH$, with $\rho$ given by a density matrix over $\hilbH$. A quantum instrument consists of a collection of \emph{completely positive} maps $(\instAI_x)_{x \in X}$ on $\hilbH$ with trace-preserving sum. Special cases include those induced by \emph{projective measurements} $(P_x)_{x \in X}$, including orthonormal basis measurements.

\begin{center}
\begin{table}[h]
\centering
\begin{tabular}{l|l|l}
Instrument type &  State space &  \\ \hline
Classical& Finite set $S$ & Prob.~channel $P(\XAI, S \mid S)$ \\ 
 \quad Bayesian& 
& Prob.~channel $P(\XAI \mid S)$ \\ 
\quad Deterministic & 
& Function $S \to \XAI \times S$ \\
\quad Markov & 
& Prob. channels $P(S \mid S)$, $P(\XAI \mid S)$  \\ 
\hline
Quantum & F.d. Hilbert space $\hilbH$ & CP maps $(\instAI_x)_{x \in \XAI}$ on $L(\hilbH)$
with trace-preserving sum \\ 
\quad Projective& 
& Projective measurement $(\projPAI_x)_{x \in \XAI}$ on $\hilbH$\\ 
\quad POVM & 
& Positive operator valued measure $(\povmAI_x)_{x \in \XAI}$ on $\hilbH$
\end{tabular}
\caption{Overview of classes of instruments with set of outcomes $\XAI$.}\label{table:instruments}
\end{table}
\end{center}

\paragraph{Arguments for quantum-like models.}

A key question is then: to what extent do these cognitive effects considered in the literature  enforce the use of quantum models specifically? Firstly, we find that the effects do indeed rule out classical Bayesian models, which do not disturb the underlying state and always commute, and hence cannot satisfy order or interference effects. In contrast, thanks to their non-commutativity, quantum projective models can exhibit both order and interference effects, and this has formed a major argument for quantum models in the literature \cite{busemeyer2012quantum}. 

However, we find that more general classical instruments, which allow for state disturbance, are in fact able to account for all of these effects. Indeed, it is well-known within the study of GPTs and process theories that general classical processes and models, unlike Bayesian models more specifically, need not satisfy non-disturbance nor commutativity, and so one should not equate classicality with these properties. 

More strongly, we prove in Theorem \ref{thm:-simple-M-Model} that essentially any sequential decision data can in fact be given a classical Markov model. In fact even simple deterministic classical instrument models can exhibit all of the cognitive effects listed in Section \ref{sec:cog-effects} together, as shown in \cite{ozawa2021modeling}. Further arguments are thus required for those wishing to strictly rule out the use of classical models, and we here identify two possible routes forward.

\paragraph{Measurement models.}
One approach to ruling out classical models which we discuss in Section \ref{sec:meas-models} is to limit the class of allowed instruments to only those canonically associated to \emph{measurements}. Classically this means only allowing Bayesian instruments, induced by probability channels $S \to X$, and in the quantum case only instruments induced by projective measurements or more general POVMs. 
Most of the literature has indeed implicitly focused on such \emph{measurement models}, comparing projective quantum models to classical Bayesian ones. These models benefit from a generally simpler representation with fewer parameters. However, a challenge is that quantum measurement models cannot satisfy all desirable cognitive effects, most notably the combination of order and replicability effects \cite{khrennikov2014quantum}. 

\paragraph{Joint decisions.}
Alternatively, if we wish to strictly rule out classical models whilst allowing for the use of arbitrary instruments, we are then forced to go beyond sequential data. 
It is then natural to thus consider the modelling of \emph{joint} decision processes. These make use of the parallel composition $\otimes$ in the process theory, depicted by drawing processes side-by-side, which in the quantum case corresponds to the tensor product. A \emph{parallel decision model} is now given by a set of distributions denoted $\{P(A \jdec B)\}_{A \in \decsetA, B \in \decsetB}$ over $\Xone \times \Xtwo$, modelled by parallel measurements as on the right-hand side:
\[
\input{intro-bell-2.tikz}
\] 
where $P(A \jdec B)$ is interpreted as the decisions $A, B$ being made in parallel. Here $\Hione$ and $\Hitwo$ are typically interpreted as separate aspects of the mental representation $\Hi$ (rather than separate agents).

In Section \ref{sec:Bell-arguments} we outline how such models could be applied to cognition to rule out the use of classical models. Specifically, for any such data violating a \emph{Bell inequality} there is provably no classical model of the right-hand form, i.e.~given by a probability distribution $\Mo$ and probability channels $A, B$ as above \cite{bell1964einstein,clauser1969proposed} \cite[Chapter 5]{busemeyer2012quantum}. Such arguments based on \emph{Bell's theorem} would provide genuinely stricter arguments ruling out classical models and forcing the use of further probabilistic theories. 

\subsection{Related \sscap{w}ork}
We use \cite{busemeyer2012quantum} as our main source on quantum-like cognition including both cognitive effects and examples. 
See for example \cite{pothos2022quantum,huang2025overview} for a more full review of quantum-like modelling. The arguments for (quantum) instruments and our perspective on order effects and RRE comes from \cite{ozawa2019application}. Our main process theory $\QC$ and graphical notation is essentially that of \cite{coecke2018picturing}, which also serves as an introduction to process theories; for a more categorical introduction see \cite{heunen2019categories}. Our categorical approach to probabilistic process theories is closely related to others such as \cite{tull2016operational,gogioso2017categorical}. 
 The use of GPTs in quantum-like cognition has been previously suggested in \cite{khrennikov2025coupling}, in that case for modelling mental state spaces derived from underlying neuronal dynamics. Here we extend this beyond single systems to entire process theories.

An earlier formal high-level categorical presentation of quantum-like cognition, which served as inspiration and source of examples, is due to Jacobs \cite{jacobs2017quantum}. Jacobs' `effect logic' essentially corresponds to our `measurement models' and `instrument maps' to our `update structures'. Our approach here is more comprehensive and considers more general instrument models rather than merely measurement models. 
 Key differences between our arguments around Bell inequalities and prior uses in quantum-like modelling, particularly in relation to entangled concepts \cite{
gabora2002contextualizing, 
aerts2005theory,
aerts2009quantum,
aerts2018spin}, are discussed in detail in Appendix \ref{sec:entangled-concepts}.

\subsection{Structure of \sscap{a}rticle}
We begin by introducing process theories in Section \ref{sec:setup} and both classical and quantum decision models in Section \ref{sec:decision-models}. We give diagrammatic accounts of cognitive effects in Section \ref{sec:cog-effects} and prove the existence of classical Markov models in Section \ref{sec:classical-models}. We then explore possible approaches based on measurement models in Section \ref{sec:meas-models} and joint decisions and Bell arguments in Section \ref{sec:Bell-arguments}. In Section \ref{sec:Discussion} we close with an overall discussion on the status of quantum vs classical decision models and future directions. Appendices \ref{app:cond-data} and \ref{app:further-effects} contain process-theoretic accounts of yet futher aspects of quantum-like modelling. Proofs are left for Appendix \ref{app:proofs} and \ref{sec:general-RRE-Markov-models} and related work on entangled concepts discussed in Appendix \ref{sec:entangled-concepts}.

\section{Process Theories} \label{sec:setup}

Let us now begin by describing our formal setup. Throughout we will work in a \emph{process theory}, or in other words a \emph{symmetric monoidal category} $(\catC, \otimes)$,
making use of the graphical calculus of \emph{string diagrams} \cite{selinger2010survey}. For a more detailed introduction to process theories with a quantum focus see e.g.~\cite{coecke2018picturing,heunen2019categories}.

Recall that a category consists of a collection of \emph{objects} $\obA, \obB,\dots$ (also referred to as \emph{systems}) and \emph{morphisms} $\morf, \morg \dots$ (also referred to as \emph{processes}). In diagrams, objects are depicted as labelled wires and a morphism $\morf \colon \obA \to \obB$ as a box with input wire $\obA$ and output wire $\obB$, read from bottom to top. 
\[
\begin{tikzpicture}
	\begin{pgfonlayer}{nodelayer}
		\node [style=map] (0) at (0, 0) {$\morf$};
		\node [style=none] (1) at (0, -1.25) {};
		\node [style=none] (2) at (0, 1.25) {};
		\node [style=label] (3) at (0, 1.75) {$\obB$};
		\node [style=label] (4) at (0, -1.75) {$\obA$};
	\end{pgfonlayer}
	\begin{pgfonlayer}{edgelayer}
		\draw (2.center) to (1.center);
	\end{pgfonlayer}
\end{tikzpicture}

\]
Given another morphism $\morg \colon \obB \to \obC$ we can compose both to yield a morphism $\morg \circ \morf \colon \obA \to \obC$ depicted as left-hand below. Each object $\obA$ also has an identity morphism $\id{\obA}$ depicted as a blank wire (right-hand below), such that $\id{\obB} \circ \morf = \morf = \morf \circ \id{\obA}$ for all $\morf \colon \obA \to \obB$. 
\[
\input{composite-1.tikz} \qquad \qquad  \qquad \qquad  \input{identity.tikz} 
\]
In a monoidal category, given objects $\obA, \obB$ we can form their \emph{monoidal product} $\obA \otimes \obB$ depicted by drawing wires side by side. Similarly, given morphisms $\morf \colon \obA \to \obB, \morg \colon \obC \to \obD$ we can form their monoidal product $\morf \otimes \morg \colon \obA \otimes \obC \to \obB \otimes \obD$ depicted as below. 
\[
\input{ob-tensor.tikz} 
\qquad \qquad \qquad \qquad 
\input{tensor.tikz}
\] 
Intuitively, we think of such a product as consisting of two independent processes $f, g$ occuring in parallel. Since our monoidal category is symmetric it also comes with natural\footnote{Meaning that $\tinyswap_{C,D} \circ (f \otimes g) = (g \otimes f) \circ \tinyswap_{A,B}$ for all $f \colon A \to C, g \colon B \to D$.} \emph{swap} morphisms allowing us to cross wires over each other: 
\[
\input{swap.tikz}
\]
which satisfy $\tinyswap \circ \tinyswap = \id{}$. 
 A monoidal category also comes with a \emph{unit object} denoted by $I$ which is not shown in diagrams, depicted simply as empty space, with identity denoted $1 := \id{I}$. Morphisms $\omega \colon I \to \obA$, called \emph{states} of $\obA$, are depicted as follows, appearing as boxes with `no input'. 
\[
\begin{tikzpicture}
	\begin{pgfonlayer}{nodelayer}
		\node [style=map] (0) at (0, -0.5) {$\omega$};
		\node [style=none] (1) at (0, 1) {};
		\node [style=label] (2) at (0, 1.5) {$A$};
	\end{pgfonlayer}
	\begin{pgfonlayer}{edgelayer}
		\draw (1.center) to (0);
	\end{pgfonlayer}
\end{tikzpicture}

\]
Morphisms $r, s \colon I \to I$, drawn without inputs or outputs, are called \emph{scalars}. For any morphism $f \colon A \to B$ and scalar $r$ we can define a scalar multiplication $r \cdot f \colon A \to B$ by: 
\[
\input{scalarmult.tikz}
\]
In particular, scalars come with a 
commutative and associative multiplication operation given by 
\[
\begin{tikzpicture}
	\begin{pgfonlayer}{nodelayer}
		\node [style=scalar] (0) at (2.75, 0) {$r$};
		\node [style=scalar] (1) at (4, 0) {$s$};
		\node [style=none] (2) at (1.5, 0) {$=$};
		\node [style=scalar] (3) at (-0.25, 0) {$r \cdot s$};
	\end{pgfonlayer}
\end{tikzpicture}

\]
with unit $1=\id{I}$.
We will assume $\catC$ comes with a \emph{zero} scalar $0$ satisfying $r \cdot 0 = 0$ for all scalars $r$, and more generally $0 \cdot f = 0 \cdot g$ for all $f, g \colon A \to B$.

Throughout we will work in an SMC with \emph{discarding}, meaning that each object $\obA$ comes with a distinguished \emph{discarding} process $\discard{\obA} \colon \obA \to I$ depicted as a ground symbol:
\[
\begin{tikzpicture}[tikzfig]
	\begin{pgfonlayer}{nodelayer}
		\node [style=upground] (28) at (12, 0.75) {};
		\node [style=none] (29) at (12, -0.5) {};
		\node [style=label] (30) at (12, -1) {$\obA$};
	\end{pgfonlayer}
	\begin{pgfonlayer}{edgelayer}
		\draw (29.center) to (28);
	\end{pgfonlayer}
\end{tikzpicture}

\]
such that the following holds. 
\[
\input{disc-nat.tikz} \qquad \qquad \input{disc-I.tikz}
\]
We call a morphism $\morf \colon \obA \to \obB$ a \emph{channel} when it preserves discarding, in that the following holds.\footnote{Channels are also often called `causal'  \cite{coecke2016terminality} or `total' \cite{cho2015introduction}.}
\[
\input{channel.tikz}
\]
As a special case we call a state $\omega$ \emph{normalised} when:
\[
\begin{tikzpicture}
	\begin{pgfonlayer}{nodelayer}
		\node [style=map] (0) at (-1.5, -0.75) {$\omega$};
		\node [style=upground] (1) at (-1.5, 1) {};
		\node [style=none] (2) at (0.25, 0) {$=$};
		\node [style=none] (3) at (1.75, 0) {$1$};
	\end{pgfonlayer}
	\begin{pgfonlayer}{edgelayer}
		\draw (0) to (1);
	\end{pgfonlayer}
\end{tikzpicture}

\]

Next, it will be convenient to assume the following extra structure, allowing us to explicitly capture classical `information' carried by finite sets. 
\begin{definition}[Classical Objects]
We say that a process theory $\catC$ \emph{has classical objects} when for every finite set $X$ it contains an object also denoted $\cob$, the \emph{classical object} corresponding to $X$, depicted as a thin wire:
\[
\begin{tikzpicture}
	\begin{pgfonlayer}{nodelayer}
		\node [style=none] (0) at (0, 1) {};
		\node [style=none] (1) at (0, -1) {};
		\node [style=label] (2) at (0, 1.5) {$\cob$};
		\node [style=label] (3) at (0, -1.5) {$\cob$};
	\end{pgfonlayer}
	\begin{pgfonlayer}{edgelayer}
		\draw [style=cwire] (1.center) to (0.center);
	\end{pgfonlayer}
\end{tikzpicture}

\]
and satisfying the following.\footnote{Categorically, the condition states that $X$ is a finite copower $X \cdot I$ and these copowers distribute over $\otimes$, so that $A \otimes X = X \cdot A$.}
\[
\begin{tikzpicture}
	\begin{pgfonlayer}{nodelayer}
		\node [style=none] (0) at (0, 1) {};
		\node [style=classpoint] (1) at (0, -0.5) {$x$};
		\node [style=label] (2) at (0, 1.5) {$\cob$};
	\end{pgfonlayer}
	\begin{pgfonlayer}{edgelayer}
		\draw [style=cwire] (0.center) to (1);
	\end{pgfonlayer}
\end{tikzpicture}

\]
and such that for all objects $\obA, \obB$ and any collection of morphisms $(f_x \colon \obA \to \obB)_{x \in \cob}$ there is a unique morphism  $f \colon \obA \otimes \cob \to \obB$ such that the following holds for all $x \in X$: 
\[
\input{class-coprod.tikz}
\]
\end{definition}
Intuitively, the above condition tells us that each finite classical object $X$ corresponds to the corresponding finite set, with elements $x \in X$ corresponding to the states $x$.
For each $x \in X$ we define another process $x\colon \cob \to I$ depicted as an inverted triangle:
\begin{equation} \label{eq:deltaxy}
\begin{tikzpicture}
	\begin{pgfonlayer}{nodelayer}
		\node [style=none] (0) at (0, -0.75) {};
		\node [style=classcopoint] (1) at (0, 0.75) {$x$};
		\node [style=label] (2) at (0, -1.25) {$\cob$};
	\end{pgfonlayer}
	\begin{pgfonlayer}{edgelayer}
		\draw [style=cwire] (0.center) to (1);
	\end{pgfonlayer}
\end{tikzpicture}
 
\qquad
\qquad \text{with}
\qquad
\qquad
\begin{tikzpicture}
	\begin{pgfonlayer}{nodelayer}
		\node [style=classcopoint] (0) at (0, 0.75) {$x$};
		\node [style=classpoint] (1) at (0, -0.75) {$y$};
		\node [style=label] (2) at (-0.75, 0) {$X$};
	\end{pgfonlayer}
	\begin{pgfonlayer}{edgelayer}
		\draw [style=cwire] (1) to (0);
	\end{pgfonlayer}
\end{tikzpicture}
 \ \ = \ \ 
\begin{cases} 1 & x =y \\ 0 & \text{otherwise} 
\end{cases}
\end{equation}
Next we define the \emph{copy} morphism $\tinycopy \colon \cob \to \cob \otimes \cob$ by: 
\[
\input{copy-points2.tikz} \ \ \forall x 
\]
It follows that classical objects are closed under tensor via $X \otimes Y = X \times Y$, and copy morphisms satisfy the following.\footnote{That is, copying forms a commmutative comonoid with discarding, and the subcategory of classical objects forms a \emph{cd-category} \cite{cho2019disintegration} and its further subcategory of classical channels a \emph{Markov category} \cite{fritz2020synthetic}.
Though not necessary for this article,  one could more broadly consider infinite or continuous classical objects, for example corresponding to commutative C*-algebras or von Neumann algebras. These should satisfy that they merely form a cd-subcategory with chosen copying morphisms $\tinycopy$.} 
\[
\input{markov-axioms.tikz} \qquad \qquad 
\input{copy-nat.tikz} 
\]

Finally, we will assume that $\catC$ carries the structure of a probabilistic theory in the following sense. By a \emph{probability sub-distribution} over a finite set $X$ we mean a collection $(\omega(x))_{x \in X}$ of probabilities $\omega(x) \in [0,1]$ with $\sum_{x \in X} \omega(x) \leq 1$. When this is an equality we simply call the collection a \emph{distribution}.

\begin{definition}[Probabilistic Process Theory] \label{def:prob-PT}
We say $\catC$ forms a \emph{probabilistic process theory} when it satisfies the above properties and the following:
\begin{enumerate}
\item \label{enum:scalars-are-probs}
scalars $\catC(I,I) = [0,1]$ correspond to probabilities $p \in [0,1]$ under $p \cdot q = pq$ and $1=\id{I}$;
\item \label{enum:scalar-cancellative}
$p \cdot f = p \cdot g \implies f = g$ for all morphisms $f, g$ and $p \in (0,1]$;
\item \label{enum:states-to-sub-dist}
states (resp. normalised states) $\omega$ of a classical object $X$ are in one-to-one correspondence with sub-probability (resp. probability) distributions $(\omega(x))_{x \in X}$ over the set $X$ via:
\begin{equation} \label{eq:probs}
\omega(x)  \ \ :=  \ \ \begin{tikzpicture}
	\begin{pgfonlayer}{nodelayer}
		\node [style=cmap] (0) at (-3, -0.75) {$\omega$};
		\node [style=classcopoint] (1) at (-3, 0.75) {$x$};
		\node [style=label] (4) at (-3.75, 0) {$X$};
	\end{pgfonlayer}
	\begin{pgfonlayer}{edgelayer}
		\draw [style=cwire] (0) to (1);
	\end{pgfonlayer}
\end{tikzpicture}
 \in [0,1]
\end{equation}
\setcounter{nameOfYourChoice}{\value{enumi}}
\end{enumerate}
\end{definition} 

Thanks to these axioms, morphisms between classical objects allow us to encode (finite) classical probability theory. Indeed, by the last condition, normalised states $\omega$ of a classical object $X$ correspond as above to probability distributions over $X$:
\begin{equation} \label{eq:prob-dist}
\text{Normalised } \qquad 
\begin{tikzpicture}
	\begin{pgfonlayer}{nodelayer}
		\node [style=cmap] (0) at (-7, -0.5) {$\omega$};
		\node [style=none] (1) at (-7, 0.75) {};
		\node [style=label] (2) at (-7, 1.25) {$X$};
	\end{pgfonlayer}
	\begin{pgfonlayer}{edgelayer}
		\draw [style=cwire] (1.center) to (0);
	\end{pgfonlayer}
\end{tikzpicture}
 \qquad \iff \qquad \text{Probability distribution $\omega(X)$}
\end{equation}
In particular, the state for each element $x \in X$ is the point distribution $\delta_x$ at $X$, by \eqref{eq:deltaxy}. More generally, for any morphism $M \colon X \to Y$ between finite classical objects we can define:
\begin{equation} \label{eq:class-values}
\ M(y \mid x) \ \ := \ \ \input{classvalues.tikz}  \ \ \in [0,1]
\end{equation}
Then $M$ is a channel precisely when it is \emph{Stochastic}, i.e.~a probability channel, meaning that: 
\[
\sum_{y \in Y} M(y \mid x) = 1
\]
for all $x \in X$. Such a channel might more conventionally be denoted as $M(Y \mid X)$ or $P(Y \mid X)$.
\[
\text{Channel }  \ \ \input{Mxy.tikz}  \iff \text{ Probability channel } M(Y \mid X)
\]
Channels $M$ with multiple classical inputs $X_1, \dots, X_n$ and outputs $Y_1,\dots, Y_m$ correspond to probability channels $M(Y_1,\dots,Y_m \mid X_1,\dots,X_n)$ in the same way. For a channel $M(Y, Z \mid X)$, discarding output $Y$ corresponds to marginalisation over $Y$ in the usual sense:
\[
\input{margMxy.tikz} \ = \  \sum_{y \in Y} M(y,z \mid x)
\]

Note that we can make Definition \ref{def:prob-PT} more axiomatic as follows, proven in the Appendix.\footnote{More strongly than the probabilistic structure we have assumed, we could have alternatively required scalars to correspond to $\mathbb{R}^{\geq 0}$,  and that $\catC$ has \emph{addition}, meaning it is monoidally enriched in commutative monoids via an addition operation $f, g \mapsto f + g$ on morphisms $f, g \colon A \to B$. Categories with essentially this structure are considered in \cite{gogioso2017categorical} and allow working with `super-normalised' processes, while our approach allows our categories to contain only the more operationally meaningful scalars $[0,1]$ and sub-normalised processes, as in \cite{chiribella2010probabilistic,cho2015introduction,tull2016operational}.} 

\begin{proposition} \label{prop:probPT}
A process theory $\catC$ is
probabilistic iff it satisfies axiom \eqref{enum:scalars-are-probs} of Def.~\ref{def:prob-PT} along with the following, for all classical objects $X$.
\begin{enumerate}
	\setcounter{enumi}{\value{nameOfYourChoice}}
\item  \label{enum:joint-monic}
For all morphisms $\morf, \morg$ as below we have:
 \footnote{This states that the morphisms $x \otimes \id{B}$ are jointly monic as projections $X \cdot B \to B$, as in an effectus in partial form \cite{cho2015introduction}; see also \cite{tull2016operational}. }
\begin{equation} \label{eq:jointmonic}
\input{xafterfg.tikz} \ \  \forall x \in X
\quad \implies 
\quad
f = g
\end{equation}
\item \label{enum:sum-rule}
For every state $\omega$ of $X$ we have:
\[
\begin{tikzpicture}
	\begin{pgfonlayer}{nodelayer}
		\node [style=cmap] (0) at (-3, -0.75) {$\omega$};
		\node [style=upground] (1) at (-3, 0.75) {};
		\node [style=label] (4) at (-3.75, 0) {$X$};
	\end{pgfonlayer}
	\begin{pgfonlayer}{edgelayer}
		\draw [style=cwire] (0) to (1);
	\end{pgfonlayer}
\end{tikzpicture}
   \ \ = \ \  \sum_{x \in X} \  \ \omega(x)
\]
\item  \label{enum:normalisation}
Every non-zero state $\omega$ of $X$ has $\omega = (\discard{} \circ \omega) \cdot \normop(\omega)$ for some normalised state $\normop(\omega)$.
\item \label{enum:unif-state}
There exists a state $\discardflip{}$ of $X$ 
with $x \circ \discardflip{} = \frac{1}{|X|}$ for all $x \in X$. 
\end{enumerate} 
\end{proposition}

Intuitively, these conditions respectively state that each classical object $X$ is determined by `testing for' each element $x \in X$, that $\discard{}$ relates to marginalisation, that we can normalise sub-probability distributions, and that there exists a uniform distribution over each $X$.

\subsection{Examples}

Let us now meet some examples of process theories. We begin by describing a category of classical probabilistic processes. In fact we have essentially just met this example above, which simply amounts to working only with classical objects.

\begin{example}[\Class] \label{ex:Class-cat}
In the category $\Class$ of (finite) classical processes, the objects are finite sets $X, Y, \dots$ and every such object is a classical object. A morphism $M \colon X \to Y$ is thus given by a `matrix' $M \colon X \times Y \to [0,1]$ with values $M(y \mid x) \in [0,1]$ for $x \in X, y \in Y$. Composition of $M \colon X \to Y$ and $N \colon Y \to Z$ is matrix multiplication:
\[
(N \circ M)(z \mid x) := \sum_{y \in Y} N(z \mid y) M(y \mid x)
\]
The identity morphism on $X$ is the matrix with $\id{X}(y \mid x) = \delta_{x,y}$. The monoidal product is given on objects by $X \otimes Y = X \times Y$, and on morphisms $M \otimes N$ by the Kronecker product of matrices:
\[
(M \otimes N)(w,z \mid x ,y) = M(w \mid x)N(z \mid y)
\]
Here $I=\{\star\}$ is a singleton set. Discarding $\discard{} \colon X \to I$ is given by $\discard{X}(\star \mid x) = 1$ for all $x \in X$. 

Exactly as above, for each $x \in X$ there is a normalised state of $X$ corresponding to the point distribution $x(y \mid \star) = \delta_{x,y}$, as well as $x \colon X \to I$ given by $x(\star \mid y) = \delta_{x,y}$. Probabilities $M(y \mid x)$ are given just as in \eqref{eq:class-values}, 
normalised states  are probability distributions, channels $M \colon X \to Y$ are probability channels and $\discard{}$ amounts to marginalisation. Any probabilistic process theory $\catC$ contains $\Class$ as its sub-category of morphisms between classical objects, as we prove in Lemma \ref{lem:full-faithful-embedding} in Appendix \ref{app:proofs}. 
\end{example}

Next, we describe quantum processes. In fact we will do so via a category which includes both classical and quantum objects and processes.

\begin{example}[\QC] \label{ex:QC-catgeory}
In the category $\QC$ the objects are pairs $(\hilbH, X)$ where $\hilbH$ is a finite-dimensional complex Hilbert space (a quantum system) and $X$ is a finite set (a classical system). We draw quantum systems with thick wires, and classical objects with thin wires, as per our convention. A morphism $f \colon (\hilbH, X) \to (\hilbK, Y)$, drawn as: 
\begin{equation} \label{eq:QC-morphism}
\input{q-mor.tikz}
\end{equation}
is given by a collection of \emph{completely positive} linear maps:\footnote{Recall that a linear map $f \colon L(\hilbH) \to L(\hilbK)$ is positive when it sends positive operators $a=b^\dagger b \in L(\hilbH)$ to positive operators $f(a)$ in $L(\hilbK)$, and is completely positive when each map $f \otimes \id{\hilbH'}$ is also positive, for any other Hilbert space $\hilbH'$. We say $f$ is trace non-increasing when $\Tr(f(a)) \leq \Tr(a)$ for all $a$, and trace preserving when $\Tr(f(a)) = \Tr(a)$ for all $a$.} 
\[
\left(f(y \mid x) \colon L(\hilbH) \to L(\hilbK)\right)_{x \in X, y \in Y}
\]
such that $\sum_{y \in Y} f(y \mid x)$ is trace non-increasing for each $x \in X$. Here $L(\hilbH)$ denotes the set of linear operators on $\hilbH$.  Given $g \colon (\hilbK, Y) \to (\hilbW, Z)$ we define composition $g \circ f \colon (\hilbH, X) \to (\hilbW, Z)$ \rl{via}\ summation: 
\[
(g \circ f)(z \mid x) := \sum_{y \in Y} g(z \mid y) \circ f(y \mid x)
\]
We define $(\hilbH, X) \otimes (\hilbK, Y) := (\hilbH \otimes \hilbK, X \times Y)$ where $\hilbH \otimes \hilbK$ is the tensor product of Hilbert spaces, and on morphisms define $(f \otimes g)(y, z \mid x, w) := f(y \mid x) \otimes g(z \mid w)$, with unit object $I=(\mathbb{C},\{\star\})$. 
Classical processes correspond to those on thin wires only. For each finite set $X$ the corresponding classical object is $X=(\mathbb{C},X)$. A process between classical objects:
\[
\input{MXY.tikz}
\]
taking the form $(M(y \mid x) \colon \mathbb{C} \to \mathbb{C})_{x \in X, y \in Y}$, corresponds to a morphism $M \colon X \to Y$ in $\Class$, with values $(M(y\mid x))_{x \in X, y \in Y}$.\footnote{Here we view each value $M(y\mid x)$ as a function on $\mathbb{C}$ via $1 \mapsto M(y \mid x)$.}
There are also `entirely quantum' processes featuring thick wires only. We view any finite-dimensional Hilbert space $\hilbH$ as an object via $(\hilbH, \{\star\})$ where $\{\star\}$ is a singleton set. A morphism:  
\[
\input{fHK.tikz}
\]
is then precisely a completely positive trace non-increasing map $f \colon L(\hilbH) \to L(\hilbH)$. For any general morphism as in \eqref{eq:QC-morphism} its individual CP maps are given by composition: 
\[
\input{qmorxy.tikz}
\]
Within the quantum processes, of particular interest are the \emph{pure} quantum processes. For any linear map $f \colon \hilbH \to \hilbK$ with $\|f\| \leq 1$ there is an induced CP map $\hat f \colon L(\hilbH) \to L(\hilbK)$ given by $a \mapsto f a f^\dagger$. In particular, each unit vector $\psi \in \hilbH$ induces a normalised state $\hat \psi$ of $\hilbH$, whose density matrix is usually denoted with `Bra-ket' notation as $\ket{\psi}\bra{\psi}$. In diagrams we draw such a state simply as:
\[
\begin{tikzpicture}
	\begin{pgfonlayer}{nodelayer}
		\node [style=none] (0) at (0, 1) {};
		\node [style=point] (1) at (0, -0.25) {$\psi$};
		\node [style=label] (2) at (0, 1.5) {$\hilbH$};
	\end{pgfonlayer}
	\begin{pgfonlayer}{edgelayer}
		\draw (0.center) to (1);
	\end{pgfonlayer}
\end{tikzpicture}

\]

Discarding $\discard{} \colon (\hilbH, X) \to (\mathbb{C}, \{\star\})$ is given by setting $\discard{}(\star \mid x) \colon L(\hilbH) \to \mathbb{C}$ as the map $a \mapsto \Tr(a)$ for $a \in L(\hilbH)$, for all $x \in X$. Thus discarding a classical wire $X$ corresponds to marginalisation (summation over $X$) while discarding a quantum wire $\hilbK$ corresponds to taking the (partial) trace over $\hilbK$. 
\[
\input{ptrace.tikz}
\]
A morphism $f \colon \hilbH \to \hilbK$ between quantum objects is a channel precisely when it is \emph{CPTP (trace-preserving)} i.e.~
\[
\Tr(f(a)) = \Tr(a)
\] 
for all $a \in L(\hilbH)$. A normalised state $\rho$ of $\hilbH$ is then equivalent to a \emph{density matrix}, i.e.~a trace 1 positive operator $\rho \in L(\hilbH)$. 
\[
\text{Normalised} \ \ \ \ \begin{tikzpicture}
	\begin{pgfonlayer}{nodelayer}
		\node [style=none] (0) at (1, 1.25) {};
		\node [style=map] (1) at (1, 0) {$\rho$};
		\node [style=label] (2) at (1, 1.75) {$\hilbH$};
	\end{pgfonlayer}
	\begin{pgfonlayer}{edgelayer}
		\draw (0.center) to (1);
	\end{pgfonlayer}
\end{tikzpicture}
 \  \iff  \ \text{Density matrix $\rho \in L(\hilbH)$} 
\]
\end{example}

\begin{remark}[$\FCStar$]
Though we will not require it for this work, we can unify classical and quantum objects and processes in a more general way via the probabilistic process theory $\FCStar$ of finite-dimensional C*-algebras and trace non-increasing CP maps. There is an embedding $\QC \hookrightarrow \FCStar$ with classical objects $X$ corresponding to commutative algebras $\mathbb{C}^X$ and sending quantum objects $\hilbH$ to algebras $L(\hilbH)$, with more general algebras $\bigoplus_{x \in X} L(\hilbH_x)$ interpretable as `mixed' classical-quantum systems.
\end{remark}

\section{Decision Models} \label{sec:decision-models}

Let us now discuss how process theories can be used to model psychological data. We will consider the modelling of experimental scenarios in which participants are understood to make sequences of cognitive \emph{decisions}, as follows. 

\begin{definition}[Decision Data]
By a set of \emph{decisions} we simply mean a set $\decset$ whose elements we call decisions, where each decision $\decA \in \decset$ is given along with a finite set of possible \emph{outcomes} denoted $X_A$.  

A set of (sequential) \emph{decision data} $P$ for $\decset$ is given by a set of finite sequences $(A_1,\dots,A_n)$ of decisions from $\decset$, 
and for each such sequence a probability distribution $\PAton$ over $X_{A_1} \times \dots \times X_{A_n}$. We then denote the probability of an outcome sequence $(x_1,\dots,x_n) \in X_{A_1} \times \dots \times X_{A_n}$ by $\PAxton$.
\end{definition}

In a set of decision data, we interpret each sequence $(A_1,\dots,A_n)$ as an experimental scenario in which the participant is asked to make the given sequence of decisions in the order specified: $A_1$ followed by $A_2$ and so on, up to $A_n$. For each such sequence we have experimental data given by the distribution $\PAton$, specifying the relative probabilities of each possible sequence of outcomes. Each such distribution is denoted diagrammatically as a normalised state:
\[
\input{datadist.tikz}
\]
Thus any decision data is itself fully classical, being a state of the product of the classical objects $X_{A_1}, \dots X_{A_n}$. Many examples of decisions can be simply thought of as \emph{questions} asked to the participant's, with each $\PAxton$ being the probability a participant answers $x_1$ to question $A_1$, then $x_2$ to question $A_2$ and so on.

\begin{dataexample}[Clinton-Gore Data] \label{ex:Clinton-Gore}
A well-known example of question-style decision data is given by Moore \cite{moore2002measuring}. During a Gallup poll conducted during September 6-7, 1997, in which 1002 respondents were asked the questions $A$ = `Do you generally think Bill Clinton is honest and trustworthy?' and $B$ = `Do you generally think Al Gore is honest and trustworthy?'. Half of the participants were asked the questions in one order, and half in the other. 

In this case, the decisions are given by the two questions $\decset=\{A,B\}$, each with outcomes $X_A=X_B=\{y,n\}$ for `yes' and `no'. The decision data consists of data for the two question orders $(A,B)$ and $(B,A)$, and so is given by a distribution $\PAB$ over $X_A \times X_B$ and a distribution $\PBA$ over $X_B \times X_A$. Concretely, the data from the polls is as follows. 
\[
\begin{array}{l|lllll}
& AyBy  & AyBn  & AnBy  & AnBn  &  \\ \cline{1-5}
P & 0.4899 & 0.0447 & 0.1767 & 0.2887 &  
\end{array}
\qquad 
\begin{array}{l|lllll} 
& ByAy  & ByAn  & BnAy  & BnAn  &  \\ \cline{1-5}
P & 0.5625 & 0.1991 & 0.0255 & 0.2129 &  
\end{array}
\]
\end{dataexample}

Now let us describe what it will mean to model such decision data. The key idea will be to think of each participant as having an object $S$ describing their mental state, and each individual decision $A \in \decset$ as a process $A \colon S \to X_A \otimes S$ which acts on $S$, outputting both their reported outcome in the finite set $X_A$ as well as a new revised mental state in $S$. Throughout, let $\catC$ be a probabilistic process theory.

\begin{definition}[Instruments] \label{def:instrument}
Let $\cob$ be a finite set and $\obI$ an object in $\catC$. An \emph{instrument} $\instI$ on $\obI$ with outcomes $\cob$ is a channel in $\catC$ of the form: 
\begin{equation} \label{eq:instrument}
\input{inst.tikz}
\end{equation}
\end{definition}

Any such instrument describes a general process on $S$ which yields a (classical) outcome from the finite set $X$. Thanks to  \eqref{eq:jointmonic} it is fully determined by its collection of processes $(\instI(x) \colon \obI \to \obI)_{x \in \cob}$, where: 
\[
\input{instcorresp.tikz}
\]
 Here $\instI_x$ can be thought of as the state change process occurring when the outcome is $x \in X$. We will meet many examples of instruments relevant to decision modelling shortly. 

We can now formally describe decision models in process theories. 

\begin{definition}[Instrument Model]
An \emph{instrument model} $\modelM$ of a set of decisions $\decset$ in a probabilistic process theory $\catC$ is given by specifying the following in $\catC$:
\begin{itemize}
\item an object $S$, called the \emph{state space}; 
\item a normalised state $\rho$ of $S$, called the \emph{initial state}; 
\item an instrument $A$ with outcomes $X_A$ for each decision $A \in \decset$ with outcomes $X_A$.
\end{itemize}
For each sequence of decisions from $\decset$ the induced distribution 
$\PMAton$ over $X_{A_1} \times \dots \times X_{A_n}$ is then defined by:
\[
\input{model-dec-data-smaller.tikz}
\]
We say that $\modelM$ is a \emph{model of} a given set of decision data $P$ over $\decset$ when for every such sequence in the data we have $\PAton = \PMAton$. 
\end{definition}

An instrument model thus specifies an object $S$ for the participants mental state, an initial state $\Mo$, and an instrument for each decision $A \in \decset$, such that these same instruments return the correct probability distributions for all of the different sequences of decisions in the data. We denote such a model of decisions $\decset=\{A_1,\dots,A_n\}$ by $\modelM=(S,\rho,A_1,\dots,A_n)$. 

\begin{example}[Modelling a pair of questions] \label{ex:C-G-modelling}
An instrument model $\modelM$ of the Clinton-Gore question data from Example \ref{ex:Clinton-Gore} in $\catC$ would be given by an object $S$, a normalised state $\rho$ of $S$ and a pair of instruments $A, B$ with outcomes $X_A=X_B=\{y,n\}$ which yield the data distributions $\PAB$ and $\PBA$, i.e.:
\begin{equation} \label{eq:AB-BA-model}
\input{A-B-model.tikz} \qquad \qquad \qquad 
\input{B-A-model.tikz}
\end{equation}
\end{example}

We will be interested in comparing various kinds of both classical and quantum instrument models of decision data. Let us now introduce several important classes of such models. 

\subsection{Classical models} \label{sec:class-insts}

By a \emph{classical} instrument model we mean an instrument model in $\catC=\Class$. Thus such a model $\modelM$ consists of a finite set $S$, an initial probability distribution $\rho$ over $S$, and for each decision $A \in \decset$ an instrument of the following form.

\begin{example}[Classical Instruments]
By a \emph{classical instrument} we mean an instrument $\instAI$ in $\Class$ on a finite set $\cHi$, with finite set of outcomes $\XAI$. 
\[
\input{cinst.tikz}
\]
A classical instrument is thus precisely a probability channel: 
\[
\instAI(\XAI, \cHi \mid \cHi)
\] 
from $\cHi$ to $\XAI, \cHi$.
The instrument thus sends each element $\cst \in \cHi$, thought of as the current state $\cst$ of $\cHi$, to a joint distribution
$\instAI(\XAI, \cHi \mid \cHi = \cst)$ over outcomes $x \in \XAI$ and next states $\cst' \in \cHi$.
\end{example}

When classical models are studied in the quantum-like cognition literature, however, it is typically only the following kind which are considered.

\begin{example}[Bayesian Instruments]
By a \emph{Bayesian instrument} on $\cHi$ we mean a classical instrument $\instAI$ of the form:
\begin{equation} \label{eq:bayes-inst}
\input{bayes-simple.tikz}
\end{equation}
where $\bayesmAI \colon \cHi \to \XAI$ is a probability channel. Equivalently, we have:
\[
\instAI(x, \cst' \mid \cst) := \bayesmAI(x \mid \cst) \delta_{\cst, \cst'}
\] 
Thus, a Bayesian instrument copies the state $\cst$ on $\cHi$ and leaves it undisturbed, simply recording a probabilistic outcome $x \in \XAI$ with probability $\bayesmAI(x \mid \cst)$. We might think of such an instrument as simply a conditional distribution $P(\XAI \mid S) := \bayesmAI(\XAI \mid S)$ of outcomes $\XAI$ given states $\cHi$. 

By a \emph{Bayesian model} we mean a classical instrument model all of whose instruments are Bayesian.  
\end{example}

A simple class of classical instruments which are not Bayesian are the following. Firstly, we view any function $f \colon X \to Y$ between finite sets as a probability channel $X \to Y$ given by $f(y \mid x) := \delta_{f(x),y}$, i.e.~sending each $x \in X$ to (the point distribution at) $f(x) \in Y$. We call a probability channel \emph{deterministic} when it is of this form, for some function $f \colon X \to Y$.

\begin{example}[Deterministic Instruments]
By a \emph{deterministic instrument} $\instAI$ on $\cHi$ we mean a classical instrument which is itself deterministic, or equivalently of the form:
\begin{equation} \label{eq:det-inst}
\input{det-inst.tikz}
\end{equation}
where $\detAIx$ and $\detAIs$ are deterministic.
Thus such an instrument is given simply by a function $\detAIx \colon \cHi \to \XAI$ determining the outcome from the current state, and a function $\detAIs \colon \cHi \to \cHi$ determining the next state from the previous one. By a \emph{deterministic model} we mean a classical instrument model all of whose instruments are deterministic. 

A special case are deterministic Bayesian instruments, which fit both \eqref{eq:bayes-inst} and \eqref{eq:det-inst} with $\detAIx = \bayesmAI$ and $\detAIs = \id{}$. Any such instrument simply assigns a deterministic output $\bayesmAI(\cst)$ to each state $\cst \in \cHi$ and leaves the state undisturbed.

\end{example}

An important class of classical instruments featuring state updates and used widely in modelling are the following. 

\begin{example}[Markov Instruments] \label{ex:Markov-instruments}
By a \emph{Markov instrument} $\instAI$ on $\cHi$ we mean a classical instrument of the form: 
\[
\input{markovonly.tikz}
\]
where $\markovfAI$ and $\markovTAI$ are probability channels. 
Equivalently, we have 
\[
\instAI(x, s' \mid s) :=  \markovfAI(x \mid s') \markovTAI(s' \mid s)
\]
Thus such an instrument includes a \emph{transition} probability channel $\markovTAI \colon \cHi \to \cHi$ which first probabilistically updates the hidden state, and an \emph{observation} probability channel $\markovfAI \colon \cHi \to \XAI$ which determines the outcome.  By a \emph{Markov model} we mean a classical instrument model all of whose instruments are Markov instruments. By a \emph{simple Markov model} we mean one where $S$ comes with a distinguished element $\star$ and we have initial state $\rho = \delta_\star$ given by a point distribution (at $\star$), and such that every decision $A$ has the same outcome set $X$ and has $\markovfA = f$ for a fixed deterministic channel $f \colon S \to X$. 
\end{example}

We see from the above examples that most kinds of classical instrument (general, deterministic, Markov) do in fact allow for interaction with the hidden state $S$, while it is only Bayesian instruments which always leave it undisturbed. Hence there is nothing inherently `non-classical' about decision models which can alter this hidden state. 

\subsection{Quantum models} \label{sec:quant-insts}

Next we consider quantum models of decisions. By a \emph{quantum} instrument model we mean an instrument model in $\catC=\QC$ with $S=\hilbH$ for a finite-dimensional Hilbert space $\hilbH$. Thus, along with $\hilbH$, such a model specifies a density matrix $\rho \in L(\hilbH)$ for the initial state, and for each decision $A \in \decset$ an instrument of the following form.

\color{black}
\begin{example}[Quantum Instruments]
Let $\hilbH$ be a finite-dimensional Hilbert space. 
By a \emph{quantum instrument} on $\hilbH$ we mean an instrument $\instAI$ on $\hilbH$ in $\QC$ with some set of outcomes $\XAI$. 
\[
\input{qinst.tikz} 
\]
Such an instrument is thus given by a family of CP maps: 
\[
(\instAI_x \colon L(\hilbH) \to L(\hilbH))_{x \in \XAI}
\]
whose sum $\sum_{x \in \XAI} \instAI_x$ is trace-preserving (CPTP). This is precisely the usual notion of instrument in quantum information, describing a general operation on a quantum system.\footnote{We can now describe arbitrary instruments in $\QC$. An instrument $\instAI$ on $(\hilbH, S)$ in $\QC$ with outcomes $X$ is given by a channel $\instAI \colon \hilbH \otimes S \to X \otimes S \otimes \hilbH$ in $\QC$. Thus, for each $s \in S$ it specifies an instrument $\instAI^s$ on $\hilbH$ with outcomes $X \times S$.} 
\end{example}

Within the quantum-like cognition literature, quantum models are however more usually taken to be of the following more specific kind. Recall that a \emph{projective measurement} on $\hilbH$ with outcomes $x \in X$ is given by a family $P=(P_x)_{x \in X}$ of positive operators $P_x \in L(\hilbH)$ which form projections, i.e.~each $P_x = P^\dagger_x P_x$, and satisfy $1_\hilbH = \sum_{x \in X} P_x$.

\begin{example}[Projective Instruments] \label{ex:proj-insts}
Let $\hilbH$ be a finite-dimensional Hilbert space. 
A \emph{projective instrument} on $\hilbH$ with outcomes $\XAI$ is a quantum instrument $\hat \projPAI$ on $\hilbH$ for which each $\hat \projPAI_x$ is the pure CP map induced by $\projPAI_x$, for some projective measurement $(\projPAI_x)_{x \in \XAI}$ on $\hilbH$. 
\[
\input{projpic.tikz} 
\]
That is, we have: 
\begin{equation} \label{eq:proj-inst-eq}
\hat \projPAI_x(a) = \projPAI_x a \projPAI_x
\end{equation}
for each $a \in L(\hilbH)$, $x \in \XAI$. 
An important special case of projective instruments are orthonormal \emph{basis measurements}, which take the form  $\projPAI_x = \ket{\psi_x}\bra{\psi_x}$ where $\{\psi_x\}_{x \in \XAI}$ is an orthonormal basis of $\hilbH$. In diagrams we have:
\[
\input{ONBpic.tikz}
\]
where the upward $\psi_x$ triangle denotes the CP map $\Tr(\ket{\psi}\bra{\psi}-) \colon L(\hilbH) \to \mathbb{C}$. 

By a \emph{projective instrument model} we mean a quantum instrument model all of whose instruments are projective instruments. 
\end{example}

The previous example in fact generalises beyond projective measurements to a more general form of quantum measurements. Recall that a \emph{positive operator valued measure (POVM)} on a finite-dimensional Hilbert space $\hilbH$ is given by a collection $(\povm_x)_{x \in X}$ of positive operators $\povm_x \in L(\hilbH)$ with $\sum_{x \in X} \povm_x = 1_\hilbH$.

\begin{example}[POVM Instruments] \label{ex:POVM-insts}
Given any POVM $(\povmAI_x)_{x \in \XAI}$ on $\hilbH$, we define a corresponding \emph{POVM instrument} $\upd{\povmAI}$ on $\hilbH$ with outcomes $\XAI$:
\[
\input{povminst.tikz}
\]
given by:
\begin{equation} \label{eq:update-povm}
\upd{\povmAI}_x(a) := \sqrt{\povmAI_x} a \sqrt{\povmAI_x}
\end{equation}
for $a \in L(\hilbH)$, $x \in \XAI$. Here 
$\sqrt{\povmAI_x}$ denotes the unique positive operator $b$ with $\povmAI_x = b^2$. 
Any projective measurement $P=(P_x)_{x \in X}$ is a special case of a POVM, and since $\sqrt{P_x} = P_x$, we indeed have $\hat P = \upd{P}$ as instruments. 
\end{example}

Note that while every quantum instrument defines a POVM (via the trace) and each POVM defines a canonical instrument as above, the correspondence is not one-to-one and quantum instruments can be more general than POVM instruments; we return to this in Section \ref{sec:meas-models}.

\section{Cognitive Effects} \label{sec:cog-effects}

In this section we will explore several effects in decision-making which have appeared in the literature on quantum-like modelling, with examples taken from \cite{busemeyer2012quantum}, at times in adapated form from \cite{jacobs2017quantum}. Throughout, $P$ denotes decision data for a pair of decisions $\decset=\{\decA, \decB\}$, and 
$\MABrho$ a candidate model consisting of instruments $A, B$ and initial state $\rho$ on $S$.

\subsection{Order effects} \label{subsec:order-effect}

Suppose we ask participants to answer a pair of questions $\decA, \decB$. If their answers simply amounted to verbalising some pre-determined information, even probabilistically, then the \emph{order} in which we ask the questions should make no difference to the final probabilities. However in real life data, question responses are frequently affected by those asked previously. This is one of the simplest and first cognitive effects used to argue for quantum models \cite{moore2002measuring,
aerts1995applications,khrennikov2004information,conte2009mental}.\footnote{Moore \cite{moore2002measuring} in fact introduces a number of classes of order effect, including those corresponding to what we here later term an `interference effect'.}

\begin{definition}[Order Effect] \label{def:order-effect}
We say that decision data satisfies an 
\emph{order effect} when: 
\[
\input{data-QOE-simple.tikz}
\]
That is, there exist outcomes $a \in X_A, b \in X_B$ such that $\PAaBb \neq \PBbAa$. Similarly, a pair of instruments $A, B$ on $S$ satisfy an \emph{order effect in state $\rho$} when: 
\begin{equation} \label{eq:OE-state}
\input{OE-state.tikz}
\end{equation}
Equivalently, the data $P_\modelM$ induced by the model $\MABrho$ exhibits an order effect. We simply say that instruments $A, B$ \emph{satisfy an order effect} when the same inequality holds but now without $\rho$ specified, so that each diagram has $S$ as an input (as in \eqref{eq:effects-simple}). 
We say that instruments 
$A, B$ on $S$ \emph{commute} when the converse holds even without discarding $S$, that is: 
\[
\input{commute.tikz}
\]
Otherwise we say that $A, B$ are \emph{non-commutative}.
\end{definition}

Clearly, to exhibit any order effects, instruments must be non-commutative. A famous example of decision data satisfying an order effect is the following. 

\begin{dataexample}
The Clinton-Gore poll data (Example \ref{ex:Clinton-Gore}) exhibits an order effect. For example, $\PAyBy = 0.4899 \neq \PByAy = 0.5625$. 
\end{dataexample}

\begin{proposition} \label{prop:OE}  \ 
\begin{enumerate}
\item \label{enum:Bayes-commute}
Any pair of Bayesian instruments commute. 
\item \label{enum:detorder}
Deterministic instruments $\dI, \dJ$ on $\cHi$ exhibit an order effect iff $\dJx \circ \dIs \neq \dJx$ or $\dIx \circ \dJs \neq \dIx$. 
\item \label{enum:noOE-projec}
For projective instruments $\hat P, \hat Q$ the following are equivalent:
\begin{enumerate}
\item \label{enum:pqcom}
$\hat P, \hat Q$ commute; 
\item \label{enum:pqnooe}
$\hat P, \hat Q$ exhibit no order effect in any state $\rho$;
\item \label{enum:pqmeascom}
the projective measurements $P = (P_x)_{x \in X}$, $Q = (Q_y)_{y \in Y}$ commute, i.e.~$P_x Q_y = Q_y P_x \forall x, y$. 
\end{enumerate}
\end{enumerate} 
\end{proposition}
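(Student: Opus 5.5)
Each of the three parts reduces to a short concrete calculation, so I would first unfold all the diagrams into explicit formulas in $\Class$ or $\QC$.

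\emph{Part \ref{enum:Bayes-commute}.} For Bayesian instruments with channels $m_A, m_B$, composing $A$ then $B$ gives
\[
(B \circ A)(x, y, s' \mid s) = m_B(y \mid s)\, m_A(x \mid s)\, \delta_{s,s'} .
\]
This expression is symmetric in the roles of $A$ and $B$, so the two orders agree even before discarding $S$. Diagrammatically this is just commutativity and associativity of copying: both composites equal three copies of $s$, with $m_A$ applied to one and $m_B$ to another.

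\emph{Part \ref{enum:detorder}.} Write $f_A, f_B \colon S \to X$ for the outcome functions and $g_A, g_B \colon S \to S$ for the update functions. Running $A$ then $B$ on a state $s$ deterministically yields the outcome pair $(f_A(s), f_B(g_A(s)))$. Running $B$ then $A$ yields $(f_A(g_B(s)), f_B(s))$ after reordering.

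\begin{itemize}
\item If $f_B \circ g_A = f_B$ and $f_A \circ g_B = f_A$, both orders give $(f_A(s), f_B(s))$ for every $s$. Hence the discarded-state diagrams coincide and there is no order effect.
\item Conversely, suppose $f_B(g_A(s)) \neq f_B(s)$ for some $s$. Take the input $s$ and the outcomes $a = f_A(s)$, $b = f_B(g_A(s))$. Then $P(A{=}a, B{=}b) = 1$ but $P(B{=}b, A{=}a) = 0$, since order $BA$ gives $B$-outcome $f_B(s) \neq b$. The symmetric case is analogous.
\end{itemize}

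Here ``exhibit an order effect'' is read in the stateless sense, with $S$ as input. Equivalently one can use the point state $\delta_s$.

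\emph{Part \ref{enum:noOE-projec}.} I would show (c)$\Rightarrow$(a)$\Rightarrow$(b)$\Rightarrow$(c).

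\begin{itemize}
\item (c)$\Rightarrow$(a): If $P_x Q_y = Q_y P_x$, then $Q_y P_x a P_x Q_y = P_x Q_y a Q_y P_x$ for every $a$. So the two composite CP maps agree for each outcome pair, and by the joint-monicity property \eqref{eq:jointmonic} the instruments commute.
\item (a)$\Rightarrow$(b): This is immediate, by discarding $S$ after applying the commuting instruments to $\rho$.
\item (b)$\Rightarrow$(c): This is the main step. No order effect in any state means
\[
\Tr(Q_y P_x \rho P_x Q_y) = \Tr(P_x Q_y \rho Q_y P_x) \quad \text{for all } \rho .
\]
Using cyclicity of the trace, this gives $\Tr(P_x Q_y P_x \rho) = \Tr(Q_y P_x Q_y \rho)$ for all $\rho$. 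Density matrices span $L(\hilbH)$, so $P_x Q_y P_x = Q_y P_x Q_y$.
\end{itemize}

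It remains to derive commutation from $P Q P = Q P Q$ for projections $P, Q$. Set $T = P Q - Q P$. Expanding, $T^\dagger T$ is a sum of the terms $Q P Q$, $P Q P$, $QPQP$ and $PQPQ$ with appropriate signs. I would then use the identity $PQP = QPQ$: multiplying it on the left by $P$ gives $PQP = PQPQ$, and multiplying on the right by $P$ gives $PQP = QPQP$. Substituting, all four terms reduce to $\pm PQP$ and cancel in pairs, so $T^\dagger T = 0$ and hence $T = 0$.

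An alternative route is to note that $\Tr((PQ - QP)^\dagger (PQ - QP)) = 2\Tr(PQ) - 2\Tr(PQPQ)$, which vanishes using the same identities. This algebraic lemma is the only non-routine point; everything else is unfolding definitions.
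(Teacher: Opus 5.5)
Your proposal is correct. Parts 1 and 2, and the implications (c)$\Rightarrow$(a)$\Rightarrow$(b), follow the same arguments as the paper, written in components rather than diagrams; the paper's proof of part 2 likewise reduces to comparing the two deterministic composites marginal by marginal.

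The genuine difference is in (b)$\Rightarrow$(c). The paper works only with pure states. It takes $\psi$ with $P_x\psi=\psi$, and the absence of an order effect gives $\|Q_y\psi\| = \|Q_yP_x\psi\| = \|P_xQ_y\psi\|$. So $Q_y\psi$ again lies in the range of $P_x$, i.e.\ $Q_y$ leaves $\mathrm{ran}(P_x)$ invariant. This means $Q_yP_x = P_xQ_yP_x$, and taking adjoints gives $P_xQ_y = Q_yP_x$.

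You instead use cyclicity and the fact that density matrices span $L(\mathcal{H})$. This turns (b) into the exact operator identity $P_xQ_yP_x = Q_yP_xQ_y$, which is in fact equivalent to (b). You then finish with a purely algebraic lemma via $T^\dagger T = 0$.

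Your route is fully explicit and self-contained, whereas the paper's last step (``share all eigenstates and hence commute'') is left informal. It also isolates exactly what state-independent absence of order effects means for a pair of projections. The paper's route, on the other hand, only invokes the hypothesis on pure states lying in the ranges of the $P_x$. It therefore establishes a formally stronger implication: the absence of order effects on those particular states already forces commutation.
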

\begin{proof}
\eqref{enum:Bayes-commute}:
Let $\instone, \insttwo$ be Bayesian instruments. Then we have the following.
\[
\input{Bayesian-commute.tikz}
\]
\eqref{enum:detorder}: 
An order effect fails to hold iff: 
\[
\input{detorder2.tikz}
\]
which holds iff both $\dJx \circ \dIs = \dJx$ and $\dIx \circ \dJs = \dIx$. 

\eqref{enum:pqcom} $\implies$ \eqref{enum:pqnooe}: trivial. 

\eqref{enum:pqnooe} $\implies$ \eqref{enum:pqmeascom}: Let $\psi$ be an arbitrary pure state, and suppose that $\hat P, \hat Q$ exhibit no order effect on $\psi$. Then
\[
\|Q_yP_x\psi\|^2 = \Tr(Q_yP_x\ket{\psi}\bra{\psi}P_xQ_y) = \discard{} \circ \hat Q_y \circ \hat P_x \circ \ket{\psi}\bra{\psi} 
= \|P_xQ_y\psi\|^2
\]
In particular whenever $\psi$ is an eigentstate of $P_x$, with $P_x \psi = \psi$ we obtain $\|Q_y \psi \| = \|P_x(Q_y\psi)\|$ and so $Q_y \psi$ is again an eigenstate. It follows that $P_x, Q_y$ share all eigenstates and hence commute as projection. 

\eqref{enum:pqmeascom} $\implies$ \eqref{enum:pqcom}: 
For any state $\rho$ we have $\hat P_x \hat Q_y \circ \rho = P_x Q_y \rho Q_y P_x = Q_y P_x \rho P_x Q_y = \hat Q_y \hat P_x \rho$. 
\end{proof}

Thanks to the first result above, order effects cannot be modelled using straightforward probabilistic logic, i.e.~classical Bayesian instruments. Indeed any Bayesian model amounts to a single joint probability distribution $\PMAB = \PMBA$, given by: 
\begin{equation} \label{eq:Bayesian-induced-joint}
\input{Bayes-dist.tikz}
\end{equation}
For any $a \in X_A, b \in X_B$  we then have $\PMAaBb = \PMBbAa$. 

In contrast, thanks to result \eqref{enum:noOE-projec} above, most projective quantum models will be non-commutative and exhibit order effects, and indeed this served as an initial motivation for quantum models themselves. Note however that, by \eqref{enum:detorder} above, when making use of more general non-Bayesian classical instruments even simple determistic ones can exhibit order effects. 

\begin{example}
Consider the following simple projective quantum model exhibiting an order effect from \cite[Chapter 3]{busemeyer2012quantum}. Set $\hilbH=\mathbb{C}^2$ to be a two-dimensional Hilbert space, and set $A, B$ to be orthonormal bases measurements where 
\[
\ket{Ay} = \frac{1}{\sqrt{2}}(\ket{By} + \ket{Bn})
\qquad
\ket{An} = \frac{1}{\sqrt{2}}(\ket{By} - \ket{Bn})
\]
We set the initial state to be the pure state $\rho = \ket{s}\bra{s}$ where $\ket{s} = 0.8367 \ket{By} + 0.54777 \ket{Bn}$. This model shows an order effect since: 
\begin{align*}
\PAyBn &= | \braket{s \mid Ay}|^2 |\braket{Ay \mid Bn}|^2
 =(1/2)(1/2(0.8367 + 0.5477)^2) = 0.48 
\\ 
\PBnAy &= | \braket{s \mid Bn}|^2 |\braket{Bn \mid Ay}|^2
=(0.5477^2)(1/2) = 0.15 
\end{align*}
\end{example}

\begin{example}[Deterministic Order Effects] \label{ex:simple-det-OE}
Set $S=X_A=X_B=\{y,n\}$ and define deterministic instruments: 
\[
\input{simpledetOE.tikz}
\]
where $\bot$ is the function $y \mapsto n, n \mapsto y$. Then $A, B$ exhibit an order effect in the state $\rho = \delta_y$ given by the point distrubution on $y$. Defining $\MABrho$ we then have $\PMAyBy = 1 \neq 0 = \PMByAy$.
\end{example}

\subsection{Interference effects} \label{subsec:interference-effect}

A yet stronger form of cognitive effect is the following. In cognitive experiments, it is frequently observed that the probabilities of the outcomes of a decision $\decB$ differ in the case where we have first carried out a decision $\decA$, as follows.

\begin{definition}[Interference Effect]
Given decision data $P$, we say that decision $\decA$ \emph{interferes with} decision $\decB$ when:
\[
\input{data-interference.tikz}
\]
That is, $\sum_{a \in X_A} \PAaBb \neq \PBb$ for some $b \in X_B$. Similarly, for instruments $A, B$ on $S$, we say that \emph{$A$ interferes with $B$ in a state $\rho$} when we have:
\begin{equation} \label{eq:interference-rho}
\input{interference-rho.tikz}
\end{equation}
Equivalently, $\decA$ interferenes with $\decB$ in the data $P_\modelM$ for the model $\MABrho$. 
We simply say that $A$ \emph{exhibts an interference effect} on $B$ when the same inequality holds but now without $\rho$ specified, so that each diagram has $S$ as an input (as in \eqref{eq:effects-simple}). 
\end{definition}

This notion is not to be confused with `interference' in the sense of superpositions in quantum theory, and indeed we will see that even classical models can exhibit such effects. When discussing data $P$ exhibiting interference, the quantity 
\[
P_T(Bb) := \sum_{a \in X_A} \PAaBb 
\] 
is often referred to as the \emph{total probability} of $b \in X_B$, as opposed to the probability $\PBb$ for decision $B$ on its own. Then interference holds when that $P_T(Bb) \neq \PBb$ for some $b \in X_B$. Thus, the outcome probabilities for $B$ are altered by the earlier decision $A$, even when ignoring the actual outcome of $A$ by marginalising (i.e. discarding).
Interference effects are stronger than order effects, as follows.

\begin{proposition} \ \label{prop:interference} 
\begin{enumerate}
\item  \label{enum:inttoorder}
For instruments $A, B$, an interference effect (in state $\rho$) implies an order effect (in state $\rho$). 
\item \label{enum:commutenoint}
Commuting instruments exhibit no interference effects. 
\item \label{enum:Bayesnoint}
Bayesian models exhibit no interference effects. 
\item  \label{enum:Detint}
For deterministic instruments $\dI, \dJ$ on $\cHi$, $\dI$ interferes with $\dJ$ iff  $\dJx \circ \dIs \neq \dJx$. 
\end{enumerate}
\end{proposition}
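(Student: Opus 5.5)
We prove the four parts in order, each by a short diagrammatic manipulation using discarding, the channel property of instruments, and the joint monicity of tests $x \in X$ from \eqref{eq:jointmonic}.

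For \eqref{enum:inttoorder} we argue by contraposition. Suppose $A, B$ exhibit no order effect in $\rho$, so that $\PMAaBb = \PMBbAa$ for all $a, b$. Summing over $a \in X_A$ gives the total probability $P_T(Bb) = \sum_a \PMBbAa$. The right-hand side is the diagram in which $B$ is applied first and then the whole instrument $A$ is discarded, i.e.~both its classical output $X_A$ and its $S$ output. Since $A$ is a channel, discarding both of its outputs equals discarding $S$ directly. This leaves $\PMBb$, so there is no interference. The same argument with $S$ left as an open input handles the state-free version.

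For \eqref{enum:commutenoint}, if $A, B$ commute then the composite with $A$ first equals the composite with $B$ first as a morphism $S \to X_A \otimes X_B \otimes S$, up to swapping the outputs. Discarding $X_A$ and the final $S$ and applying the channel property of $A$ again yields $B$ followed by discarding $S$. This is exactly the absence of interference, both in every state $\rho$ and with $S$ as an input.

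For \eqref{enum:Bayesnoint}, note first that \eqref{enum:Bayesnoint} follows from \eqref{enum:commutenoint} together with Proposition \ref{prop:OE}\eqref{enum:Bayes-commute}. One can also see it directly: a Bayesian instrument with its outcome discarded is the identity on $S$, since $\bayesmAI$ is a channel and copying followed by discarding one copy is the identity.

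For \eqref{enum:Detint}, write $A = (\dIx, \dIs)$ and $B = (\dJx, \dJs)$. Discarding $X_A$ in the instrument $A$ leaves the function $\dIs$ on $S$. So the total-probability diagram, with input $S$ and output $X_B$ (after discarding the final $S$), is the deterministic channel $\dJx \circ \dIs$, while $B$ alone gives $\dJx$. The only real subtlety is the equivalence between inequality of these deterministic channels and inequality of the underlying functions. If $\dJx \circ \dIs \neq \dJx$ as functions, they differ at some $s$, and feeding in the point state $\delta_s$ gives distinct point distributions. Conversely, equal functions give equal channels. This is the step requiring the most care, because interference is defined with $S$ as an open input, so we must note that morphisms out of a classical object are determined by their values on points. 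This follows from the copower property of classical objects, or simply from working concretely in $\Class$.
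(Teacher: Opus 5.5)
Your proposal is correct and follows essentially the same route as the paper: part (1) by contraposition, discarding $X_A$ and using that $A$ is a channel; parts (2) and (3) by reduction to (1) together with commutativity of Bayesian instruments; and part (4) by observing that the total-probability morphism is $\dJx \circ \dIs$ while $B$ alone gives $\dJx$. Your extra direct argument for (3) is also valid and slightly stronger. A Bayesian instrument with its outcome discarded is the identity on $S$, so a Bayesian $A$ interferes with no instrument $B$ at all.
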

\begin{proof}
\eqref{enum:inttoorder}: The arguments for the statements with or without $\rho$ are the same. Suppose that $\MABrho$ has no order effect, and consider the equality corresponding to $\PMBA = \PMAB$ up to swaps. Applying $\discard{}$ to $X_A$ yields precisely the failure of an interference effect  \eqref{eq:interference-rho}, using that $A$ is a channel. 

\eqref{enum:commutenoint}: Follows from \eqref{enum:inttoorder}.

\eqref{enum:Bayesnoint}: Follows from \eqref{enum:commutenoint}.

\eqref{enum:Detint}: An interference effect fails to hold iff:
\[
\input{detinterfere.tikz}
\]
\end{proof}

\begin{dataexample}[Categorisation-Decision Data] \label{ex:Cat-Dec-Exp}
Our first example of an interference effect explores the interaction between categorisation and decision-making \cite[1.2.2]{busemeyer2012quantum} \cite{busemeyer2009empirical}. In a study participants were shown pictures of faces and then two forms of decisions could be asked of them. Firstly, a categorisation decision $\catdecC$: were the faces of a `good guy' ($\good$)  or `bad guy' ($\bad$)? Secondly, an action decision $\actdecD$: should they `attack' ($\attack$) or `withdraw' ($\withdraw$)? The experiment includes data $\PB$ for conditions under which the participants merely make the decision $\actdecD$ alone, and data $\PAB$ for when they first categorise $\catdecC$ and then make the decision $\actdecD$.

The experimental data then consists of distributions $\PB$ and $\PAB$ with $X_\catdecC = X_\actdecD=\{y,n\}$. A specific dataset from an experimental trial discussed in \cite[1.2.2]{busemeyer2012quantum} is the following. 
\begin{equation} \label{eq:cat-dec-exp}
\begin{tikzpicture}
	\begin{pgfonlayer}{nodelayer}
		\node [style=cmap] (0) at (-2, 0) {$P(\actdecD)$};
		\node [style=none] (1) at (-2, 1.5) {};
		\node [style=label] (2) at (-2, 2) {$X_\actdecD$};
	\end{pgfonlayer}
	\begin{pgfonlayer}{edgelayer}
		\draw [style=cwire] (1.center) to (0);
	\end{pgfonlayer}
\end{tikzpicture}
 \quad 
\begin{array}{ll}
\actdecD \attack  & \actdecD \withdraw  \\ \cline{1-2}
0.69 & 0.31   
\end{array}
\qquad \qquad \qquad 
\input{PCD.tikz}
\quad 
\begin{array}{lllll}
\catdecC \good \actdecD \attack  & \catdecC \good \actdecD \withdraw  & \catdecC \bad \actdecD \attack  & \catdecC \bad \actdecD n  &  \\ \cline{1-4}
0.07 & 0.1 & 0.52 & 0.31 &  
\end{array}
\end{equation}
Taking marginals, we see that the categorisation $\catdecC$ interferes with the decision $\actdecD$. 
\[
\input{dataoff2.tikz} 
\qquad 
\begin{array}{ll}
\catdecC \actdecD \attack  & \catdecC \actdecD \withdraw   \\ \cline{1-2}
0.59 & 0.41   
\end{array}
\]
\end{dataexample}

While Proposition \ref{prop:interference} rules out Bayesian models of such data, quantum projective models can satisfy interference effects, such as the following.

\begin{example} \label{ex:cat-dec-quantum-simple}
A simple quantum model of the categorisation-decision experiment data \eqref{eq:cat-dec-exp} outlined in \cite[Section 8]{busemeyer2012quantum} is as follows. 
We set $\hilbH = \mathbb{C}^2$. We model $\catdecC, \actdecD$ as orthonormal basis measurements, for the bases $\ket{\goodstate}, \ket{\badstate}$ and $\ket{\attackstate}, \ket{\withdrawstate}$ respectively, and use an initial pure state $\psi = \psibad \ket{\badstate} + \psigood \ket{\goodstate}$. Then the $\actdecD$-alone condition means that we must have: 
\begin{equation} \label{eq:Dalone}
P(\actdecD \attack) = \left|\braket{\attackstate \mid \psi}\right|^2 = \left|\psibad \braket{\attackstate \mid \badstate} + \psigood \braket{\attackstate \mid \goodstate} \right|^2
\end{equation}
For the $\catdecC$ then $\actdecD$ condition we must have $P(\catdecC \bad) =  \left|\psibad \right|^2$ and:
\begin{align*}
P(\actdecD \attack \mid \catdecC  \bad) &= \left|\braket{\attackstate \mid \badstate}\right|^2 \\
P(\actdecD \attack \mid \catdecC  \good) &= \left|\braket{\attackstate \mid \goodstate}\right|^2   
\end{align*}
The other analogous conditions then follow automatically. 

To give such a model, set  $\ket{\attackstate} = \sqrt{P(\actdecD \attack \mid \catdecC \good)} \ket{\goodstate} + \sqrt{P(\actdecD \attack \mid \catdecC \bad} \ket{\badstate}$, $\psibad = \sqrt{P(\catdecC \bad)}$ and $\psigood = e^{i \theta} \sqrt{P(\catdecC \good)}$. 
Then all of the above conditions follow immediately apart from the $\actdecD$-alone condition \eqref{eq:Dalone}, which requires that 
\[
P(\actdecD \attack) = \left|\sqrt{\PAnBy + \sqrt{\PAyBy}e^{i\theta}}\right|^2 
\]
It follows that $\theta$ is given by: 
\[
\cos(\theta) = \frac{P(\actdecD \attack) - \PAnBy - \PAyBy}{2 \sqrt{\PAnBy \PAyBy}}
\]
Hence for this to yield a valid model the RHS above must lie in the range $[-1,1]$. Setting $\theta = 1.31 \pi$ then fits the data of Example \ref{ex:Cat-Dec-Exp}. Note that some experimental data will not fit such a model: in \cite[Section 8]{busemeyer2012quantum} an alternative set (the `wide face' data) is given for which no such $\theta$ is available. 
\end{example}

However, note that by using more general instruments, classical models of interference are available, as follows.

\begin{example}[Classical Interference] \label{ex:class-interference}
A classical Markov model of arbitrary decision data $\{\PAB, \PB\}$ for decisions $A, B$ with outcomes $\{y,n\}$, including the case where the data shows an interference effect of $A$ on $B$, is given as follows. We set $S=\{\star, \sany, \sann\}$ and define the following transition matrices. Bracketed entries $(x)$ indicate that the value of $x$ is arbitrary. Note that the final row in each case is determined by the rest since columns sum to $1$.
\begin{equation} \label{eq:CD-transition-matrices}
\begin{blockarray}{cccccc}
T_A & \star & \sany & \sann \\
\begin{block}{c(ccccc)}
  \star & 0 & 0 & 0 \\ 
  \sany & \PAy & (1) & (0)  \\
  \sann & \PAn & (0) & (1)  \\
\end{block}
\end{blockarray}
\color{black}
\qquad 
\qquad 
\qquad 
\begin{blockarray}{cccccc}
T_B & \star & \sany & \sann \\
\begin{block}{c(ccccc)}
  \star & 0 & 0 & 0 \\ 
  \sany & \PBy & \data(By \mid Ay) & \data(By \mid An)  \\
  \sann & \PBn & \data(Bn \mid Ay) & \data(Bn \mid An)  \\
\end{block}
\end{blockarray}
\end{equation}
For example, for the data from the categorisation-decision experiment we have $P(Ay) = 0.17$, $\PBy = 0.69$, $P(By \mid Ay) = \frac{0.07}{0.17}$ and $P(By \mid An) = \frac{0.52}{0.83}$.
\end{example}

Thanks to Proposition \ref{prop:interference} \eqref{enum:Detint} it is also straightforward to give deterministic classical models exhibiting interference effects.

\subsection{Conjunction fallacy}  \label{subsec:conj-fallacy}

A particular case of interference which deserves its own discussion is the following. Suppose we ask subjects yes-no questions such as whether a statement $A$ holds, statement $B$ holds, or whether both statements $A \wedge B$ hold. In experiments, participants at times judge the conjunction $A \wedge B$ as more likely than $A$ or $B$ individually, which would appear to be a clear `fallacy' in their reasoning. 

To model this, we treat statements $A$ and $B$ as decisions with outcomes $X_A=X_B=\{y,n\}$, and $A \wedge B$ as the participant first deciding that $A$ holds and then that $B$ does. We then characterise the fallacy as follows.

\begin{definition}[Conjunction Fallacy]
We say that data $P$ for yes-no decisions satisfies the 
\emph{conjunction fallacy} when: 
\begin{equation} \label{eq:conj-general}
\input{conj-general.tikz}
\end{equation}
That is, $\PAyBy > \PBy$. 
 Similarly, we say that instruments $A, B$ on $S$ with outcomes $\{y,n\}$ satisfy the conjunction fallacy in a state $\rho$ when we have:
\[
\input{conj-fall-2.tikz}
\]
i.e.~the data $P_\modelM$ for $\MABrho$ satisfies a conjunction fallacy.
\end{definition}
Any data showing a conjunction fallacy requires decision $A$ to interfere with decision $B$. Indeed, otherwise we have:
\[
\input{conj-needs-int.tikz}
\]
By Proposition \ref{prop:interference} \eqref{enum:Bayesnoint} we then immediately have the following. 
 
\begin{corollary}
No Bayesian model can satisfy the conjunction fallacy. 
\end{corollary}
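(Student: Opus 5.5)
The plan is to argue by contraposition, chaining the observation just made with Proposition \ref{prop:interference}. Let $\MABrho$ be a Bayesian model, so that $A$ and $B$ are Bayesian instruments on a finite set $S$ with outcomes $\{y,n\}$. We show that its induced data $P_\modelM$ cannot satisfy $\PMAyBy > \PMBy$.

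\textbf{Step 1: no interference.} By Proposition \ref{prop:interference} \eqref{enum:Bayesnoint}, a Bayesian model exhibits no interference effects. Hence $A$ does not interfere with $B$ in $\rho$, which means
\[
\sum_{a \in \{y,n\}} \PMAaBb = \PMBb
\]
for each $b \in \{y,n\}$. Diagrammatically, this is the statement that discarding the $X_A$ output of $B \circ A$ applied to $\rho$ gives the same distribution on $X_B$ as $B$ applied to $\rho$ alone.

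\textbf{Step 2: the bound.} Take $b = y$ and use that every probability is nonnegative:
\[
\PMAyBy \;\leq\; \PMAyBy + \PMAnBy \;=\; \PMBy .
\]
This is exactly the inequality displayed just before the corollary, specialised to the data $P_\modelM$. It therefore contradicts the strict inequality $\PMAyBy > \PMBy$ that defines the conjunction fallacy.

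\textbf{Step 3: independence of the state.} Since $\rho$ was arbitrary, no Bayesian model satisfies the conjunction fallacy in any initial state. The same argument also rules out any Bayesian model of data exhibiting the fallacy.

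There is no real obstacle in this argument. The only point needing care is that interference is defined by discarding $X_A$, and discarding on a classical wire equals summation over outcomes by axiom \eqref{enum:sum-rule}. This identity is what justifies writing the marginal $\sum_a \PMAaBy$ in Step 2. Alternatively, one can check Step 1 directly: by Proposition \ref{prop:OE} \eqref{enum:Bayes-commute}, $P_\modelM(AB)$ is the single joint distribution \eqref{eq:Bayesian-induced-joint}, and its $X_B$-marginal is $P_\modelM(B)$.
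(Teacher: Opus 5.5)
Your proof is correct and follows the paper's route: the paper notes that if $A$ does not interfere with $B$ then $P(Ay, By) \leq \sum_{a} P(Aa, By) = P(By)$, so a conjunction fallacy forces interference, and it then applies Proposition~\ref{prop:interference}~\eqref{enum:Bayesnoint} (Bayesian models exhibit no interference). Your Steps 1--2 are exactly this argument, and your remarks on the initial state and on discarding as marginalisation are harmless elaborations.
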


Perhaps the most famous example of a conjunction fallacy is the following. 

\begin{dataexample}[Linda experiment] \label{ex:linda-data}
\cite{tversky1983extensional,busemeyer2012quantum}
In a study, a group of participants are read a small piece of text about a person named Linda, and then asked to assess the likelihood of various statements about her, including that: 1) Linda is a bank teller and 2) Linda is a feminist and a bank teller. Participants frequently rate the second statement as more likely, showing a conjunction fallacy. 

We treat whether Linda is a feminist as a decision $\Fem $, and  whether she is a bank teller as a decision $\decB$, each with outcomes $X_\Fem =X_B=\{y,n\}$. 
\begin{center}
$\begin{array}{l|ll}
& By  & \Fem yBy   \\ \cline{1-3}
P & 0.024 &    0.09315
\end{array}$\end{center}
\end{dataexample}

General classical and quantum instrument models can show interference effects, and thus model data featuring conjunction fallacies.

\begin{example}
A simple quantum model $\modelM$ of the Linda experiment data from Example \ref{ex:linda-data} showing a conjunction fallacy is offered in \cite{busemeyer2012quantum}; we base our account here on the succinct formulation by Jacobs in \cite{jacobs2017quantum}.\footnote{Note that this should really require a distinction in the data between `bank teller and feminist' vs `feminist and bank teller'; Jacobs raises issues around this choice of ordering in \cite{jacobs2017quantum}. Nonetheless the experiment is modelled sequentially as `feminist' then `bank teller' in this manner in, for example, \cite{busemeyer2012quantum}.} For the quantum model we use $\hilbH=\mathbb{C}^2$. We set $\Fem$ to be a measurement for the orthonormal basis $\ket{\Fem y}, \ket{\Fem n}$, and $B$ a measurement for the orthonormal basis $\ket{By}, \ket{Bn}$, and use the pure initial state $\rho = \ket{\psi} \bra{\psi}$, for the following vectors. 
\[
\ket{\psi} = 0.987 \ket{\Fem y} -0.1564 \ket{\Fem n}
\qquad
\ket{By} = \cos\left(2\pi/5\right) \ket{\Fem y} + \sin(2\pi / 5) \ket{\Fem n}
\]
Then indeed this fits the Linda data since:
\begin{align*}
\PMBy
&= 
\left|\braket{By \mid \psi}\right|^2 = \left|0.987 \cos(2\pi/5) -0.1564\sin(2\pi/5)\right|^2 = 0.024
\\ 
\PAyBy
&= \  \left|\braket{\Fem y \mid \psi}\right|^2 \left|\braket{By \mid \Fem y}\right|^2 = 0.987^2 \left|\cos(2\pi/5)\right|^2  = 0.09315
\end{align*}
\end{example}

\begin{example}[Classical Conjunction Fallacy]
Fitting the data $\PBy$ and $\PAyBy$ for a conjunction fallacy such as Example \ref{ex:linda-data} offers little constraint on a model. For example we can give a trivial form of classical simple Markov model on $S=\{\sany,\sann\}$ with $\rho=\delta_n$ via: 
\[
\begin{blockarray}{cccccc}
T_A &  \sany & \sann \\
\begin{block}{c(ccccc)}
  \sany & 1 & \PAyBy  \\
  \sann & 0 & 1-\PAyBy \\
\end{block}
\end{blockarray}
\qquad \qquad 
\begin{blockarray}{cccccc}
T_B &  \sany & \sann \\
\begin{block}{c(ccccc)}
  \sany & 1 & \PBy  \\
   \sann & 0 & \PBn \\
\end{block}
\end{blockarray}
\]
In the case where we are given further data in the form of entire distributions $\PB$ and $\PAB$ we can fit the data with a Markov model just as for any interference effect, as in Example \ref{ex:class-interference}, or more generally still, apply the recipes for classical Markov models from Theorem \ref{Thm:alt-two}. 
\end{example}

\subsection{QQ-Equality} \label{sec:QQ}

Our next cognitive effect is a numerical condition which in \cite{wang2014context} the authors observe to be satisfied by many examples of polling datasets. Interestingly, the property also automatically holds in all quantum projective models.

\begin{definition}[QQ-Equality] \label{Def:QQ-data}
\cite{wang2014context} 
We say that data $P$ for yes-no decisions $A, B$ with outcomes $X_A=X_B=\{y,n\}$
satisfies the \emph{quantum question (QQ-)equality} when: 
\begin{equation} \label{eq:QQ-def-data}
\PAyBy + \PAnBn = \PByAy + \PBnAn
\end{equation}
Equivalently, the following equality holds: 
\[
\input{data-QQish.tikz}
\]
where for any classical object $X$, we define a `cap' process  $\tinycap \colon X \times X \to I$ by:
\begin{equation} \label{eq:cap-def}
\input{capdef2.tikz} \ \ := \ \ 
\begin{cases}
1 & x=z \\ 
0 & \text{otherwise}
\end{cases}
\end{equation}
Similarly, we 
say that instruments $A, B$ on $S$ with outcomes $X_A=X_B=\{y,n\}$ satisfy the \emph{QQ-equality} in a state $\rho$ when: 
\begin{align*}
\input{QQone.tikz}  = \ \ \ \  \input{QQtwo.tikz} 
\end{align*} 
Equivalently, the data $P_\modelM$ induced by the model $\MABrho$ satisfies the QQ-equality. More strongly, we say that $A, B$ satisfy the \emph{state-independent QQ-equality} when the same equality holds without $\rho$ specified, so that $S$ is now an input (as in \eqref{eq:effects-simple}). 
\end{definition}

\begin{remark}
The QQ-equality was originally defined in the following equivalent form, given by subtracting each side of \eqref{eq:QQ-def-data} from one: 
\[
\PAyBn + \PAnBy = \PByAn + \PBnAy
\]
However here we prefer the graphically simpler form \eqref{eq:QQ-def-data}. 
\end{remark}

\begin{proposition} \label{prop:QQ} \ 
\begin{enumerate}
\item \label{enum:proj-QQ}
Any two-outcome projective instruments $\ppinst, \pqinst$ satisfy the QQ-equality. 
\item \label{enum:commute-QQ}
Any commuting two-outcome instruments satisfy the QQ-equality. 
\item \label{enum:det-QQ}
Deterministic instruments $\dI, \dJ$ on $\cHi$ satisfy the QQ-equality iff for all $s \in \cHi$ we have that $\dIx(s) = \dJx(\dIs(s)) \iff \dJx(s) = \dIx(\dJs(s))$.
\end{enumerate}
\end{proposition}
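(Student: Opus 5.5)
For \eqref{enum:proj-QQ} I would compute both sides of the QQ-equality directly in $\QC$ and reduce them to an operator identity. This gives the state-independent version, which then holds in every state $\rho$. Write the projective measurements as $P=(P_y,P_n)$ and $Q=(Q_y,Q_n)$, with $P_n = 1-P_y$ and $Q_n = 1-Q_y$. By \eqref{eq:proj-inst-eq} and cyclicity of the trace, the left-hand side evaluated on a state $\rho$ is
\[
\Tr(Q_yP_y\rho P_yQ_y) + \Tr(Q_nP_n\rho P_nQ_n) = \Tr\big((P_yQ_yP_y + P_nQ_nP_n)\rho\big).
\]
Symmetrically, the right-hand side is $\Tr\big((Q_yP_yQ_y+Q_nP_nQ_n)\rho\big)$.

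It therefore suffices to show that $P_yQ_yP_y + P_nQ_nP_n$ is symmetric in $P_y$ and $Q_y$. Write $p=P_y$, $q=Q_y$ and expand, using $p^2=p$:
\[
(1-p)(1-q)(1-p) = 1-p-q+pq+qp-pqp .
\]
Hence $pqp + (1-p)(1-q)(1-p) = 1-p-q+pq+qp$, which is manifestly invariant under exchanging $p$ and $q$. This proves \eqref{enum:proj-QQ} for every $\rho$. The only real step is this expansion; the sign bookkeeping is where I would be careful.

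For \eqref{enum:commute-QQ} I would argue diagrammatically. If $A$ and $B$ commute, the composites $B\circ A$ and $A\circ B$ agree as processes $S \to X_A\otimes X_B\otimes S$ up to the swap on the classical outputs. Composing both with $\discard{S}$ and the cap \eqref{eq:cap-def} gives the two sides of the QQ-equality. The only remaining fact needed is that the cap absorbs the swap, i.e.\ $\tinycap\circ\tinyswap=\tinycap$. This follows directly from \eqref{eq:cap-def}, since the cap's value $\delta_{x,z}$ is symmetric, together with \eqref{eq:jointmonic}, which lets us check the equality on points. This argument gives the state-independent QQ-equality, and hence the equality in every state.

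For \eqref{enum:det-QQ}, with $\dI=(\dIx,\dIs)$ and $\dJ=(\dJx,\dJs)$, I would evaluate both sides on each point state $\delta_s$. Deterministic instruments send point states to point states, and $\discard{}$ of a point state is $1$. So the left-hand side evaluates to the cap applied to the pair $(\dIx(s),\dJx(\dIs(s)))$, which equals $1$ if $\dIx(s) = \dJx(\dIs(s))$ and $0$ otherwise. Likewise the right-hand side is $1$ if $\dJx(s)=\dIx(\dJs(s))$ and $0$ otherwise. A process out of the finite set $\cHi$ is determined by its values on the points $s$, again by \eqref{eq:jointmonic}. Hence the state-independent equality holds iff these two $\{0,1\}$-values coincide for every $s$, which is exactly the stated biconditional. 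I read ``satisfy the QQ-equality'' here in the state-independent sense; the main point to flag is that for a particular $\rho$ only the $\rho$-weighted sums need agree, so the pointwise condition is then sufficient but not necessary.
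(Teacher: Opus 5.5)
Your proposal is correct and follows the paper's proof in all three parts: the trace computation reducing (1) to the symmetric operator $1-p-q+pq+qp$ via cyclicity and $P^\bot = 1-P$, the immediate argument from commutation for (2), and the pointwise $\{0,1\}$-valued comparison on points for (3), read in the state-independent sense. One small slip is the citation. The fact that a morphism \emph{out of} a classical object is determined by its values on points comes from the uniqueness clause in the definition of classical objects (and is trivial in $\Class$). It does not come from \eqref{eq:jointmonic}, which concerns testing outputs with the effects $x$.
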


\begin{proof}
\eqref{enum:proj-QQ}:
For any state $\rho$ we have: 
\[
\input{PQa-proof.tikz}
\]
Here we use cyclicity of the trace in the second step, and in the last step that $\pp^\bot = 1-\pp$ and $\pq^\bot = 1-\pq$. Since this expression is symmetric in $\pp, \pq$ the result follows. 

\eqref{enum:commute-QQ}: Immediate from the definition. 

\eqref{enum:det-QQ}: 
The QQ-equality holds iff:
\[
\input{detQQ.tikz}
\]
The processses on each side pick out (i.e.~map those elements to 1 and the rest to 0) the subset of $s \in \cHi$ for which $\dIx(s) = \dJx(\dIs(s))$ and $\dJx(s) = \dIx(\dJs(s))$, respectively. 
\end{proof}

Since it is satisfied by all quantum projections, the QQ-equality was originally proposed in \cite{wang2014context} as an indicator of the `quantum' nature of question-answer data. 
However, despite the title of that work, on its own the QQ-equality should not be mistaken for any indication of quantum-ness. Indeed it is precisely a partial \emph{absence} of an order effect, stating that the quantities \eqref{eq:QQ-def-data} are unaffected by swapping the order of $A, B$. 
Thus it holds for any commuting instruments, and in particular any classical Bayesian instruments. 

Nonetheless, when combined with order effects the QQ-equality provides a non-trivial constraint, forcing a situation in-between commutation and full non-commutation which is suggestive of projective models. For general classical (or quantum) instruments, the QQ-equality may or may not hold.

\begin{example}
The pair of deterministic classical instruments $A, B$ on $S=X_A=X_B=\{y,n\}$ from Example \ref{ex:simple-det-OE} fail to satisfy the QQ-equality, since $\dIx(y) = \dJx(\dIs(y)) = y = \dJx(y)$ but $\dIx(\dJs(y)) = n$. 
\end{example}

\subsection{Response replicability effect} \label{subsec:RRE}

The final effect of this section can also be seen as a requirement that a model behaves in a 
 `more classical' fashion, but unlike the previous effect has posed a \emph{challenge} for quantum projective models. 

 In many question-answer scenarios, such as those with short timescales or more `objective' forms of questions, it is natural to expect that participants' responses to questions are consistent, in that if we ask a question sequence such as $\decA \decA$ or $\decA \decB \decA$ we receive the same answer to both instances of question $\decA$. We capture this as follows.

\begin{definition}[RRE]
We say that decision data $P$
satisfies the \emph{response replicability effect (RRE)} when the following hold:
 \[
\input{AArepsimple.tikz}  \qquad \qquad \qquad \input{ABrepsimple.tikz}
 \]
 or equivalently: 
 \begin{align*}
\PAaAad &= \delta_{a,a'} \PAa
&
\PAaBbAad &= \delta_{a,a'} \PAaBb
 \end{align*}
Furthermore we require the analogous conditions swapping $\decA$ and $\decB$. 

There are several related conditions at the level of instruments. We say that an instrument $A$ on $S$ is \emph{repeatable in state $\rho$} when we have: 
\[
\input{inst-RRE-1-state.tikz}
\]
We simply say that $A$ is \emph{repeatable} when the same equality holds without $\rho$ specified, now with $S$ as an input to the diagrams.  
More strongly still, we say that $A$ is \emph{idempotent} when: 
\begin{equation} \label{eq:idempotent}
\input{inst-idem-2.tikz}
\end{equation}
We say that a pair of instruments $A, B$ on $S$ \emph{satisfy RRE in state $\rho$} when they are each repeatable on $\rho$ and we have: 
\begin{equation} \label{eq:RRE-state-dep}
\input{inst-RRE-2-state.tikz}
\end{equation}
along with the same condition swapping $A$ and $B$. Equivalently, the data $P_\modelM$ induced by the model $\MABrho$ satisfies RRE. Finally, we simply say that instruments $A$ and $B$ satisfy \emph{(state-independent) RRE} when they are each repeatable and satisfy diagram \eqref{eq:RRE-state-dep} without $\rho$ specified, now with $S$ as an input to each side. 
\end{definition}

\begin{proposition} \label{prop:RRE} \ \ 
\begin{enumerate}
\item  \label{enum:idem-com-RRE}
Any instruments $A, B$ which are idempotent and commute satisfy RRE.
\item \label{enum:det-bayes-RRE}
Any deterministic Bayesian instruments $A, B$ are idempotent and satisfy RRE. 
\item \label{enum:detRRE} 
Deterministic instruments $\dI, \dJ$ on $\cHi$ satisfy idempotence iff $\dIx \circ \dIs = \dIx$ and $\dIs \circ \dIs = \dIs$ (and similarly for $\dJ$) and the $ABA$ and $BAB$ conditions of state-independent RRE iff 
$\dIx \circ \dJs \circ \dIs = \dIx \circ \dIs$ and $\dJx \circ \dIs \circ \dJs = \dJx \circ \dJs$. 
\item \label{enum:proj-q-idem}
Any projective quantum instrument $\hat P$ is idempotent. 
\item \label{enum:proj-q-rre-commute} \cite{ozawa2023nondistributivity}
Projective quantum instruments $\hat P,  \hat Q$ satisfy RRE iff they commute. 
\item 
\label{enum:cor-meas-POVM}
The instrument $\hat \measm$ for a POVM $\measm$ is repeatable iff $\measm$ is a projective measurement. 

\end{enumerate}
\end{proposition}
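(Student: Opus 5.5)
I would prove the six parts separately, mostly by direct diagrammatic or matrix calculation, with part \eqref{enum:proj-q-rre-commute} quoted from \cite{ozawa2023nondistributivity} but sketched for completeness.

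For \eqref{enum:idem-com-RRE}, idempotence gives repeatability directly: discarding the final $S$ output in \eqref{eq:idempotent} shows that two applications of $A$ give the same data as one application followed by copying the outcome. For the $ABA$ condition, I would use commutation to move the second $A$ past $B$, making it adjacent to the first $A$. Idempotence then merges the two $A$'s into a single copied outcome. Commuting $B$ back into place gives the right-hand side of \eqref{eq:RRE-state-dep}, and the $BAB$ condition is symmetric.

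For \eqref{enum:det-bayes-RRE}, a deterministic Bayesian instrument has the form $s \mapsto (f(s), s)$. Applying it twice gives $(f(s), f(s), s)$, which is exactly one application followed by copying the outcome, so it is idempotent. Bayesian instruments commute by Proposition~\ref{prop:OE}\eqref{enum:Bayes-commute}, so \eqref{enum:idem-com-RRE} gives RRE.

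For \eqref{enum:detRRE}, I would compute compositions of functions. The map $AA$ sends $s$ to $(A_x(s), A_x(A_s(s)), A_s(A_s(s)))$, while ``$A$ then copy'' sends $s$ to $(A_x(s), A_x(s), A_s(s))$. Comparing components, these agree iff $A_x \circ A_s = A_x$ and $A_s \circ A_s = A_s$. Since deterministic instruments are determined by their action on points, this is the idempotence criterion. For state-independent $ABA$, I would evaluate on each point $s$ after discarding $S$. Equality of the two sides then says the third output $A_x(B_s(A_s(s)))$ equals $A_x(s)$ for every $s$. I would need to check whether the stated equation uses $A_x(s)$ or $A_x \circ A_s$. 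Under repeatability, which gives $A_x \circ A_s = A_x$, the two forms coincide, so I would argue in that setting.

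For \eqref{enum:proj-q-idem}, I would use $P_xP_{x'} = \delta_{x,x'}P_x$. This gives $\hat P_{x'}\hat P_x = \delta_{x,x'}\hat P_x$, which is exactly the componentwise form of \eqref{eq:idempotent}.

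For \eqref{enum:proj-q-rre-commute}, the backward direction follows from \eqref{enum:proj-q-idem}, Proposition~\ref{prop:OE}, and \eqref{enum:idem-com-RRE}. For the forward direction, I would take an arbitrary pure state $\psi$. The $ABA$ condition with $a \neq a'$ gives $\|P_{a'}Q_bP_a\psi\|^2 = 0$, so $P_{a'}Q_bP_a = 0$ for $a \neq a'$. Summing over $a'$, and using $\sum_{a'} P_{a'} = 1$, gives $Q_bP_a = P_aQ_bP_a$. Taking adjoints gives $P_aQ_b = P_aQ_bP_a$, so $P_aQ_b = Q_bP_a$. Proposition~\ref{prop:OE}\eqref{enum:noOE-projec} then shows the instruments commute.

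For \eqref{enum:cor-meas-POVM}, the backward direction is \eqref{enum:proj-q-idem}. For the forward direction, repeatability with $x \neq x'$ gives $\Tr\big(M_{x'}\sqrt{M_x}\,\rho\,\sqrt{M_x}\big) = 0$ for all $\rho$. Hence $\sqrt{M_x}M_{x'}\sqrt{M_x} = 0$, and therefore $M_{x'}^{1/2}M_x^{1/2} = 0$. Multiplying $\sum_{x'} M_{x'} = 1$ by $\sqrt{M_x}$ on both sides then yields $M_x^2 = M_x$, so each $M_x$ is a projection.

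I expect the main obstacle to be the forward direction of \eqref{enum:proj-q-rre-commute}. It is the only step requiring a genuine operator argument, and the sum-then-adjoint trick above is where I would focus.
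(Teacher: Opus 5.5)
Your proposal is correct. Parts (1)--(5) follow the paper's proof essentially step for step. In (5) you use the off-diagonal cases $a'\neq a$ directly to get $P_{a'}Q_bP_a=0$. The paper instead uses the diagonal case together with Pythagoras, $\|QP\psi\|^2=\|PQP\psi\|^2+\|P^\bot QP\psi\|^2$, to get $P^\bot QP=0$. These are equivalent, and both reach $Q_bP_a=P_aQ_bP_a$ and then use the same adjoint trick.

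On your hesitation in (3): the statement does have $A_x\circ A_s$ on the right. The $ABA$ condition evaluated pointwise gives $A_x\circ B_s\circ A_s=A_x$. The two agree exactly because repeatability ($A_x\circ A_s=A_x$) is part of the definition of state-independent RRE, so your reading is the right one.

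The real difference is in (6). The paper reuses the vector-level argument from the proof of Theorem~\ref{thm:RRE-no-order}. For each unit vector $\psi$ it derives $a_x\psi=a_x^3\psi$ with $a_x=\sqrt{M_x}$, and then needs Lemma~\ref{lem:helpful} to conclude $a_x^2\psi=a_x\psi$. You instead use state-independence at the operator level. The off-diagonal repeatability condition for all $\rho$ gives $\sqrt{M_x}\,M_{x'}\sqrt{M_x}=0$ for $x'\neq x$. Sandwiching $\sum_{x'}M_{x'}=1$ between copies of $\sqrt{M_x}$ then gives $M_x^2=M_x$ at once; your intermediate step $M_{x'}^{1/2}M_x^{1/2}=0$ is not even needed.

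Your route is shorter and avoids the spectral lemma. The paper's route has the merit of sharing machinery with the state-dependent Theorem~\ref{thm:RRE-no-order}. There the conditions hold only for a fixed $\rho$, so one cannot pass to operator identities and must work vector by vector.
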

\begin{proof}
\eqref{enum:idem-com-RRE}: 
Straightforward by graphical reasoning. 

\eqref{enum:det-bayes-RRE}:
Idempotence is a special case of \eqref{enum:detRRE} below since any Bayesian instrument $A$  has $\dIs = \id{}$. 
Since all Bayesian instruments commute by Proposition \ref{prop:OE} \eqref{enum:Bayes-commute} they then satisfy RRE by \eqref{enum:idem-com-RRE}.

\eqref{enum:detRRE}:
Idempotence is equivalent to the central equation and hence that the LHS and RHS below are equal. 
\[
\input{detidem.tikz}
\]
In the first step and last step we used that $\dIs$ and $\dIx$ are deterministic. 
Since both sides are deterministic, equality holds iff the marginals (i.e. morphisms after each copy output) on each side are equal, which yields the conditions.  
The RRE condition may be verified by similar diagrammatic reasoning. 

\eqref{enum:proj-q-idem}: 
Applying an input state $\rho$, for given outcomes $x \in X, y \in Y$ we have that $\hat P_y \circ \hat P_x \circ \rho = P_y P_x \rho P_x P_y = \delta_{x,y} P_x \rho P_x$, as required. 

\eqref{enum:proj-q-rre-commute}: 
Suppose that RRE holds. Then in particular for every input state $\rho$ and for all $x, y$ we have:
\[
\discard{} \circ \hat P_x \circ \hat Q_y \circ \hat P_x \circ \rho =  \discard{} \circ  \hat Q_y \circ \hat P_x \circ \rho 
\]
Setting $\rho = \ket{\psi} \bra{\psi}$ to be a pure state and writing 
$\pp := \pp_x$, $\pq := \pq_y$, this is equivalent to: 
\[
\| \pp \pq \pp \psi \|^2 
=
\Tr(\pp \pq \pp \ket{\psi} \bra{\psi} \pp \pq \pp)
=
\Tr(\pq \pp \ket{\psi} \bra{\psi} \pp \pq) = \| \pq \pp \psi \|^2
\]
But since: 
\[
\| \pq \pp \psi \|^2 = \| \pp \pq \pp \psi \|^2  + \| \pp^\bot \pq \pp \psi \|^2 
\]
we obtain $\pp^\bot \pq \pp \psi = 0$. Since $\psi$ was arbitrary $\pp^\bot \pq \pp = 0$ and hence $\pq \pp = \pp \pq \pp$. But then $QP = PQP = (PQP)^\dagger = (QP)^\dagger = PQ$ using that $P=P^\dagger$ and $Q=Q^\dagger$. Hence $P_x$ and $Q_y$ commute for all $x,y$. Then by Proposition \ref{prop:OE} \eqref{enum:noOE-projec} $\hat P$ and $\hat Q$ commute as instruments. Conversely, if $\hat P$ and $\hat Q$ commute as instruments, they satisfy RRE by \eqref{enum:proj-q-idem} and \eqref{enum:idem-com-RRE}.
\eqref{enum:cor-meas-POVM}: Appendix \ref{app:proofs}.

\end{proof}

While natural for question-answer type data, RRE raises 
difficulties for quantum models which also wish to exhibit order effects. 

Recall that any quantum measurement given by a POVM or projection canonically defines an instrument as in Example \ref{ex:POVM-insts} (but not every quantum instrument is of this form). The following result was essentially established in \cite[Prop.~4]{khrennikov2014quantum} as a challenge to this sub-class of quantum instrument models.

\begin{theorem} \label{thm:RRE-no-order}
Quantum POVM instruments 
satisfy RRE in a state $\rho$ iff 
they are idempotent on $\rho$ and commute on $\rho$. 
\end{theorem}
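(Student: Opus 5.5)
The proof rests on one observation. For POVM instruments, any repeatability-type condition holding on a state forces the relevant effects $\sqrt{\measm_x}$ to act like projections on the support of that state. Once that is established, the argument of Proposition \ref{prop:RRE} \eqref{enum:proj-q-rre-commute} can be run on that support.

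\textbf{Easy direction.} Suppose $A = \upd{\measm}$ and $B = \upd{\mathsf{N}}$ are idempotent and commute on $\rho$. We derive RRE in $\rho$ diagrammatically, as in Proposition \ref{prop:RRE} \eqref{enum:idem-com-RRE}, but with $\rho$ plugged in throughout. Repeatability on $\rho$ comes from idempotence on $\rho$ by discarding. For the $ABA$ condition, we use commutation on $\rho$ to rewrite $B$ after $A$ as $A$ after $B$, and then apply idempotence of $A$. The order of rewriting has to be chosen carefully, so that each rewriting step is applied only directly to $\rho$.

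\textbf{Hard direction, Step 1: reduce to projections.} We write $\sqrt{\measm_x}\rho\sqrt{\measm_x}$ in the form $(\sqrt{\measm_x}\rho^{1/2})(\sqrt{\measm_x}\rho^{1/2})^\dagger$. Repeatability of $A$ on $\rho$ then says that, for $x' \neq x$, the trace $\Tr(\measm_{x'}\sqrt{\measm_x}\rho\sqrt{\measm_x})$ vanishes. Hence $\sqrt{\measm_{x'}}\sqrt{\measm_x}\rho^{1/2} = 0$. Summing over $x' \neq x$ gives $(1-\measm_x)\sqrt{\measm_x}\rho^{1/2} = 0$.

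Decomposing in the spectral basis of $\measm_x$, this forces the following. The range of $\rho^{1/2}$ splits as $\rho^{1/2} = \sum_x \Pi_x \rho^{1/2}$, where $\Pi_x$ is the spectral projection of $\measm_x$ onto eigenvalue $1$. On this support we have $\sqrt{\measm_x}\rho^{1/2} = \measm_x\rho^{1/2} = \Pi_x\rho^{1/2}$. This already yields idempotence of $A$ on $\rho$: we get $\sqrt{\measm_y}\sqrt{\measm_x}\rho\sqrt{\measm_x}\sqrt{\measm_y} = \delta_{x,y}\sqrt{\measm_x}\rho\sqrt{\measm_x}$. The same holds for $B$, with projections $\Theta_y$.

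\textbf{Hard direction, Step 2: the $ABA$ condition.} We apply the same trick to the subnormalised state $\sigma = \sqrt{\mathsf{N}_y}\,\Pi_x\rho\,\Pi_x\sqrt{\mathsf{N}_y}$. The $ABA$ condition says $A$ is repeatable on this state with the earlier outcome $x$, that is, $\Tr(\measm_{x'}\sigma) = \delta_{x,x'}\Tr(\sigma)$. By the argument of Step 1, this gives $\sqrt{\measm_x}\,Z = \Pi_x Z = Z$ for $Z = \sqrt{\mathsf{N}_y}\Pi_x\rho^{1/2}$. Symmetrically, the $BAB$ condition gives $\Theta_y\sqrt{\measm_x}\Theta_y\rho^{1/2} = \sqrt{\measm_x}\Theta_y\rho^{1/2}$.

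\textbf{Main obstacle: commutation on $\rho$.} The remaining goal is commutation on $\rho$, i.e.
\[
\sqrt{\mathsf{N}_y}\sqrt{\measm_x}\,\rho\,\sqrt{\measm_x}\sqrt{\mathsf{N}_y} = \sqrt{\measm_x}\sqrt{\mathsf{N}_y}\,\rho\,\sqrt{\mathsf{N}_y}\sqrt{\measm_x}.
\]
This is the step I expect to be hardest, because $\sqrt{\mathsf{N}_y}$ is only known to behave like $\Theta_y$ on the support of $\rho$, not on the vectors $\Pi_x\rho^{1/2}\xi$.

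My first attempt would mimic the proof of Proposition \ref{prop:RRE} \eqref{enum:proj-q-rre-commute}. Take a vector $\psi$ in the range of $\rho^{1/2}$ and use the norm decomposition
\[
\|Z\psi'\|^2 = \|\Pi_x Z\psi'\|^2 + \|\Pi_x^\bot Z\psi'\|^2,
\]
combined with the two displayed identities from Step 2. The aim is to obtain $\Pi_x\sqrt{\mathsf{N}_y}\Pi_x\rho^{1/2} = \sqrt{\mathsf{N}_y}\Pi_x\rho^{1/2}$ together with its $B$-analogue. Summing these over $x$ and $y$ using $\sum_x \Pi_x\rho^{1/2} = \rho^{1/2}$, and taking adjoints as in $QP = PQP = (QP)^\dagger$, should then give $\sqrt{\mathsf{N}_y}\sqrt{\measm_x}\rho^{1/2} = \sqrt{\measm_x}\sqrt{\mathsf{N}_y}\rho^{1/2}$. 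This operator identity yields commutation on $\rho$ directly.

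If the adjoint trick fails because the identities only hold after right-multiplication by $\rho^{1/2}$, the fallback is to compare the two sides through their Gram matrices. Both families $\{\sqrt{\mathsf{N}_y}\sqrt{\measm_x}\rho^{1/2}\}$ and $\{\sqrt{\measm_x}\sqrt{\mathsf{N}_y}\rho^{1/2}\}$ are pairwise orthogonal by Steps 1 and 2, and the data equalities force matching norms.
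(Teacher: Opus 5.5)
\textbf{Gap: commutation on $\rho$ is never established.}

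Your Steps 1 and 2 are correct. They are the $\rho^{1/2}$ version of the paper's pure-state argument: where the paper invokes Lemma~\ref{lem:helpful}, you use the eigenvalue-$1$ spectral projections $\Pi_x$. Working with $\rho^{1/2}$ also spares you the paper's reduction to pure states via a convex decomposition.

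The gap is the final step, commutation on $\rho$, which you leave at ``should then give''. Write $a_x := \sqrt{\measm_x}$ and $b_y := \sqrt{\mathsf{N}_y}$. Then each of your proposed routes falls short:
\begin{itemize}
\item The identity you set as your aim, $\Pi_x b_y \Pi_x \rho^{1/2} = b_y \Pi_x \rho^{1/2}$, is literally your Step 2 ($\Pi_x Z = Z$). So the norm decomposition adds nothing.
\item The adjoint trick from Proposition~\ref{prop:RRE}\eqref{enum:proj-q-rre-commute} does not transfer. Your identities hold only after right multiplication by $\rho^{1/2}$, and taking adjoints moves $\rho^{1/2}$ to the left.
\item The Gram-matrix fallback only yields equal traces $\mathrm{Tr}(a_x b_y \rho b_y a_x) = \mathrm{Tr}(b_y a_x \rho a_x b_y)$, i.e.\ absence of an order effect. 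Commutation on $\rho$ is the operator identity $a_x b_y \rho b_y a_x = b_y a_x \rho a_x b_y$, and orthogonality plus equal norms cannot give that.
\end{itemize}
The obstacle you diagnose, that $b_y$ is only controlled on the support of $\rho$, is a red herring. You need no information about $b_y$, nor the projections $\Theta_y$, nor the $BAB$ condition.

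\textbf{The missing step.} Use Step 2 for every first outcome $z$, not just for $x$, and expand $\rho^{1/2} = \sum_z \Pi_z \rho^{1/2}$ inside $a_x b_y \rho^{1/2}$. Since $\measm_x \le 1 - \measm_z$, we have $a_x \Pi_z = \delta_{x,z} \Pi_x$. Hence
\[
a_x b_y \rho^{1/2} = \sum_z a_x b_y \Pi_z \rho^{1/2} = \sum_z a_x \Pi_z b_y \Pi_z \rho^{1/2} = \Pi_x b_y \Pi_x \rho^{1/2} = b_y \Pi_x \rho^{1/2} = b_y a_x \rho^{1/2},
\]
and therefore $a_x b_y \rho b_y a_x = b_y a_x \rho a_x b_y$, which is commutation on $\rho$.

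This is the content of the paper's equation~\eqref{eq:step} together with its final chain
\[
a_x b_y \psi = a_x^2 b_y \psi = a_x^2 b_y a_x \psi = \sum_z a_z^2 b_y a_x \psi = b_y a_x \psi .
\]
That chain uses only repeatability of one instrument and the $ABA$ condition.

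\textbf{Aside on the easy direction.} Your easy direction is the paper's, which simply cites Proposition~\ref{prop:RRE}\eqref{enum:idem-com-RRE}. However, the promised ordering of rewrites applied only to $\rho$ does not exist. After commuting $B \circ A \circ \rho$, you need repeatability of $A$ on $b_y \rho b_y$, not on $\rho$. So if the hypotheses are assumed only on $\rho$ itself, this direction needs more than diagram rewriting.
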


\begin{proof}
Appendix.
\end{proof}

It follows from Theorem \ref{thm:RRE-no-order} that quantum instruments based on projective measurements or POVMs cannot model RRE at the same time as order effects. As considered in \cite{khrennikov2014quantum} this suggests the use of more general (classical or quantum) instruments.

\subsection{Deterministic model of the effects}

In \cite{ozawa2021modeling} a fully deterministic classical model providing a close fit to the Clinton-Gore poll data is presented, which in particular satisfies all of the cognitive effects we have seen. 

\begin{example}\cite{ozawa2021modeling} \label{ex:OK-det-model-CG}
There is a deterministic classical model of the Clinton-Gore data with 0.75\% error, whose instruments $A, B$ are idempotent and satisfy interference (and order) effects, RRE and the QQ-equality. 
\end{example}
\begin{proof}
Set $\cHi = \okA \times \okB \times \okP$ where $\okA=\okB=\ansone = \anstwo = \{y,n\}$ and $\okP=\{0,1,2\}$, and define the deterministic instruments $\dI$ and $\dJ$ by:
\begin{align}
\dIx(a,b,p) &:= a &   \dJx(a,b,p) &:= b \\ 
\dIs(a,b,p) &= (a,f_B(a,b,p),p)) & \dJs(a,b,p) &= (f_A(a,b,p),b,p) 
\end{align}
where: 
\[
f_B(\okAel,\okBel,\okPel) := \begin{cases}
\okBel
 & \okPel = 0 \\ 
\okAel^\bot & \okPel =1 \\ 
\okAel & \okPel=2
\end{cases}
\qquad \qquad \qquad 
f_A(\okAel,\okBel,\okPel) := \begin{cases}
\okAel
 & \okPel = 0 \\ 
\okBel^\bot & \okPel =1 \\ 
\okBel & \okPel=2
\end{cases}
\]
In \cite{ozawa2021modeling} it is shown that for a particular probability distribution $\rho$ over $S$ as the initial state this model fits the Clinton-Gore data with the stated accuracy. 
We now use Prop.~\ref{prop:OE} \eqref{enum:detorder}, Prop.~\ref{prop:QQ} \eqref{enum:det-QQ} and Prop.~\ref{prop:RRE} \eqref{enum:detRRE} to check these instruments have the desired properties. For idempotence, we have $\dIx\dIs(a,b,p) = a \dIx(a,b,p)$. Inspecting each case $p=0,1,2$ one may also check that $\dIs \dIs(a,b,p) = \dIs(a,b,p)$. For interference, note that $\dJx \dIs(a,a,1) = f_B(a,a,1) = a^\bot \neq a = \dJx(a,a,1)$.

For RRE, observe that $\dIx\dJs\dIs(a,b,p) = \dIx\dJs(a,f_B(a,b,p),p) = f_A(a,f_B(a,b,p),p)$. For each case $p=0,1,2$ one may see that this is equal to $a = \dIx\dIs(a,b,p)$, as required. The condition for $BAB$ is verified similarly. For the QQ-equality note that $\dIx(a,b,p) = \dJx\dIs(a,b,p)$ iff $a=f_B(a,b,p)$. For $p=0$ this holds iff $a=b$. For $p=1$ this never holds. For $p=2$ this always holds. Since these conditions are symmetric in $A, B$ the result follows. 
\end{proof}

\begin{remark}
The above model was originally represented in terms of decoherent quantum instruments, as noted in \cite[Eq. (84),(5)]{ozawa2021modeling}. However, this may equivalently be viewed as a deterministic classical model as above. 
\end{remark}

\section{Classical Models of Arbitrary Decision Data}

\label{sec:classical-models}

When making use of general classical instrument models, rather than merely Bayesian models, essentially any set of sequential decision data can in fact be modelled classically, as we will now prove.
The result depends on the data satisfying the following very mild property, which is in fact necessary in order to have any form of instrument model at all. 

\begin{definition}[No-Signalling from the Future]
We say that decision data satisfies \emph{no-signalling from the future} when for every sequence of decisions $(\q_1, \dots, \q_k)$, every extension to a sequence $(\q_1,\dots,\q_k,\q_{k+1},\dots\q_n)$ in the data has the same marginal on its first $k$ outcomes: 
\begin{equation} \label{eq:same-marginal}
\input{marginalBS.tikz}
\end{equation}
That is:
\begin{equation} \label{eq:subseq-dist}
\PAxtok := 
\sum_{\an_l \ k < l \leq n} \PAxton
\end{equation} 
is a well-defined distribution, independent of $\q_{k+1},\dots,\q_n$.
\end{definition}

We can think of the property as characteristic of temporally sequential data. Morally any experimental dataset should satisfy this condition if the experiment has been carried out correctly, since the choice of what decision the participant is asked to make in the future shouldn't affect the probabilities of answers to earlier decisions.\footnote{This assumes that participants are not told in advance what sequence of decisions they will be making, which would arguably make modelling via sequential instruments no longer natural, and this is made formal in Lemma \ref{lem:inst-means-no-BS}.} 
Each distribution \eqref{eq:subseq-dist} specifies the probabilities for \emph{aborting} an experiment after these first $k$ decisions.

\begin{lemma} \label{lem:inst-means-no-BS}
Any data with an instrument model satisfies no-signalling from the future.
\end{lemma}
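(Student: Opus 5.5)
The plan is to compute the marginal in \eqref{eq:same-marginal} directly from the model and watch the later instruments vanish one at a time, using only that each instrument is a channel. Fix an instrument model $\modelM = (S, \rho, \dots)$ of the data and a sequence $(\q_1,\dots,\q_n)$ in the data. By definition of a model, $P(\q_1 \dots \q_n)$ is the state obtained by applying $\rho$ and then the instruments $\q_1,\dots,\q_n$ in sequence. Each instrument feeds its $S$ output into the next and places its classical output $X_{\q_i}$ alongside, and the final $S$ wire is discarded.

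Marginalising over the outcomes $x_{k+1},\dots,x_n$ means applying $\discard{}$ to the wires $X_{\q_{k+1}},\dots,X_{\q_n}$. By Proposition \ref{prop:probPT}, discarding a classical wire is exactly summation over its outcomes. The argument is an induction peeling off the last instrument. After discarding $X_{\q_n}$, the topmost instrument $\q_n$ has both outputs discarded, since its $S$ output was already discarded. Using that $\discard{X \otimes S} = \discard{X} \otimes \discard{S}$ (the monoidal compatibility of discarding in the axioms), and that $\q_n$ is a channel by Definition \ref{def:instrument}, the composite $\discard{} \circ \q_n$ equals $\discard{S}$. Now the $S$ output of $\q_{n-1}$ is simply discarded, and we repeat the step for $\q_{n-1},\dots,\q_{k+1}$.

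After these steps we are left with $\rho$, then $\q_1,\dots,\q_k$ in sequence, with the final $S$ wire discarded. This is precisely $P_{\modelM}(\q_1\dots\q_k)$, the model's distribution for the shorter sequence, and it does not depend on $\q_{k+1},\dots,\q_n$. Hence any two extensions of $(\q_1,\dots,\q_k)$ in the data have the same marginal on the first $k$ outcomes, both equal to this diagram. Note that $(\q_1,\dots,\q_k)$ itself need not be in the data: the well-definedness of \eqref{eq:subseq-dist} only needs the model's expression, which is defined for every sequence.

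There is no real obstacle here. The only point needing care is the bookkeeping that the channel equation applies to the instrument with both of its outputs discarded together. This follows from the interchange law and the compatibility of $\discard{}$ with $\otimes$, which also handles the swaps that place the classical wires side by side, since discarding is natural with respect to swaps.
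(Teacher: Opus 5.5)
Your proposal is correct and is essentially the paper's proof. Since every instrument is a channel, discards applied to both its outcome wire and its state output can be pulled through it, so the later instruments vanish and the marginal reduces to the model's distribution for the first $k$ decisions; your top-down induction simply spells out the paper's one-line ``pull through'' argument.
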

\begin{proof}
Let $\modelM$ be an instrument model with $P_\modelM = P$ for data $P$. Since every instrument is a channel, we can `pull through' discards applied to both the output and outcomes of an instrument, and so \eqref{eq:same-marginal} is equal to $P_\modelM(\q_1,\dots,\q_k)$. 
\end{proof}

Moreover in many cases of interest the condition is in fact vacuously true. 

\begin{examples}
No-signalling from the future holds trivially for any data consisting only of decision sequences $(A, B)$ and $(B, A)$, such as the Clinton-Gore data (Example \ref{ex:Clinton-Gore}). Indeed, here each subsequence $(A)$ or $(B)$ has only a unique extension to one in the data, namely $(A, B)$ or $(B, A)$, respectively. 
For data also containing the sequences $(A, A)$ and $(B,B)$ the condition is equivalent to the distributions $\PAB$ and $\PAA$ having the same first marginal on $A$, and $\PBA$ and $\PBB$ the same first marginal on $B$. 
\end{examples}

We will now give our general results establishing the existence of classical models. Recall the notion of a simple Markov model based on Markov instruments from Example \ref{ex:Markov-instruments}.

\begin{theorem} \label{thm:-simple-M-Model}
Any decision data for sequences of fixed length $n$, satisfying no-signalling from the future, has a simple Markov model with state space: 
\[
\Hi := \coprod^{n-1}_{k=0} \events_k
\]
where $\events_k$ is the set of sequences of the form 
$\q_1 \an_1 \dots \q_k \an_k$
where $\an_j \in \ans_{\q_j}$ for $j=1,\dots,k$, and $\q_1 \dots \q_k$ is a subsequence of a full-length question sequence $\q_1 \dots \q_n$ in the data.
\end{theorem}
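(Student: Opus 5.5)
The plan is to build the model explicitly, generalising the construction of Example \ref{ex:class-interference}: the hidden state records the full history of decisions asked and outcomes given so far. Take $\Hi = \coprod_{k=0}^{n-1} \events_k$, so that $\events_0 = \{\star\}$ holds only the empty history, and set $\rho = \delta_\star$. For the simple Markov structure, all decisions need a common outcome set $X$ and a fixed deterministic observation $f \colon \Hi \to X$. I would take $X := \bigcup_{\q} \ans_\q$ (a disjoint union, if one wants to keep decisions' outcomes apart), and choose the transition $T_\q$ so that the new state ends with the outcome just given. The observation map is then $f(\q_1\an_1\dots\q_k\an_k) := \an_k$, with $f(\star)$ arbitrary. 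A subtlety: a state in $\events_k$ must also be able to record the final outcome $\an_n$. So either the coprod is read as reaching $\events_n$, or the last transition maps into a history in $\events_{n-1}$ whose last letter is $\an_n$ (for instance by dropping the first entry). I would handle this by letting the state after the $n$-th decision be any element of $\events_{n-1}$ ending in $\an_n$, since nothing later is required of it.

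Next, define the transitions. For a history $h = \q_1\an_1\dots\q_k\an_k \in \events_k$ with $k<n$, and a decision $\q$ such that $\q_1\dots\q_k\q$ is a prefix of a sequence in the data, set
\[
T_\q(h\,\q\an \mid h) := \frac{P(\q_1\an_1 \dots \q_k\an_k \q \an)}{P(\q_1\an_1\dots\q_k\an_k)},
\]
using the prefix distributions \eqref{eq:subseq-dist}. These are well defined exactly because the data satisfies no-signalling from the future. When the denominator vanishes, or $\q$ does not extend $h$ to a prefix in the data, send $h$ to an arbitrary fixed state (say $h\q\an_0$, or $\star$). In every case each $T_\q$ is a probability channel: for fixed $h$ the values sum to one, because marginalising $P(\q_1\an_1\dots\q\an)$ over $\an$ gives $P(\q_1\an_1\dots\q_k\an_k)$. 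This step again uses no-signalling from the future, since the length-$(k+1)$ prefix marginal is independent of the later decisions.

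Then verify the model. Compose the Markov instruments $A_{\q_j} = (f \otimes \id{}) \circ \tinycopy \circ T_{\q_j}$ starting from $\delta_\star$ along a data sequence $\q_1\dots\q_n$. By induction on $k$, the joint distribution of the recorded outcomes $x_1,\dots,x_k$ together with the current state is supported on the states $\q_1 x_1\dots\q_k x_k$, with probability $P(\q_1x_1\dots\q_kx_k)$. The inductive step multiplies by the conditional ratio above, and the ratios telescope. Since $f$ reads off the last outcome deterministically, each recorded outcome equals the corresponding entry of the history. Discarding the final state then gives $\PMAton = \PAton$.

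I expect the main obstacle to be the bookkeeping rather than any conceptual step. One issue is the zero-probability histories: the support argument should show these are reached with probability zero, so any arbitrary choice made there is harmless. The other is the mismatch between the index range $k \le n-1$ in the statement and the $n$ transitions needed. I would settle both carefully in the induction, with the final-step convention fixed in advance.
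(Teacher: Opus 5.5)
Your proposal is correct and is essentially the paper's construction: the states record the history so far, $\rho=\delta_\star$, $f$ reads off the last answer, the transitions are ratios of the prefix distributions (well defined by no-signalling from the future), and the verification is a telescoping product. The only differences are bookkeeping. At the last step the paper sends a history in $\events_{n-1}$ to the length-one history $(\q,\an_n)\in\events_1$ rather than dropping the first entry, and you explicitly redirect zero-denominator and non-extending cases to a fixed state so that each $T_\q$ stays stochastic, which is more careful than the paper's convention of setting those entries to $0$.
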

\begin{proof}
Appendix \ref{app:proofs}. 
\end{proof}
Note that by the above result, simple Markov models are also available for data containing sequences of varying length (e.g. $(A)$ and $(A,B)$) by arbitrarily extending the data for each shorter sequence to one of maximal length.

For datasets of a particular form, we can also simplify the state space used in the above result. We say decision data has \emph{alternating decisions} when it contains two decisions $\qone, \qtwo$ and just two sequences, both of length $n$, of the form $(\qone, \qtwo, \qone, \dots)$ and $(\qtwo, \qone, \qtwo, \dots)$. 

\begin{theorem} \label{Thm:alt-two}
Any question-answer data for alternating decisions $\qone, \qtwo$ with outcomes $\A$, with sequence length $n$, has a 
simple Markov model with state space:
\[
\Hi := \coprod^{n-1}_{k=0} \A^k
\]
\end{theorem}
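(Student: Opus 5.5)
The plan is to specialise the construction behind Theorem~\ref{thm:-simple-M-Model} and to observe that, for alternating data, the question labels $\q_1 \dots \q_k$ in the states of $\events_k$ are redundant. Only two sequences are present, $(\qone,\qtwo,\qone,\dots)$ and $(\qtwo,\qone,\qtwo,\dots)$. So once we know the length $k$ of the history and which decision is being asked at step $k+1$, the sequence is determined. If $k$ is even, decision $\qone$ can only occur in the first sequence and $\qtwo$ only in the second. If $k$ is odd, the reverse holds. Hence it suffices to record the answers $a_1 \dots a_k \in \A^k$.

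No-signalling from the future holds vacuously here, since every prefix has a unique extension in the data, as in the examples after Lemma~\ref{lem:inst-means-no-BS}. For each of the two sequences, the prefix distributions $P(a_1,\dots,a_k)$ are therefore well defined marginals.

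The construction is as follows. Take $\Hi = \coprod_{k=0}^{n-1} \A^k$, with distinguished element $\star$ the empty sequence in $\A^0$ and initial state $\rho = \delta_\star$. The fixed deterministic observation channel $f \colon \Hi \to \A$ returns the last entry of a nonempty sequence, and an arbitrary fixed $a_0 \in \A$ on $\star$.

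For a decision $\q \in \{\qone,\qtwo\}$ and a state $s = a_1 \dots a_k$ with $k \le n-2$, let $\sigma(\q,k)$ be the unique data sequence whose $(k+1)$-st decision is $\q$. Define
\[
T_\q(a_1 \dots a_k a \mid a_1 \dots a_k) := P_{\sigma(\q,k)}(a_{k+1}=a \mid a_1,\dots,a_k).
\]
Whenever the conditioning prefix has probability zero, replace this by an arbitrary distribution, say a point distribution. For $k = n-1$, where no longer sequences exist in $\Hi$, send $a_1\dots a_{n-1}$ with the same conditional probabilities to the length-one state $(a)$ instead, or to any state whose last entry is $a$ (assuming $n \ge 2$; the case $n=1$ is handled by taking $\Hi=\A^0\sqcup\A^1$ directly). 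This causes no harm, since only the last entry is observed and no further decisions follow. All $T_\q$ are probability channels, so each decision $\q$ gives a Markov instrument with observation $f$, and the model is simple.

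To verify the model, fix one of the two sequences and show by induction on $k$ that after the first $k$ decisions, the joint probability of recorded outcomes $x_1,\dots,x_k$ and current state $s$ equals $P(a_1,\dots,a_k)$ when $s = a_1\dots a_k$ and $x_j = a_j$ for all $j$, and is zero otherwise (for $k<n$). The inductive step uses two facts: $f$ deterministically reads off the newly appended answer, and $\sigma(\q,k)$ is exactly the sequence under consideration. The chain rule then gives $P(a_1,\dots,a_k)\,P(a_{k+1}\mid a_1..a_k) = P(a_1,\dots,a_{k+1})$, and zero-probability prefixes contribute nothing, so the arbitrary choices there are harmless. The final step with $k=n-1$ is checked the same way, after which $S$ is discarded. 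This yields $P_\modelM = P$ for both sequences.

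I expect the main obstacle to be the bookkeeping justifying that dropping the labels is sound, namely that $\sigma(\q,k)$ is well defined and always matches the sequence actually being run. The boundary issues, the last step and the zero-probability conditionals, are minor by comparison.
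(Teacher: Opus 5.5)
Your proposal is correct and follows the paper's proof essentially exactly. The paper also drops the question labels from the states and reads off the last answer via a fixed deterministic $f$. It defines each transition as the conditional probability of the next answer under the unique alternating sequence whose $(k+1)$-st decision is the one being asked, sends the final ($k=n-1$) step back to a length-one state, and checks the data by telescoping these conditionals.

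Your explicit handling of zero-probability prefixes is a little more careful than the paper's write-up. So is your treatment of the degenerate case $n=1$, where the stated state space is just $\{\star\}$ and is too small for non-deterministic data.
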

\begin{proof}
Appendix \ref{app:proofs}.
\end{proof}

Of most relevance to us is a simple form of the result for data consisting of two yes-no decisions.

\begin{lemma} \label{lem:C-Gdata}
Any decision data $\{\PAB, \PBA\}$ for decisions $A, B$ with outcomes $X_A=X_B=\{y,n\}$ has a simple Markov model with state space:
\begin{equation} \label{eq:three-elements}
\Hi=\{\star,\sany,\sann\}
\end{equation}
and transition channels as follows, where the columns are input and rows output:
\[
\begin{blockarray}{cccccc}
T_A & \star & \sany & \sann \\
\begin{block}{c(ccccc)}
  \sany & \data(Ay) & \data(Ay \mid By) & \data(Ay \mid Bn)  \\
  \sann & \data(An) & \data(An \mid By) & \data(An \mid Bn)  \\
\end{block}
\end{blockarray}
\]
where: 
\[
\data(Ax' \mid Bx) := \frac{\PBxAxd}{\PBx} \qquad \qquad \PBx := \sum_{x'} \PBxAxd
\] 
for $x, x' \in \{y,n\}$, and $T_B$ is defined analogously.\footnote{As a simple Markov model we have initial state $\star$ (i.e. $\rho = \delta_\star$), and $f_A = f_B = f$ with $f(\sany) = y, f(\sann) = n$. We define $\data(Ax \mid Bx')$ arbitrarily if $\data(Bx') = 0$. }
\end{lemma}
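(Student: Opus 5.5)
The plan is to verify the model directly by computing the induced distributions $\PMAB$ and $\PMBA$ and checking that they equal the data. The model is the simple Markov model $\modelM = (\Hi, \delta_\star, A, B)$ with $\Hi = \{\star, \sany, \sann\}$. Here $A$ is the Markov instrument $A(x, s' \mid s) = f(x \mid s')\, T_A(s' \mid s)$, and $B$ is defined likewise. The observation channel is the deterministic $f(\sany) = y$, $f(\sann) = n$; $f(\star)$ may be set arbitrarily, since $\star$ is never reached after a transition.

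First I check that the instruments are valid channels, i.e.\ that each column of $T_A$ and $T_B$ is a probability distribution on $\Hi$. For the $\star$ column this holds because $\PAy + \PAn = 1$, where $\PAx$ denotes the first marginal of $\PAB$. For the $\sany$ and $\sann$ columns it holds because $\data(Ay \mid Bx) + \data(An \mid Bx) = 1$ whenever $\PBx > 0$; when $\PBx = 0$ we choose any distribution, as the footnote permits. Since $T_A$ never outputs $\star$, the row for $\star$ is zero, and the composite maps are well defined.

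Next I compute $\PMAxBxd$ for $x, x' \in \{y,n\}$. Start from $\delta_\star$ and apply $T_A$. Because $f$ is deterministic and injective on $\{\sany, \sann\}$, the outcome $x$ forces the state to be $s_x$ (writing $s_y := \sany$, $s_n := \sann$), and this happens with probability $T_A(s_x \mid \star) = \PAx$. Then apply $T_B$ from $s_x$; the outcome $x'$ forces the state $s_{x'}$, with probability $T_B(s_{x'} \mid s_x) = \data(Bx' \mid Ax) = \PAxBxd / \PAx$. Multiplying gives $\PAx \cdot \PAxBxd / \PAx = \PAxBxd$.

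The only delicate point is the division when $\PAx = 0$. In that case the first factor is already $0$, so the product is $0$. This agrees with the data, because $\PAxBxd \le \PAx = 0$, and so the arbitrary choice of column does not matter. The computation for $\PMBA$ is identical with $A$ and $B$ swapped. Diagrammatically, this is just composing the two Markov instruments and noting that each copy of $f$ pins down the hidden state, so the sum over intermediate states collapses to a single term.

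I expect no real obstacle. The only care needed is the zero-probability case and confirming that discarding the final state $S$ is harmless, which follows because $T_B$ is a channel. Alternatively, the lemma can be read off as the $n = 2$ case of Theorem \ref{Thm:alt-two}, identifying $\coprod_{k=0}^{1} \{y,n\}^k$ with $\{\star, \sany, \sann\}$. The direct check above is self-contained, however, so I would present that.
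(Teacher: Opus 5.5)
Your proof is correct and is essentially the paper's argument. The paper obtains the lemma as the $n=2$ case of Theorem~\ref{Thm:alt-two}, whose proof is exactly the telescoping product of conditional probabilities you compute directly, with the sum over hidden states collapsing because $f$ is deterministic and injective on the two non-initial states; your explicit handling of the zero-denominator columns (reached only with probability zero) and of column-stochasticity fills in details the paper leaves implicit.
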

\begin{proof}
Special case of Theorem \ref{Thm:alt-two}, proven in Appendix \ref{app:proofs}. 
\end{proof}

As an example, we can give such a classical instrument model of the Clinton-Gore polling data. This model will therefore satisfy all of the cognitive effects present in this data, i.e.~interference and order effects.

\begin{example}[Markov model for Clinton-Gore]
The Clinton-Gore opinion poll data from Example \ref{ex:Clinton-Gore} has a simple Markov model with three states $\{\star,\sany,\sann\}$, initial state $\star$, and transitions channels $T_A$ and $T_B$ given by: 
\[
\begin{blockarray}{cccccc}
T_A & \star & \sany & \sann \\
\begin{block}{c(ccccc)}
  \sany & 0.5346 & 0.7386 & 0.1070  \\
  \sann & 0.4654 & 0.2614 & 0.8930  \\
\end{block}
\end{blockarray}
\qquad \qquad 
\begin{blockarray}{cccccc}
T_B & \star & \sany & \sann \\
\begin{block}{c(ccccc)}
  \sany & 0.7616 & 0.9163 & 0.3797  \\
  \sann & 0.2384 & 0.08361 & 0.6203  \\
\end{block}
\end{blockarray}
\]
where the columns show the previous state and the rows the next state. 
 When the state transitions to $\sany$ or $\sann$ the outcome of the corresponding instrument is $y$ or $n$, respectively. 
\end{example}

The above results can also be strengthened such that the instruments satisfy RRE (see Section \ref{subsec:RRE}), as we show in Appendix \ref{sec:general-RRE-Markov-models}.

\section{Measurement Models} \label{sec:meas-models}

Given that any sequential decision data can be given a classical instrument model, why would one feel a need to use quantum (or other non-classical) models? In fact, the quantum-like modelling literature thus far has largely not allowed the use of such arbitrary instruments (classical or otherwise), and restricted only to those of the following form.

\begin{definition}[Measurements]
Let $\cob$ be a finite classical object. A \emph{measurement} $\measm$ on $\obI$ with outcomes $\cob$ is a channel $\measm \colon \obI \to \cob$. 
We say an instrument $I$ \emph{measures} $\measm$ when we have: 
\begin{equation} \label{eq:inst-measm}
\input{instmeasm.tikz}
\end{equation}
\end{definition}

While every instrument measures some measurement $\measm$, in general there are many instruments measuring the same $\measm$. However, in certain process theories there is a canonical choice, allowing us to define decision models more simply in terms of measurements.  

\begin{definition}[Measurement Models]
We say that a process theory $\catC$ has an \emph{update structure} when for each measurement $\measm \colon \obI \to \cob$ there is a chosen instrument $\upd{\measm}$ on $\obI$ which measures $\measm$, which we call the \emph{update} instrument for $\measm$. 
\[
\input{update-mapp.tikz}
\]
A \emph{measurement model} $\modelM$ of decisions $\decset$ in $\catC$ is an instrument model of $\decset$ in $\catC$ where every instrument is given by an updating instrument $A=\upd{m_A}$ for some measurement $\measm_A$. 
\end{definition}

Rather than a collection of instruments $\{A\}_{A \in \decset}$ a measurement model is thus equivalently specified merely by a collection of measurements $\{\measm_A\}_{A \in \decset}$ via $A=\upd{\measm_A}$. Typically measurements will require fewer parameters, making such a model simpler. Update structures are essentially introduced in \cite{jacobs2015new} (there referred to as `instruments') along with some natural structural axioms which we omit here for brevity. 

The process theories $\Class$ and $\QC$ both have update structures, and in fact we have already met the corresponding forms of model: classical measurement models are precisely Bayesian models, while quantum measurement models are those based on POVMs, including projections as a special case.  

\begin{example}[Classical measurement models] \label{ex:c-updates}
In $\Class$, a measurement $\measm \colon S \to X$ on a finite set $S$ with outcomes $X$ is a probability channel $\measm \colon S \to X$. Then $\upd{\measm}$ is precisely the Bayesian instrument \eqref{eq:bayes-inst} given by $\measm$. 
\end{example}

\begin{example}[Quantum measurement models] \label{ex:q-updates}
In $\QC$, a measurement $\povm$ on a quantum system $\hilbH$ is precisely a 
POVM $\povm=(\povm_x)_{x \in \cob}$ on $\hilbH$. The update instrument is the associated instrument $\upd{\povm} = \hat \povm$ as in Example \ref{ex:POVM-insts}. In particular, any projective measurement $P=(P_x)_{x \in X}$ on $\hilbH$ is a measurement $P \colon \hilbH \to X$, and its update instrument $\upd{P}=\hat P$ is the corresponding projective instrument as in Example \ref{ex:proj-insts}. 
\end{example}
For completeness, we note that we can unify classical and quantum measurement models as follows.

\begin{example}[$\QC$ measurement models]
In $\QC$, a measurement $\measm \colon (\hilbH, S) \to X$ on a general object $(\hilbH, S)$ is given by specifying a POVM $\measm_s := (\measm_{x \mid s})_{x \in X}$ on $\hilbH$ for each $s \in S$. The instrument $\upd{\measm}$ is defined by: 
\[\upd{\measm}(x, s' \mid s) := \upd{\measm_s}_x \delta_{s,s'}\]
for $x \in X$ and $s, s' \in S$. On a classical system $S$ (when $\hilbH$ is trivial) this yields precisely classical measurements and Bayesian instruments (Example \ref{ex:c-updates}). On a quantum system $\hilbH$ (when $S$ is trivial) this yields POVMs and quantum updates (Example \ref{ex:q-updates}). 
\end{example}

\paragraph{Advantages of measurement models.} 
When modelling cognitive decisions, should we then make use of arbitrary instruments or restrict ourselves to measurement models only? 
Most of the quantum-like modelling literature so far has implicitly focused on measurement models, usually comparing classical Bayesian models against quantum projective models, while works such as \cite{khrennikov2014quantum,jacobs2017quantum} have pointed out that one can extend quantum models more generally to POVMs. If we do make this restriction, then indeed quantum (projective) models can account for many of the cognitive effects from Section \ref{sec:cog-effects} (and most of our examples take this form) while classical (Bayesian) models cannot, providing the usual arguments for quantum-like approaches. 

Specifying arbitrary instruments can feel overly unconstrained, while restricting to measurements can be seen to provide a more structured way to model decisions. A major empirical benefit is that by encoding each decision more simply as a measurement rather than a full instrument, such models will also typically require fewer parameters, making them simpler to construct and potentially more verifiable. 

Measurements can also be interpreted as providing a kind of \emph{logic} or \emph{probability theory} of decisions. Indeed, much of the literature implicitly equates Bayesian models with \emph{classical logic} or \emph{classical probability theory} (CPT), and quantum projective models with \emph{quantum logic} or \emph{quantum probability} (QPT), comparing the two, see e.g.~\cite{pothos2022quantum}. Formally, the spaces of POVMs and projections on a Hilbert space form an effect algebra and an orthomodular lattice, respectively, under inclusion or `implication', compared to the Boolean algebra of classical (deterministic) measurements. 

\begin{remark}[Sequential measurement Logic]
While we have seen that instruments allow for an easy representation of sequential decisions, measurements can also define such a sequential logic, as follows. Given a measurement model $\upd{a}, \upd{b}$ of decisions $A, B$ in terms of measurements $a = \measm_A, b = \measm_B$, we can model decision $A$ followed by decision $B$ as the single joint measurement:
\[
\input{andthen2.tikz}
\]
This ability to combine a measurement $a$ followed by $b$ with the `and-then' operator $a \& b$ can be seen as the basic operation in a logic of sequential measurements. In \cite{jacobs2017quantum} Jacobs gives an account of various aspects of quantum-like cognition, extending the quantum perspective from projective to POVM models, in terms of such an operator (there defined on effects, i.e.~individual outcomes, rather than entire measurements). 
\end{remark}

\paragraph{Challenges for measurement models.} On the other hand, measurement models come with certain limitations that may force us to consider a more general class of instrument models. 

On the formal side, one issue is that measurement processes are not closed under composition. That is, for non-classical\footnote{Bayesian instruments are indeed closed under sequential composition, by associativity of copy morphisms. Note also that $\&$ is not associative in that in general $(a \& b) \& c \neq a \& (b \& c)$ for measurements $a, b, c$. This stems from the fact that $\upd{a \& b}$ is not given by the composite of instruments $\upd{a}$ and $\upd{b}$, for general POVMs $a, b$. For classical measurements $a, b, c$ however, the operation and resulting classical logic is indeed associative $(a \& b) \& c = a \& (b \& c)$, and commutative $a \& b = b \& a$ as we saw in Proposition \ref{prop:OE} \eqref{enum:Bayes-commute}.} measurements $a, b$, the composite of instruments $\upd{a}$ and $\upd{b}$ is not equal to $\upd{c}$ for any measurement $c$. Hence measurements and updates alone arguably do not yield a `closed' theory of sequential decision making, and to model the processes behind general sequences of decisions we must discuss more general instruments.\footnote{
    An alternative approach might be to focus only on the ultimate measurement given by a sequence of decisions, defined via $\&$.  However even in \cite{jacobs2017quantum} to model further aspects of decision making such as `changes of perspective', one must consider more general channels and hence instruments.}

A more significant challenge however, is that quantum measurement (projective or more general POVM) models cannot model all of the cognitive effects we met in Section \ref{sec:cog-effects}. 

\begin{corollary} \label{cor:upd-no-rep-OE}
Quantum measurement models cannot simultaneously satisfy order effects and RRE. 
\end{corollary}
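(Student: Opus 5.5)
The plan is to derive the statement directly from Theorem \ref{thm:RRE-no-order}, together with the observation in Proposition \ref{prop:interference} that commuting instruments produce no order effects. Let $\modelM=(\hilbH,\rho,\upd{\povm_A},\upd{\povm_B})$ be a quantum measurement model. By Example \ref{ex:q-updates}, its instruments are exactly the POVM instruments $\hat\povm_A,\hat\povm_B$ of Example \ref{ex:POVM-insts}. Assume that the induced data $P_\modelM$ satisfies RRE. By definition, this means the instruments satisfy RRE in the state $\rho$. I will show that $P_\modelM$ then has no order effect.

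First, Theorem \ref{thm:RRE-no-order} gives that $\hat\povm_A,\hat\povm_B$ are idempotent on $\rho$ and commute on $\rho$. I read ``commute on $\rho$'' as the commutation diagram of Definition \ref{def:order-effect} with $\rho$ plugged in and with the output $S$ left undiscarded:
\[
\Bigl(\hat\povm_B\bigr)_b\Bigl(\hat\povm_A\bigr)_a(\rho)=\Bigl(\hat\povm_A\bigr)_a\Bigl(\hat\povm_B\bigr)_b(\rho)
\]
for all $a,b$. Next, I discard $S$, which amounts to taking the trace on both sides. This gives $P_\modelM(Aa\,Bb)=P_\modelM(Bb\,Aa)$ for every $a\in X_A$ and $b\in X_B$. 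That is exactly the failure of \eqref{eq:OE-state}, so the model exhibits no order effect in $\rho$. The same argument applies to any pair of decisions in $\decset$, so no quantum measurement model can have data that satisfies both effects. Projective models are covered as the special case $\hat P=\upd{P}$.

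The main point needing care is that Theorem \ref{thm:RRE-no-order} is only state-dependent. Its notion of ``commute on $\rho$'' has to be strong enough that discarding $S$ yields equality of the two sequential distributions. If the appendix formulates it more weakly, the fallback is to show equality of the traces directly. RRE gives $P(Aa\,Bb\,Aa')=\delta_{a,a'}P(Aa\,Bb)$. No-signalling from the future (Lemma \ref{lem:inst-means-no-BS}) relates $P(Bb\,Aa)$ to the $BAB$ data. Combining these with the operator form $\sqrt{\povm}(\cdot)\sqrt{\povm}$ and Proposition \ref{prop:RRE}\eqref{enum:cor-meas-POVM}, repeatability forces the POVMs to act as projections on the support of $\rho$. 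From there, the cyclicity-of-trace argument used in Proposition \ref{prop:RRE}\eqref{enum:proj-q-rre-commute}, restricted to that support, gives the required commutation.
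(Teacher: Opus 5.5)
Your proof is correct and follows the paper's argument: the paper proves the corollary by simply citing Theorem \ref{thm:RRE-no-order}, and you have made explicit the routine step that commutation on $\rho$, after discarding $S$, gives $P(Aa\,Bb)=P(Bb\,Aa)$ for all outcomes, so no order effect can occur. Your fallback is unnecessary, because the appendix proof of that theorem shows $a_x b_y\psi_j = b_y a_x\psi_j$ directly for each pure component $\psi_j$ of $\rho$, and it records the absence of order effects explicitly along the way.
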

\begin{proof}
This is precisely Theorem \ref{thm:RRE-no-order}. 
\end{proof}

This point is stressed for quantum projective instruments in \cite{khrennikov2014quantum}, as well as more general updating based on POVMs  \cite[Prop 4]{khrennikov2014quantum}, and used to argue for the use of more general instruments. In future it would be interesting to explore whether one can account for all of the cognitive effects either via alternative notions of measurement, providing a more general class than measurement models but more constrained than arbitrary instruments, or via measurements defined in an alternative process theory; we return to this in the discussion in Section \ref{sec:Discussion}.

\section{Joint Decision Models}

\label{sec:Bell-arguments}

If we do allow the use of arbitrary instruments, we have seen that any sequential decision data can be given a classical (not necessarily Bayesian) model. In order to strictly rule out classical models, one must therefore go beyond sequential composites of decisions. A natural place to look is instead in \emph{parallel} composites of decisions into `joint decisions'. Here we typically imagine a participant making a single decision with multiple aspects, represented by several internal mental representation spaces, rather than a combined decision made by distinct agents.\footnote{Note that using such factored spaces itself is not a radical idea: for example the generative models used in Bayesian approaches such as \cite{parr2022active} make use of multiple variables corresponding to distinct spaces.}

Within quantum foundations, the famous result of \emph{Bell's theorem} can in fact strictly rule out classical models based on parallel measurements, potentially allowing for strict arguments against classical models. In this section we will sketch what one would need in order to leverage this result, as a future research direction for quantum-like cognition.

\subsection{Modelling joint decisions}

Let us now introduce the kinds of joint decision data we will consider. 

\begin{definition}[Joint Decision Data]
A set of \emph{joint decision data} $\{P(A \otimes B)\}_{\decinA \in \decsetA, \decinB \in \decsetB}$ is given by:
\begin{itemize}
\item  
    a set of decisions $\decsetA$, each with the same set of outcomes $\Xone$; 
\item 
    a set of decisions $\decsetB$, each with the same set of outcomes $\Xtwo$; 
\item  
    data for a set of decisions denoted $\decset := \{\decinA \jdec \decinB\}_{\decinA \in \decsetA, \decinB \in \decsetB}$ each with outcomes $\Xone \times \Xtwo$. 
\end{itemize}
 We interpret each decision $\decinA \jdec \decinB$ as a joint decision of $\decinA$ and $\decinB$ made simultaneously, with data given by a distribution:
\begin{equation} \label{eq:joint-decision}
\input{PABjoint.tikz}
\end{equation}
over joint outcomes $\Xone \times \Xtwo$. 
We say that the overall data satisfies \emph{no-signalling} when we have: 
\begin{equation} \label{eq:no-signalling}
\input{nosig2.tikz} 
\qquad \qquad \input{nosig1.tikz}
\end{equation}
for all $\decinA, \decinA' \in \decsetA$, $\decinB, \decinB' \in \decsetB$.
\end{definition}

To model such joint decisions, we wish to combine instruments for $A, B$ with respective outcomes $\Xone$, $\Xtwo$ into a single `joint decision' instrument $A \jdec B$ with outcomes $\Xone \times \Xtwo$. Such an operation should satisfy the following, with the second condition ensuring no-signalling \eqref{eq:no-signalling}: 
\begin{equation} \label{eq:joint-dec}
\input{AdecB.tikz}
\qquad \qquad 
\input{nosignal.tikz}
\end{equation}
The simplest solution is to assume each parallel pair $A$ and $B$ commute and then define $A \jdec B$ as their sequential composite. Indeed one may verify that \eqref{eq:joint-dec} will then hold. However this will be too mild to rule out classical models, by the following result. 

\begin{theorem} \label{thm:joint-dec-thm}
Any joint decision data $\{P(A \jdec B)\}_{\decinA \in \decsetA, \decinB \in \decsetB}$ satisfying no-signalling has a classical Markov model where each $A \jdec B$ is the sequential composite of instruments $A, B$, and where $A$ and $B$ commute for all $\decinA \in \decsetA, \decinB \in \decsetB$.
\end{theorem}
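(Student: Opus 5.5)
We aim for an explicit construction in which the hidden state remembers which decisions have been made and what the outcomes were. The central idea is that no-signalling makes each marginal $P(\decinA)$ on $\Xone$ independent of the partner $\decinB$, and each marginal $P(\decinB)$ on $\Xtwo$ independent of $\decinA$. A state can therefore record which $\decinA$ (resp.\ $\decinB$) has been performed and with which outcome. When the second decision is made, it samples its outcome from the conditional distribution of the data given what was recorded.

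Concretely, take
\[
S := \{\star\} \sqcup (\decsetA \times \Xone) \sqcup (\decsetB \times \Xtwo) \sqcup (\decsetA \times \Xone \times \decsetB \times \Xtwo)
\]
(finite when $\decsetA, \decsetB$ are finite, which we assume), with initial state $\rho = \delta_\star$. The instrument for $\decinA \in \decsetA$ acts as follows.
\begin{itemize}
\item On $\star$, it outputs $a$ with probability $P(\decinA)(a)$, the well-defined marginal, and moves to $(\decinA, a)$.
\item On a state $(\decinB, b)$, it outputs $a$ with probability $P(\decinA \jdec \decinB)(a,b)/P(\decinB)(b)$, chosen arbitrarily if the denominator is zero, and moves to $(\decinA, a, \decinB, b)$.
\item On a full record, it acts arbitrarily, e.g.\ it leaves the state unchanged and outputs a fixed outcome.
\end{itemize}
Define $\decinB$ symmetrically. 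Each such instrument is a Markov instrument: the transition $T_\decinA$ is as described, and the observation channel reads off the $\decinA$-outcome stored in the new state. So the model is a classical Markov model.

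Next we check the data. Starting from $\star$, applying $\decinA$ then $\decinB$ gives outcome $(a,b)$ with probability
\[
P(\decinA)(a)\cdot \frac{P(\decinA \jdec \decinB)(a,b)}{P(\decinA)(a)} = P(\decinA \jdec \decinB)(a,b),
\]
with zero-probability cases being trivial. The order $\decinB$ then $\decinA$ gives the same joint distribution and lands in the same final state $(\decinA, a, \decinB, b)$. Hence defining $\decinA \jdec \decinB$ as the sequential composite reproduces the data, and $\decinA, \decinB$ commute in state $\rho$.

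The main obstacle is commutation as instruments, i.e.\ with $S$ as an input rather than only on $\rho$. On $\star$ it holds by the computation above, since both orders give the same joint distribution over outcomes and final states. On single-record states such as $(\decinA', a')$, or on full records, the two orders generally disagree. To fix this, we redefine both instruments on all non-$\star$ states so that they commute trivially. I would try making every instrument act as a Bayesian-like identity on full records, and on a state $(\decinA', a')$ letting a further $\decinA$ output deterministically (say $a'$) without changing the state. The remaining cross cases, e.g.\ $\decinB$ then $\decinA$ from $(\decinA', a')$, can be handled by enlarging the state space. One option is a state that stores independent "pre-sampled" outcomes. A cleaner route is to realise the model as a hidden-variable product: let $S$ carry a sample $(a,b)$ drawn jointly, and let each instrument deterministically read off its outcome while tagging which pair was used.

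If commutation on arbitrary states still fails, I would settle for the weaker reading that commutation is required only on states reachable from $\rho$, and restrict $S$ to these reachable states. That restriction makes the verification above sufficient, since both composites agree on $\star$ and the two orders only ever visit states where the relevant definitions coincide. Finally, \eqref{eq:joint-dec} follows from each instrument being a channel.
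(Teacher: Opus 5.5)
Your state space, and your transitions out of $\star$ and out of a single record of the opposite side, are exactly the paper's. Your computation correctly shows that the data are reproduced and that $A$ and $B$ commute \emph{in the initial state} $\delta_\star$. The gap is that the theorem asks for commutation as instruments, i.e.\ as morphisms with $S$ as input (Definition~\ref{def:order-effect}), and you never establish this.

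Your initial specification leaves the action of $A$ on a record $(A',a')$ undefined. It also has $A$ emit a fixed outcome on full records, which by itself breaks commutation: from $(A,a')$, doing $B$ first and then $A$ yields that fixed value rather than $a'$.

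None of the repairs you list closes the gap. Restricting to states reachable from $\star$ removes nothing, since every $(A',a')$ and $(B',b')$ of positive probability is reachable. Your verification still covers only the input $\star$, so that fallback proves only commutation in $\rho$, which is strictly weaker than the statement. The hidden-variable route is unspecified: the first decision does not know its partner, so it cannot know which pair's sample to read. If it means reading off outcomes pre-sampled from one global joint distribution, it is a local hidden-variable model. Such a model cannot reproduce the CHSH-violating no-signalling data the theorem must cover.

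What is missing is just a consistent choice on the remaining states; no enlargement of $S$ is needed.
\begin{itemize}
\item Take $T_A$ to be the identity on every state other than $\star$ and the single $\mathcal{B}$-records $(B,b)$.
\item Choose the observation map $f_A$ so that its value on a record is unchanged when the record is extended by a $\mathcal{B}$-outcome. For example, $f_A$ returns the stored $\mathcal{A}$-outcome $a'$ on both $(A',a')$ and $(A',a',B',b')$, and a fixed default on $\star$ and $(B',b')$.
\item Define $T_B$ and $f_B$ symmetrically.
\end{itemize}
A case check then gives commutation on every input:
\begin{itemize}
\item From $\star$, both orders give outcome $(a,b)$ and final state $(A,a,B,b)$ with probability $P(A\otimes B)(a,b)$.
\item From $(A',a')$, the $A$-instrument is idle and outputs $a'$ in either order. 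The $B$-instrument moves to $(A',a',B,b)$ with probability $P(A'\otimes B)(a',b)/P(A')(a')$.
\item From $(B',b')$, the situation is symmetric.
\item On full records, both instruments are idle.
\end{itemize}
This is the paper's proof, which observes $T_A\circ T_B=T_B\circ T_A$. The extension-invariance of the observation maps is the extra condition needed to pass from commuting transitions to commuting instruments. The paper's ``define $f_A(s)$ arbitrarily otherwise'' must be read with this constraint, and it is exactly what your fixed outcome on full records violates.
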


\begin{proof}
Appendix \ref{app:proofs}.
\end{proof}

Instead, we can put further constraints on a model by explicitly using the monoidal product in our process theory $\catC$ to capture joint decisions. Thus we restrict attention to models of the following form.

\begin{definition}[Parallel Decision Model] \label{Def:par-meas-model}
A \emph{parallel decision model} of joint decision data $\{P(A \jdec B)\}_{\decinA \in \decsetA, \decinB \in \decsetB}$ is given by:
\begin{itemize}
\item objects $\Hione, \Hitwo$;
\item for each decision $\decinA \in \decsetA$ a measurement $A$ on $\Hione$ with outcomes $\Xone$; 
\item for each decision $\decinB \in \decsetB$ a measurement $B$ on $\Hitwo$ with outcomes $\Xtwo$; 
\item a normalised \emph{initial state} $\Mo$ of $\Hione \otimes \Hitwo$; 
\end{itemize}
such that for all $\decinA \in \decsetA, \decinB \in \decsetB$ we have:
\begin{equation} \label{eq:par-meas-model}
\input{Bell-data-meas.tikz}
\end{equation}
\end{definition}

It follows that any data with a parallel model satisfies no-signalling \eqref{eq:no-signalling}, and so parallel models are only suitable for such data. 

\begin{examples}
By a \emph{classical parallel decision model} we mean a parallel decision model in $\Class$.  Thus it consists of an initial joint distribution $\rho$ over a finite set $S=S_1 \times S_2$ and probability channels $A \colon S_1 \to \Xone$, $B \colon S_2 \to \Xtwo$ for each $A \in \decsetA$, $B \in \decsetB$, recovering each distribution as in \eqref{eq:par-meas-model}. 

A \emph{quantum parallel decision model} is one in $\QC$ on quantum objects. Thus it consists of a pair of Hilbert space $\hilbH_1, \hilbH_2$, a density matrix $\rho$ on the tensor $\hilbH_1 \otimes \hilbH_2$, and POVMs $A=(A_x)_{x \in \Xone}$ on $\hilbH_1$ and $B=(B_y)_{y \in \Xtwo}$ on $\hilbH_2$ for each $A \in \decsetA$, $B \in \decsetB$, satisfying \eqref{eq:par-meas-model}. 
\end{examples}

\begin{remark}
There is another convenient way to specify a parallel joint decision model. Writing $\decsetA = \{A_i\}_{i \in \mathcal{I}}$, $\decsetB = \{B_j\}_{j \in \mathcal{J}}$, we can view any joint decision data as a single  
 probability channel $P$ from the choices of decision on each side $\mathcal{I} \times \mathcal{J}$ to $\Xone \times \Xtwo$, as on the LHS below. Moreover we can combine the measurements for $\decsetA$ into a single controlled measurement here denoted $A$ with classical input $\mathcal{I}$, and similarly for $B_1, B_2$, as below. Then a parallel decision model amounts to specifying the state $\Mo$ and channels $A, B$ as below, such that the following single equation holds: 
\[ 
\input{factor-data3.tikz}
\]
\end{remark}

\begin{remark}[Parallel Instrument Models]
Note that we only needed \emph{measurements} to represent each decision in a parallel decision model. More strongly however, we can define a \emph{parallel instrument model} just like Definition~\ref{Def:par-meas-model} but with an instrument $A$ on $\Hione$ with outcomes $\Xone$ for each $\decinA \in \decsetA$, and an instrument $B$ on $\Hitwo$ with outcomes $\Xtwo$ for each such $\decinB \in \decsetB$. Then we require each condition \eqref{eq:par-meas-model} to hold for the measurements induced by these instruments: 
\[
\input{measpar.tikz}
\]
A parallel instrument model can be equivalently defined as a sequential decision model for which $S = \Hione \otimes \Hitwo$ and:
\[
\input{factor-inst.tikz}
\]
for all $\decinA \in \decsetA$, $\decinB \in \decsetB$. By construction each such parallel pair $A$ and $B$ commute on $S$, and their sequential composite $A \jdec B$ (with outcomes given by the product of their outcomes) automatically satisfies our requirements \eqref{eq:joint-dec}. Each joint decision is then modelled as a tensor product: 
\begin{equation} \label{eq:parallel-joint}
\input{paralleljoint.tikz}
\end{equation}
We will see however that Theorem \ref{thm:joint-dec-thm} cannot be extended to Markov models of this specific form. 
\end{remark}

\subsection{Bell-CHSH inequality}

Suppose we have joint decision data $\{P(A \jdec B)\}_{\decinA \in \decsetA, \decinB \in \decsetB}$ with $\Xone=\Xtwo=\{y,n\}$. For each $\decinA \in \decsetA, \decinB \in \decsetB$ we introduce a quantity similar to that of Section \ref{sec:QQ}:
\[
\QQ(A,B) := P(Ay\jdec By) + P(An \jdec Bn)
\]

Henceforth, we will restrict our attention to the simplest form of joint decision data, with two decisions on each side, so that
$\decsetA=\{\decA,\decA'\}$ and 
$\decsetB=\{\decB,\decB'\}$.
Thus such data consists of four distributions:
\begin{equation} \label{eq:joint-data-simple}
 P(A \otimes B), \quad P(A' \otimes B), \quad P(A \otimes B'), \quad P(A' \otimes B')
 \end{equation}
 over $\Xone \times \Xtwo$. Such data is all that is required for the famous Bell-CHSH inequality, which we can state (in a slightly altered form) as follows.\footnote{
    More standardly, one may define $E(AB) := P(Ay \jdec By) + P(An \jdec Bn) - P(Ay \jdec Bn) - P(An \jdec By)$ and then Bell-CHSH states that $\left|E(AB) + E(A'B) + E(A'B') - E(AB')  \right| \leq 2$, which is easily seen to be equivalent since $E(AB) = 2\QQ(A,B) - 1$. }

\begin{proposition}[Bell-CHSH Inequality]\cite{clauser1969proposed}
Any joint decision data coming with a classical parallel decision model satisfies the inequality: 
\begin{equation} \label{eq:Bell-CHSH-QQ}
\left| \QQ(A,B) + \QQ(A',B) + \QQ(A',B') - \QQ(A,B') - 1 \right| \leq 1
\end{equation}
\end{proposition}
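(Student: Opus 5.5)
We will reduce to the standard CHSH argument: first to deterministic local strategies, then a case check. In a classical parallel decision model we have a distribution $\rho$ over $S_1 \times S_2$ and probability channels $A, A' \colon S_1 \to \Xone$, $B, B' \colon S_2 \to \Xtwo$. By \eqref{eq:par-meas-model},
\[
\QQ(A,B) = \sum_{(s_1,s_2)} \rho(s_1,s_2) \sum_{x \in \{y,n\}} A(x \mid s_1) B(x \mid s_2).
\]
Set $a(s_1) := A(y \mid s_1) \in [0,1]$, $b(s_2) := B(y \mid s_2) \in [0,1]$, and define $a'$, $b'$ likewise. The inner sum is then $q(a,b) := ab + (1-a)(1-b)$.

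The quantity inside the absolute value of \eqref{eq:Bell-CHSH-QQ} is therefore a $\rho$-average of
\[
F(a,a',b,b') := q(a,b) + q(a',b) + q(a',b') - q(a,b') - 1,
\]
evaluated at $(a(s_1),a'(s_1),b(s_2),b'(s_2))$. Since $\rho$ is a probability distribution and $|\cdot|$ is convex, it suffices to show $|F| \le 1$ pointwise on $[0,1]^4$.

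The function $F$ is affine in each of its four variables separately, since $q$ is bilinear in its two arguments and no variable appears in both arguments of any single $q$ term. Hence its maximum and minimum over the cube $[0,1]^4$ are attained at vertices. This reduces the problem to deterministic values $a,a',b,b' \in \{0,1\}$.

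On vertices, substitute $\pm 1$ variables $\alpha = 2a-1$ and so on. Then $q(a,b) = (1+\alpha\beta)/2$, and
\[
F = \tfrac12\left(\alpha\beta + \alpha'\beta + \alpha'\beta' - \alpha\beta'\right) = \tfrac12\left(\alpha(\beta-\beta') + \alpha'(\beta+\beta')\right).
\]
One of $\beta \pm \beta'$ is $0$ and the other is $\pm 2$, so $F = \pm 1$. Therefore $|F| \le 1$ everywhere on the cube, and averaging over $\rho$ gives \eqref{eq:Bell-CHSH-QQ}. Equivalently, this is the standard CHSH bound via $E = 2\QQ - 1$, as noted in the footnote.

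The only step needing care is the multi-affine extremal-point argument, together with checking that the constant term $-1$ correctly matches the $\tfrac12$ factors. Everything else is routine bookkeeping.
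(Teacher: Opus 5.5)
Your proof is correct and is exactly the standard CHSH argument the paper relies on; the paper gives no proof of its own, only citing \cite{clauser1969proposed}, noting in a footnote that $E(AB) = 2\QQ(A,B) - 1$ converts the statement to the usual form, and rewriting a classical parallel model as a local hidden variable model in the subsequent remark, which matches your averaging over $\rho$. As a minor streamlining, since $q(a,b) = (1+\alpha\beta)/2$ holds for all real $\alpha, \beta$, you can bound $|F| \leq \frac{1}{2}\left(|\beta - \beta'| + |\beta + \beta'|\right) = \max(|\beta|, |\beta'|) \leq 1$ directly on $[-1,1]^4$ and skip the vertex reduction.
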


Since the Bell-CHSH inequality is not automatically satisfied by all joint decision datasets, there can exist such datasets for which no classical parallel decision model is available. In particular, there are quantum parallel decision models whose resulting data is able to violate the inequality. 

\begin{example}\cite{brunner2014bell}
A standard example of a quantum model violating the CHSH inequality is with $\Mo = \ket{\psi}\bra{\psi}$ where $\psi=\frac{1}{\sqrt{2}}(\ket{01} - \bra{10})$ is a pure singlet state, and the measurements $A,A',B',B'$ are pure measurements of the observables $\sigma_Z$, $\sigma_X$, $\frac{\sigma_Z-\sigma_X}{\sqrt{2}}$ and $\frac{\sigma_X-\sigma_Z}{\sqrt{2}}$, respectively. For this setup the LHS of \eqref{eq:Bell-CHSH-QQ} is equal to $2\sqrt{2}$, violating the inequality. In fact this violation (divided by $\frac{1}{2}$ in the more standard statement of the inequality), the \emph{Tsirelson bound}, is the largest possible for any parallel quantum model. 
\end{example}
 Any quantum parallel decision model violating the Bell-CHSH inequality requires the use of a state $\Mo$ which is \emph{entangled}. Recall that a state $\Mo$ of $S_1 \otimes S_2$ is a \emph{product state} when it factors as $\Mo = \Mo_1 \otimes \Mo_2$ for states $\Mo_1, \Mo_2$ of $S_1, S_2$ respectively. 
More generally, when $\catC$ admits convex mixtures of (normalised) states (as do $\Class$ and $\QC$) we say that $\Mo$ is \emph{separable} when it is a convex mixture of product states. Otherwise we call $\Mo$ entangled. Every joint state in $\Class$ is separable, and so in $\QC$ if $S_1$ and $S_2$ are classical objects one cannot violate the inequality.

\begin{remark}
Given joint decision data, stating a classical parallel decision model is equivalent to giving a sequential Bayesian model, as follows. 
\[
\input{seqBayes2.tikz}
\]
In the first step we rewrite $\Mo=P(S,S')$ where $\lambda=P(S)$ and $\Mo|_S=P(S' \mid S)$. In the second step we define $B^* := B \circ \Mo|_S$. This form resembles the usual attempted factorisation of joint data in terms of a local hidden variable $\lambda$ in treatments of Bell's theorem. In the final step we see this is equal to the distribution induced by a Bayesian model.
Hence Bell-CHSH inequalities can be taken as tests for the availability of sequential Bayesian models.
\end{remark} 

\subsection{Bell arguments in cognition} \label{subsec:Bell-arguments}

Let us now discuss how Bell-type arguments could be applied in cognition, as alluded to above. Suppose we have the simple form of joint decision data given by the distributions $P(A \otimes B)$, $P(A' \otimes B)$, $P(A \otimes B')$, and $P(A' \otimes B')$ over $\Xone \times \Xtwo$,
coming from a cognitive experiment. 
Moreover suppose that the data:
\begin{enumerate}[label=(\alph*)]
    \item \label{enum:nat}
    is naturally represented as joint decision data and modelled with a parallel decision model; 
    \item \label{enum:no-signal}
    satisfies no-signalling \eqref{eq:no-signalling};
    \item \label{enum:Bell-violates}
    violates 
     the Bell-CHSH inequality \eqref{eq:Bell-CHSH-QQ};
     \item \label{enum:Tsirelson-bound}
    satisfies the Tsirelson bound.
\end{enumerate}

Then while \ref{enum:nat} means it is natural to model the data with a parallel decision model, and \ref{enum:no-signal} means it is possible to do so, thanks to \ref{enum:Bell-violates} and Theorem \ref{thm:joint-dec-thm} no such classical model can exist. Unlike the cognitive effects of Section \ref{sec:cog-effects} which ruled out Bayesian models but not simple functional models or probability channels, this would be a much stronger result, stating that there is strictly no way to implement the boxes in \eqref{eq:Bell-CHSH-QQ} as probability channels and recover the data.
Thus we would obtain a strong argument for a setting where non-classical models are needed in cognition. If moreover \ref{enum:Tsirelson-bound} holds then a quantum parallel decision model would instead be available, providing an argument for considering quantum models specifically.

\paragraph{The use of parallel models.}
While the remaining points are simple facts about the data, the strength of such an argument would hinge crucially on the argument for point \ref{enum:nat}, namely that the cognitive scenario should indeed be modelled with a parallel model. If this is unconvincing then one would always be free to simply use a sequential classical model, which is always available by Theorem \ref{thm:-simple-M-Model}.

To claim such naturality, our desiderata for the decisions in the scenario themselves are as follows. Ideally, it should be natural to think of each decision $A \jdec B$ as really consisting of two decisions $A$ and $B$, and moreover we should have data for asking the decisions $A, B$ sequentially and find that they exhibit no order effects, for each such parallel pair. More weakly, we may simply have a collection of decisions which have `two aspects'  to their outcomes, which we model as  $A \jdec B$, but we cannot ask each decision independently. In either case, it should be natural to treat $A, B$ as relating to aspects of the joint decision, pertaining to respective parallel cognitive spaces $\Hione, \Hitwo$.

As an aid to intuition, we tentatively offer some candidate proposals for experiments. We stress however that it would be the role of cognitive scientists to develop more refined proposals suitable for actual experiments. 

\begin{dataexample}
A candidate experiment would involve presenting a participant with two distinct stimuli $\Hione, \Hitwo$. The decisions $\decA, \decA'$ are yes-no questions about whether stimulus $\Hione$ has a given property, and $\decB, \decB'$ similarly for $\Hitwo$. On each run of the experiment a participant is asked one parallel pair of such questions. For example, participants could be shown:
\begin{itemize}

\item a video, with $\Hione$ being the sound and $\Hitwo$ the image, $\decA, \decA'$ are whether the sound was `loud' or `far away' and $\decB, \decB'$ whether it was `red' or `natural', respectively. More generally, $\Hione$ and $\Hitwo$ can refer to different modes (e.g. sound and taste) of one entity;

\item two images $\Hione$ and $\Hitwo$, where $\decA, \decA'$ are whether $\Hione$ is `red' or `natural' and $\decB, \decB'$ again the same properties for $\Hitwo$, respectively. 
\end{itemize}

No-signalling would amount to the probability that a participant says $\Hione$ has property $\decA$ is independent of which property $\decB, \decB'$ we ask about stimulus $\Hitwo$, etc.\footnote{To expect to use a quantum model with entanglement, we might also expect e.g.~the properties $\decA, \decA'$ to be suitably `incompatible', exhibiting order effects if asked sequentially.}
\end{dataexample}

\paragraph{Challenges.}
A large body of experimental evidence of joint decision data satisfying requirements \ref{enum:nat} - \ref{enum:Tsirelson-bound} would provide major evidence for non-classical and potentially quantum-like models of cognition. In practice, it may be difficult to find scenarios which can really be said to consist of two `independent' aspects so as to satisfy point \ref{enum:nat}. Even more so, the difficulty in finding cognitive data satisfying no-signalling \ref{enum:no-signal} is perhaps the main barrier towards applying such arguments in practice, due to likely presence of interconnections in cognitive systems. Note that these points are closely related: any data satisfying this condition would strongly suggest the use of parallel models as an explanation.  For this reason, it could be natural to explore areas already modelled via parallel models classically in so-called Bayesian-approaches such as active inference \cite{parr2022active}, in that case simply corresponding to the use of separate variables in the model.  

\begin{remark}
In the case where Bell violations were found in a cognitive setting, a major question would lie in identifying the correct interpretation of such non-classicality: must it require quantum physical aspects or `non-locality' in the brain itself? A related investigation of how Bell inequality violations can in fact arise from an underlying classical neuronal implementation, and a discussion of how such ‘mental entanglement’ can be tested experimentally, is found in the recent article \cite{khrennikov2025quantum}.
\end{remark}

\paragraph{Related work: Entangled concepts.} 

Arguments invoking entanglement and Bell's theorem have in fact already been made frequently in the quantum-like cognition literature, most commonly in the representation of human \emph{concepts}, e.g.~\cite{gabora2002contextualizing,
aerts2005theory,
aerts2009quantum}, \cite[Chapter 5]{busemeyer2012quantum}.  However, crucially none of these approaches satisfy desiderata \ref{enum:no-signal} above, and so refer to data for which no parallel decision model is available (classical, quantum or otherwise); for a detailed discussion see Appendix \ref{sec:entangled-concepts}. 

\subsection{Future direction: more general compositional models} \label{sec:comp-models}

If a valid Bell-type argument can be made to rule out classical models, this would motivate an exploration of cognitive modelling making more extensive use of the parallel composition of objects and processes, as well as sequential. It is also precisely such modelling which makes the use of process theories (i.e.~monoidal categories) most natural. In particular, we could consider the following generalisation of parallel instrument models. 
\begin{definition}
By a \emph{factorised} decision model we mean an instrument model for which $S$ comes with a given factorisation as a monoidal product: 
\[
S=S_1 \otimes \dots \otimes S_n
\]
of further objects $S_1, \dots, S_n$, called the \emph{factors}, and where each instrument $A$ is given along with a specified partition $S = S_A \coprod S'_{A}$ of the factors such that:
\[
\input{factored-inst2.tikz}
\]
for some instrument $A^*$ on $S_A$. We call the subset $S_A$ the \emph{domain} of $A$. 
\end{definition}

One area where factorised models are useful is in Gärdenfors' framework of \emph{conceptual spaces}, which uses mental 
representation spaces which are internally structured in terms of finitely many separate \emph{domains} such as those of 
colours, smells, and tastes. A focus of the approach is on concepts, which can be modelled as processes $C \colon S \to I$ on such a space.
The conceptual space framework can be generalised categorically in terms of various semantics: its usual form based 
on convex sets \cite{bolt2019interacting}, a more general form allowing fuzzy or graded concepts on measurable spaces \cite{tull2021categorical}, and also in a quantum semantics allowing for modelling `quantum concepts' \cite{tull2024conceptual}. Exploring more detailed connections with the quantum-like literature would be interesting for future work.

Another classical cognitive approach using factorised models are `Bayesian Brain' approaches based on Bayesian networks. The latter are also naturally modelled in string diagrams, allowing for categorical treatments of causal reasoning \cite{lorenz2023causal}, including the generative models used in Bayesian approaches such as active inference \cite{tull2023active}. In any area where such models are applied, if experimental data found a Bell inequality violation, it could motivate using instead a form of quantum or more general model with similar compositional structure. In such a case, while the desire for compositionality may have initially motivated the use of (classical) Bayesian networks, it would ultimately require the use of quantum models.

\section{Discussion} \label{sec:Discussion}

In this article we have aimed to demonstrate that probabilistic process theories provide a natural setting in which to define and study cognitive decision models, including both classical and quantum-like models.  

We gave diagrammatic accounts of many cognitive effects including order, interference, replicability effects, the QQ-equality, conjunction fallacies, and Bell inequalities, and in the appendices: disjunction fallacies, changes of perspective, reciprocity, double stochasticity, and temporal Bell inequalities. Our general approach allowed us to properly compare classical and quantum models in a uniform setting and assess whether there are any strict arguments for the use of quantum models. 

In Section \ref{sec:classical-models} we saw that in fact any sequential decision data can be given a classical instrument model, as proven in Theorem \ref{thm:-simple-M-Model}. Even simple deterministic classical models based on simple functions can capture all of the cognitive effects which usually motivate quantum ones, as exemplified in the model of \cite{ozawa2021modeling} in Example \ref{ex:OK-det-model-CG}. Crucially, general classical instruments (deterministic or otherwise) can allow for the state of the system to be altered and updated, allowing for such effects.

Hence if one is willing to use more general classical instruments, described by probability channels, then no sequential decision data, or any cognitive effects which can be derived from such data (interference, order, disjunction, conjunction effects, or the QQ-equality) will force one to consider quantum models. To fully explore the need for non-classical models within cognition, we then identified two possible routes forward.

\paragraph{Generalising measurement models.} 
The first option is to restrict attention to only allowing measurement models, as discussed in Section \ref{sec:meas-models}. These provide a structurally constrained class of instrument models, typically requiring fewer parameters and which can be equated with a logic or probability theory. The literature usually does so, comparing Bayesian models (`classical logic') against quantum projective or POVM models (`quantum logic'). After this restriction, classical models indeed fail to account for the cognitive effects while quantum (projective) models can. 

However, a crucial issue identified in \cite{khrennikov2014quantum} and here in Theorem \ref{thm:RRE-no-order} is that even quantum measurement models are unable to capture both order effects and RRE together. The key question is whether one can generalise measurement models in order to account for all of our cognitive effects at once. 

\begin{center}
\fbox{
    \begin{tabular}{@{}c@{}}
     Can we capture all of the cognitive effects with either \\ 
     measurement models in an alternative probabilistic process theory \\  or
     a generalised classical measurement theory?
    \end{tabular}%
}
\end{center}

One option could be to seek a class of instruments within broader than POVMs, satisfying the effects, but less general than arbitrary instruments, and which can still be thought of as measurements. Alternatively we could explore measurements within more general probabilistic process theories $\catC$. A minimal criteria would be that $\catC$ includes (update) instruments exhibiting interference effects, while also potentially satisfying RRE.

More generally, it would be interesting to apply our framework to  decision making in more general process theories $\catC$ beyond classical and quantum theory, particularly those which feature state disturbance as a fundamental aspect of interaction. An interesting candidate could be those based on epistemic restrictions, such as Spekkens’ toy theory \cite{spekkens2007evidence}. We note that the exploration of cognition in other generalised probabilistic theories has also been initiated in \cite{khrennikov2025coupling}. 

Whether by altering the notion of measurement, or replacing the process theory $\catC=\Class$ by an alternative, these investigations could ultimately amount to developing a more general form of \emph{classical measurement theory} which does allow for disturbance from measurements. It could turn out that ultimately classical models which are `quantum-like' in the simple sense that they feature disturbance under interaction may provide the best explanations for the cognitive effects explored here. 

\paragraph{Instruments and Bell Arguments.}
In the second route we instead allow the use of arbitrary instruments as models. Classical instruments include arbitrary probability channels, including deterministic models, while quantum instruments are given by families of jointly trace-preserving CP maps. Given that even classical deterministic models can then capture all the effects in Section \ref{sec:cog-effects} (as well as those in Appendices \ref{app:cond-data} and \ref{app:further-effects}) the key question to address is then the following. 

\begin{center}
\fbox{
    \begin{tabular}{@{}c@{}}
    Are there cognitive effects which can be modelled with quantum instruments, \\ but not (deterministic) classical instruments? 
    \end{tabular}%
}
\end{center}

Any such effects must go beyond sequential decision data and introduce further structure in the models. A natural candidate, and one which makes best use of monoidal categories, is to consider parallel composition of systems and decision processes, as we explored in Section \ref{sec:Bell-arguments}. In particular we saw that if experimental data for parallel joint decisions could be found satisfying suitable conditions, then Bell's Theorem can be applied to rule out any classical parallel decision model. This would provide a strong result, stating that such experiments simply cannot have any form of classical model based on parallel decision making processes. In this case one could then explore quantum parallel decision models as well as those in more general process theories. 

The challenge is thus in finding a compelling cognitive scenario which is natural to model through parallel processes, and for which real experimental data is found which violate a Bell inequality while satisfying no-signalling (in Appendix \ref{sec:entangled-concepts} we discuss how previous uses of Bell's Theorem in the quantum-like modelling literature have failed the latter condition). In particular, the greatest challenge may be that it may be difficult or impossible to find experimental data which satisfies no-signalling, due to the interconnected nature of cognition. 

\begin{center}
\fbox{
    \begin{tabular}{@{}c@{}}
    Are there compelling examples of parallel decision making in cognitive science \\ which experimentally satisfy no-signalling, but violate a Bell inequality? 
    \end{tabular}
}
\end{center}

Overall, while finding such valid cases in cognitive science remains a major challenge, we believe it is worth further investigation to look for such experimental scenarios. If achieved, such data would provide an argument of a strength previously unknown in quantum-like cognition, and one which could arguably have major import on cognitive science: the existence of data which simply cannot be modelled classically.

As well as the above there are several further aspects of quantum-like cognition which could be considered.

\paragraph{Empirical arguments.}
We have focused on looking for strict black-or-white structural arguments for quantum models over classical ones. However, even if no such arguments are available, and quantum models are not strictly essential to fit given data, they may still be valuable. For example, a quantum model may be seen to empirically provide the `best' model, either because it provides a richer explanatory model, or requires fewer parameters than alternatives. The classical models provided by Theorem \ref{thm:-simple-M-Model}, for example, essentially encode the entire probability distributions of the data within the instruments, and so could be deemed unsatisfactory on both accounts. In contrast, if for example a large decision dataset could be explained by a quantum projective model using very few parameters, this would be seen as compelling evidence for the use of such models, even if classical models are available. 

Thus, even without strict structural arguments, much of the assessment of the relative merits of models will come down to a more subtle empirical assessment of the utility of quantum vs classical models in psychology or cognitive science. In future it would be interesting to further understand any potential explanatory benefits of quantum cognitive models in particular.

\paragraph{Further topics in the literature.}
While we aimed to cover a wide range of aspects of quantum-like cognition from \cite{busemeyer2012quantum}, there are undoubtedly further aspects we have omitted which would be interesting to explore in terms of process theories. One such topic is in the use of quantum theory to model concepts, and the use of these in quantum AI (discussed briefly in Section \ref{sec:entangled-concepts}, see also \cite{tull2024conceptual}). Another interesting topic is the neuronal basis for quantum-like models, in which one justifies how a high-level quantum representation can emerge from lower-level classical neural models such as the brain or artificial neural networks \cite{khrennikov2025coupling}. We also did not in detail cover aspects of continuous time evolution and dynamics in decision models, such as the quantum Zeno effect \cite{yearsley2016zeno}.

\paragraph{Causal aspects.}
We have also overlooked causal aspects of cognition; for this we merely note that causal models (causal Bayesian networks) and reasoning are both naturally described in terms of string diagrams \cite{jacobs2019causal,fritz2023d,lorenz2023causal}. Another interesting topic in quantum foundations are indefinite causal orders \cite{kissinger2019categorical}, and it would be interesting to explore the extent to which these arise in human cognition and are best described by either classical or quantum causal models.

\section*{Acknowledgements}
Masanao Ozawa~was supported by JSPS KAKENHI Grant Numbers JP25K07108, JP25H00595, JP24H01566, and JST CREST Grant Number JPMJCR23P4, Japan. This work has benefited from discussions and input from a range of researchers and events. We thank Yasuyoshi Yonezawa, Haruki Emori, Miho Fuyama, Andrei Khrennikov, Emmanuel Pothos, Peter Bruza and Bob Coecke for valuable discussions. We would also like to thank participants in the discussion group on quantum-like cognition at the Ernst Strüngmann Forum \emph{Simplicity behind Absurdity: The Power of Quantum Thinking} held in Frankfurt in September 2025, as well as the organisers and participants at events where the work was presented: the satellite workshop \emph{Mathematical and philosophical models of consciousness: from classical to quantum-like} of ASSC27 at Ritsumeikan University, Kyoto in July 2024, and at the \emph{Models of Consciousness} conference at Hokkaido University in October 2025.

\bibliographystyle{alpha}
\bibliography{qcog}

\newcommand{\etalchar}[1]{$^{#1}$}
\begin{thebibliography}{CJWW15}

\bibitem[AA95]{aerts1995applications}
Diedrik Aerts and Sven Aerts.
\newblock Applications of quantum statistics in psychological studies of
  decision processes.
\newblock {\em Foundations of Science}, 1(1):85--97, 1995.

\bibitem[AAB{\etalchar{+}}18]{aerts2018spin}
Diederik Aerts, Jonito~Aerts Argu{\"e}lles, Lester Beltran, Suzette Geriente,
  Massimiliano Sassoli~de Bianchi, Sandro Sozzo, and Tomas Veloz.
\newblock Spin and wind directions {I}: Identifying entanglement in nature and
  cognition.
\newblock {\em Foundations of Science}, 23(2):323--335, 2018.

\bibitem[Aer09]{aerts2009quantum}
Diederik Aerts.
\newblock Quantum structure in cognition.
\newblock {\em Journal of Mathematical Psychology}, 53(5):314--348, 2009.

\bibitem[Aer14]{aerts2014quantum}
Diederik Aerts.
\newblock Quantum and concept combination, entangled measurements, and
  prototype theory.
\newblock {\em Topics in Cognitive Science}, 6(1):129--137, 2014.

\bibitem[AF10]{atmanspacher2010proposed}
Harald Atmanspacher and Thomas Filk.
\newblock A proposed test of temporal nonlocality in bistable perception.
\newblock {\em Journal of Mathematical Psychology}, 54(3):314--321, 2010.

\bibitem[AG05]{aerts2005theory}
Diederik Aerts and Liane Gabora.
\newblock A theory of concepts and their combinations {I}: The structure of the
  sets of contexts and properties.
\newblock {\em Kybernetes}, 34(1/2):167--191, 2005.

\bibitem[AS11]{aerts2011quantum}
Diederik Aerts and Sandro Sozzo.
\newblock Quantum structure in cognition: Why and how concepts are entangled.
\newblock In {\em International Symposium on Quantum Interaction}, pages
  116--127. Springer, 2011.

\bibitem[Bar07]{barrett2007information}
Jonathan Barrett.
\newblock Information processing in generalized probabilistic theories.
\newblock {\em Physical Review A—Atomic, Molecular, and Optical Physics},
  75(3):032304, 2007.

\bibitem[BB12]{busemeyer2012quantum}
Jerome~R Busemeyer and Peter~D Bruza.
\newblock {\em Quantum models of cognition and decision}.
\newblock Cambridge University Press, 2012.

\bibitem[BCG{\etalchar{+}}19]{bolt2019interacting}
Joe Bolt, Bob Coecke, Fabrizio Genovese, Martha Lewis, Dan Marsden, and Robin
  Piedeleu.
\newblock Interacting conceptual spaces i: Grammatical composition of concepts.
\newblock In {\em Conceptual spaces: Elaborations and applications}, pages
  151--181. Springer, 2019.

\bibitem[BCP{\etalchar{+}}14]{brunner2014bell}
Nicolas Brunner, Daniel Cavalcanti, Stefano Pironio, Valerio Scarani, and
  Stephanie Wehner.
\newblock Bell nonlocality.
\newblock {\em Reviews of modern physics}, 86(2):419--478, 2014.

\bibitem[Bel64]{bell1964einstein}
John~S Bell.
\newblock On the einstein podolsky rosen paradox.
\newblock {\em Physics Physique Fizika}, 1(3):195, 1964.

\bibitem[BKRS15]{bruza2015probabilistic}
Peter~D Bruza, Kirsty Kitto, Brentyn~J Ramm, and Laurianne Sitbon.
\newblock A probabilistic framework for analysing the compositionality of
  conceptual combinations.
\newblock {\em Journal of Mathematical Psychology}, 67:26--38, 2015.

\bibitem[BWLM09]{busemeyer2009empirical}
Jerome~R Busemeyer, Zheng Wang, and Ariane Lambert-Mogiliansky.
\newblock Empirical comparison of markov and quantum models of decision making.
\newblock {\em Journal of Mathematical Psychology}, 53(5):423--433, 2009.

\bibitem[CDP10]{chiribella2010probabilistic}
Giulio Chiribella, Giacomo~Mauro D’Ariano, and Paolo Perinotti.
\newblock Probabilistic theories with purification.
\newblock {\em Physical Review A—Atomic, Molecular, and Optical Physics},
  81(6):062348, 2010.

\bibitem[CDP11]{chiribella2011informational}
Giulio Chiribella, Giacomo~Mauro D’Ariano, and Paolo Perinotti.
\newblock Informational derivation of quantum theory.
\newblock {\em Physical Review A—Atomic, Molecular, and Optical Physics},
  84(1):012311, 2011.

\bibitem[CHSH69]{clauser1969proposed}
John~F Clauser, Michael~A Horne, Abner Shimony, and Richard~A Holt.
\newblock Proposed experiment to test local hidden-variable theories.
\newblock {\em Physical review letters}, 23(15):880, 1969.

\bibitem[CJ19]{cho2019disintegration}
Kenta Cho and Bart Jacobs.
\newblock Disintegration and bayesian inversion via string diagrams.
\newblock {\em Mathematical Structures in Computer Science}, 29(7):938--971,
  2019.

\bibitem[CJWW15]{cho2015introduction}
Kenta Cho, Bart Jacobs, Bas Westerbaan, and Abraham Westerbaan.
\newblock An introduction to effectus theory.
\newblock {\em arXiv preprint arXiv:1512.05813}, 2015.

\bibitem[CK18]{coecke2018picturing}
Bob Coecke and Aleks Kissinger.
\newblock Picturing quantum processes: A first course on quantum theory and
  diagrammatic reasoning.
\newblock In {\em International conference on theory and application of
  diagrams}, pages 28--31. Springer, 2018.

\bibitem[CKT{\etalchar{+}}09]{conte2009mental}
Elio Conte, Andrei~Yuri Khrennikov, Orlando Todarello, Antonio Federici,
  Leonardo Mendolicchio, and Joseph~P Zbilut.
\newblock Mental states follow quantum mechanics during perception and
  cognition of ambiguous figures.
\newblock {\em Open Systems \& Information Dynamics}, 16(01):85--100, 2009.

\bibitem[Coe16]{coecke2016terminality}
Bob Coecke.
\newblock Terminality implies no-signalling... and much more than that.
\newblock {\em New Generation Computing}, 34(1):69--85, 2016.

\bibitem[DK14]{dzhafarov2014selective}
Ehtibar~N Dzhafarov and Janne~V Kujala.
\newblock On selective influences, marginal selectivity, and bell/chsh
  inequalities.
\newblock {\em Topics in Cognitive Science}, 6(1):121--128, 2014.

\bibitem[DL70]{davies1970operational}
E~Brian Davies and John~T Lewis.
\newblock An operational approach to quantum probability.
\newblock {\em Communications in Mathematical Physics}, 17(3):239--260, 1970.

\bibitem[FK23]{fritz2023d}
Tobias Fritz and Andreas Klingler.
\newblock The d-separation criterion in categorical probability.
\newblock {\em Journal of Machine Learning Research}, 24(46):1--49, 2023.

\bibitem[Fri20]{fritz2020synthetic}
Tobias Fritz.
\newblock A synthetic approach to markov kernels, conditional independence and
  theorems on sufficient statistics.
\newblock {\em Advances in Mathematics}, 370:107239, 2020.

\bibitem[GA02]{gabora2002contextualizing}
Liane Gabora and Diederik Aerts.
\newblock Contextualizing concepts using a mathematical generalization of the
  quantum formalism.
\newblock {\em Journal of Experimental \& Theoretical Artificial Intelligence},
  14(4):327--358, 2002.

\bibitem[GS17]{gogioso2017categorical}
Stefano Gogioso and Carlo~Maria Scandolo.
\newblock Categorical probabilistic theories.
\newblock {\em arXiv preprint arXiv:1701.08075}, 2017.

\bibitem[Gud73]{gudder1973convex}
Stan Gudder.
\newblock Convex structures and operational quantum mechanics.
\newblock {\em Communications in mathematical Physics}, 29(3):249--264, 1973.

\bibitem[Har11]{hardy2011reformulating}
Lucien Hardy.
\newblock Reformulating and reconstructing quantum theory.
\newblock {\em arXiv preprint arXiv:1104.2066}, 2011.

\bibitem[HET{\etalchar{+}}25]{huang2025overview}
Jiaqi Huang, Gunnar Epping, Jennifer~S Trueblood, James~M Yearsley, Jerome~R
  Busemeyer, and Emmanuel~M Pothos.
\newblock An overview of the quantum cognition research program.
\newblock {\em Psychonomic Bulletin \& Review}, pages 1--50, 2025.

\bibitem[HP96]{hameroff1996orchestrated}
Stuart Hameroff and Roger Penrose.
\newblock Orchestrated reduction of quantum coherence in brain microtubules: A
  model for consciousness.
\newblock {\em Mathematics and computers in simulation}, 40(3-4):453--480,
  1996.

\bibitem[HV19]{heunen2019categories}
Chris Heunen and Jamie Vicary.
\newblock {\em Categories for Quantum Theory: an introduction}.
\newblock Oxford University Press, 2019.

\bibitem[Jac15]{jacobs2015new}
Bart Jacobs.
\newblock New directions in categorical logic, for classical, probabilistic and
  quantum logic.
\newblock {\em Logical Methods in Computer Science}, 11, 2015.

\bibitem[Jac17]{jacobs2017quantum}
Bart Jacobs.
\newblock Quantum effect logic in cognition.
\newblock {\em Journal of Mathematical Psychology}, 81:1--10, 2017.

\bibitem[Jed17]{jedlicka2017revisiting}
Peter Jedlicka.
\newblock Revisiting the quantum brain hypothesis: toward quantum (neuro)
  biology?
\newblock {\em Frontiers in molecular neuroscience}, 10:366, 2017.

\bibitem[JKZ19]{jacobs2019causal}
Bart Jacobs, Aleks Kissinger, and Fabio Zanasi.
\newblock Causal inference by string diagram surgery.
\newblock In {\em International conference on foundations of software science
  and computation structures}, pages 313--329. Springer, 2019.

\bibitem[KBDB14]{khrennikov2014quantum}
Andrei Khrennikov, Irina Basieva, Ehtibar~N Dzhafarov, and Jerome~R Busemeyer.
\newblock Quantum models for psychological measurements: an unsolved problem.
\newblock {\em PloS one}, 9(10):e110909, 2014.

\bibitem[Khr]{khrennikov2004information}
Andre{\u\i} Khrennikov.
\newblock {\em Information dynamics in cognitive, psychological, social, and
  anomalous phenomena}.

\bibitem[KOBS25]{khrennikov2025coupling}
Andrei Khrennikov, Masanao Ozawa, Felix Benninger, and Oded Shor.
\newblock Coupling quantum-like cognition with the neuronal networks within
  generalized probability theory.
\newblock {\em Journal of Mathematical Psychology}, 125:102923, 2025.

\bibitem[KU19]{kissinger2019categorical}
Aleks Kissinger and Sander Uijlen.
\newblock A categorical semantics for causal structure.
\newblock {\em Logical Methods in Computer Science}, 15, 2019.

\bibitem[KY25]{khrennikov2025quantum}
Andrei Khrennikov and Makiko Yamada.
\newblock Quantum-like representation of neuronal networks' activity: modeling
  “mental entanglement”.
\newblock {\em Frontiers in Human Neuroscience}, 19:1685339, 2025.

\bibitem[LT23]{lorenz2023causal}
Robin Lorenz and Sean Tull.
\newblock Causal models in string diagrams.
\newblock {\em arXiv preprint arXiv:2304.07638}, 2023.

\bibitem[Moo02]{moore2002measuring}
David~W Moore.
\newblock Measuring new types of question-order effects: Additive and
  subtractive.
\newblock {\em The Public Opinion Quarterly}, 66(1):80--91, 2002.

\bibitem[OK19]{ozawa2019application}
Masanao Ozawa and Andrei Khrennikov.
\newblock Application of theory of quantum instruments to psychology:
  Combination of question order effect with response replicability effect.
\newblock {\em Entropy}, 22(1):37, 2019.

\bibitem[OK21]{ozawa2021modeling}
Masanao Ozawa and Andrei Khrennikov.
\newblock Modeling combination of question order effect, response replicability
  effect, and qq-equality with quantum instruments.
\newblock {\em Journal of Mathematical Psychology}, 100:102491, 2021.

\bibitem[OK23]{ozawa2023nondistributivity}
Masanao Ozawa and Andrei Khrennikov.
\newblock Nondistributivity of human logic and violation of response
  replicability effect in cognitive psychology.
\newblock {\em Journal of Mathematical Psychology}, 112:102739, 2023.

\bibitem[PB22]{pothos2022quantum}
Emmanuel~M Pothos and Jerome~R Busemeyer.
\newblock Quantum cognition.
\newblock {\em Annual review of psychology}, 73(1):749--778, 2022.

\bibitem[Pit89]{pitowsky1989quantum}
Itamar Pitowsky.
\newblock {\em Quantum probability—quantum logic}.
\newblock Springer, 1989.

\bibitem[Pl{\'a}23]{plavala2023general}
Martin Pl{\'a}vala.
\newblock General probabilistic theories: An introduction.
\newblock {\em Physics Reports}, 1033:1--64, 2023.

\bibitem[PPF22]{parr2022active}
Thomas Parr, Giovanni Pezzulo, and Karl~J Friston.
\newblock {\em Active inference: the free energy principle in mind, brain, and
  behavior}.
\newblock MIT Press, 2022.

\bibitem[Sel10]{selinger2010survey}
Peter Selinger.
\newblock A survey of graphical languages for monoidal categories.
\newblock In {\em New structures for physics}, pages 289--355. Springer, 2010.

\bibitem[Spe07]{spekkens2007evidence}
Robert~W Spekkens.
\newblock Evidence for the epistemic view of quantum states: A toy theory.
\newblock {\em Physical Review A—Atomic, Molecular, and Optical Physics},
  75(3):032110, 2007.

\bibitem[TK83]{tversky1983extensional}
Amos Tversky and Daniel Kahneman.
\newblock Extensional versus intuitive reasoning: The conjunction fallacy in
  probability judgment.
\newblock {\em Psychological review}, 90(4):293, 1983.

\bibitem[TKS23]{tull2023active}
Sean Tull, Johannes Kleiner, and Toby St~Clere Smithe.
\newblock Active inference in string diagrams: A categorical account of
  predictive processing and free energy.
\newblock {\em arXiv preprint arXiv:2308.00861}, 2023.

\bibitem[TS92]{tversky1992disjunction}
Amos Tversky and Eldar Shafir.
\newblock The disjunction effect in choice under uncertainty.
\newblock {\em Psychological science}, 3(5):305--310, 1992.

\bibitem[TSZC24]{tull2024conceptual}
Sean Tull, Razin~A Shaikh, Sara~Sabrina Zemlji{\v{c}}, and Stephen Clark.
\newblock From conceptual spaces to quantum concepts: formalising and learning
  structured conceptual models.
\newblock {\em Quantum Machine Intelligence}, 6(1):21, 2024.

\bibitem[Tul16]{tull2016operational}
Sean Tull.
\newblock Operational theories of physics as categories.
\newblock {\em arXiv preprint arXiv:1602.06284}, 2016.

\bibitem[Tul21]{tull2021categorical}
Sean Tull.
\newblock A categorical semantics of fuzzy concepts in conceptual spaces.
\newblock {\em arXiv preprint arXiv:2110.05985}, 2021.

\bibitem[WSSB14]{wang2014context}
Zheng Wang, Tyler Solloway, Richard~M Shiffrin, and Jerome~R Busemeyer.
\newblock Context effects produced by question orders reveal quantum nature of
  human judgments.
\newblock {\em Proceedings of the National Academy of Sciences},
  111(26):9431--9436, 2014.

\bibitem[YP16]{yearsley2016zeno}
James~M Yearsley and Emmanuel~M Pothos.
\newblock Zeno's paradox in decision-making.
\newblock {\em Proceedings of the Royal Society B: Biological Sciences},
  283(1828):20160291, 2016.

\end{thebibliography}

\appendix

\section{Conditional Effects} \label{app:cond-data}

In this section we will meet further cognitive effects which apply not directly to sequential distributions such as $\PAB$, but instead to related conditional probability distributions.
Given any probability distribution $P(A,B)$ over $X_A, X_B$ recall that the conditional probability (partial) channel $P(B \mid A) \colon X_A \to X_B$ is defined by 
\begin{equation} \label{eq:cond-formula}
P(By \mid Ax) := \begin{cases} \frac{P(Ax, By)}{P(Ax)} & P(Ax) > 0 \\ 0 & \text{otherwise} \end{cases}
\end{equation}
for all $y \in X_B$, where $P(Ax) := \sum_{y \in X_B} P(Ax,By)$. This defines a channel whenever $P(Ax) > 0$ for all $x \in X$, and a \emph{partial} channel otherwise.

Conditioning can also be described diagrammatically; for more details see \cite{lorenz2023causal,cho2019disintegration}. Firstly, for any non-zero state $\omega$ of a classical object $X$ we define its \emph{normalisation} $\normop(\omega)$ to either be the zero state or the unique normalised state with $\omega = (\discard{} \circ \omega) \cdot \normop(\omega)$. More generally, for any classical process $f \colon X \to Y$ we define its \emph{normalisation} by $\normop(f)(x) := \normop(f(x))$ whenever this is well-defined and $\normop(f)(x) = 0$ otherwise. In diagrams we depict normalisation by drawing a dashed box as below. 
\[
\input{normomega.tikz} 
\qquad \qquad \qquad \qquad 
\input{normf.tikz}
\] 
Next, recall from \eqref{eq:cap-def} the `cap' process $\tinycap \colon X \times X \to I$ for a classical object $X$, which is equivalently defined by: 
\begin{equation} \label{eq:classbend}
\input{classbend.tikz}
\end{equation}
for all $x \in X$.
Now, given any probability distribution $\PAB$ for decision $A$ then decision $B$, we can define the conditional probability $P(B \mid A)$ as in \eqref{eq:cond-formula} treating $\PAB$ as the joint distribution over $X_A \times X_B$. Equivalently: 
\[
\input{cond-channel.tikz}
\]
Conditioning on any particular outcome $x \in X_A$ then corresponds to the following state. 
\[
\input{cond-state.tikz}
\]
Here we used \eqref{eq:classbend} along with the property that normalised states for point distributions can pass inside normalisation boxes (which does not hold for general states); see \cite{lorenz2023causal} for more details. 

One can more generally define distributions conditioning on an initial subsequence of decisions. Given a joint distribution $P(A_1,\dots,A_n,B_1,\dots,B_m)$ we define the conditional probability channel $P(B_1,\dots,B_m \mid A_1,\dots, A_n)$ by:
\begin{equation} \label{eq:cond-general}
P(B_1b_1,\dots,B_mb_m \mid A_1a_1,\dots,A_na_n) := \frac{P(A_1a_1,\dots,A_na_n,B_1b_1,\dots,B_mb_m)}{P(A_1a_1,\dots,A_na_n)}
\end{equation}
We can treat these diagrammatically in just the same way by bending multiple wires and normalising.

\begin{definition}
A set of \emph{conditional decision data} with decisions $\decset$ is defined just like a set of decision data but now containing probability channels of the form $\PBtomCAton$ for certain sequences of decisions $A_i, B_j \in \decset$. We say that an instrument model $\modelM$ of $\decset$ models the conditional decision data when for all such conditional distributions in the data we have $\PMBtomCAton = \PBtomCAton$.\footnote{Here $\PMBtomCAton$ is the conditional channel constructed from $\PAtonBtom$. Note that we can always assume each $P(A_1,\dots,A_n \mid B_1,\dots, B_m)$ in the data is a channel (and not merely a partial channel) as each such channel $P_\modelM(\dots)$ will be.}
\end{definition}

Note that the special case where $n=0$ means that such decision data can contain our usual decision sequences $\PBton$ with no conditioning.

\subsection{Law of reciprocity}

The following feature of quantum models based on basis measurements is identified in \cite{busemeyer2012quantum}. 

\begin{definition}[Law of Reciprocity]
 We say that conditional data $P$ for decisions $A, B$ satisfies the \emph{law of reciprocity} when: 
 \[
\input{recip-in-diags.tikz}
 \]
 That is, we have: 
\[
P(Ax \mid By) = P(By \mid Ax)
\]
for all $x \in X_A, y \in X_B$. Similarly, we say a pair of instruments $A, B$ on $S$ satisfy the \emph{law of reciprocity in state $\rho$} when the data $P_\modelM$ induced by the model $\MABrho$ does. 
\end{definition} 

Reciprocity may or may hold for any given decision data. General instrument models, and in particular Bayesian models, need not satisfy this property. However, quantum basis measurements always will.

\begin{lemma}
Let $A, B$ be projective quantum instruments for orthonormal basis measurements. Then $A, B$ satisfy the law of reciprocity in every state $\rho$. 
\end{lemma}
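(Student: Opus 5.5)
We compute both conditional probabilities directly from the definition of the induced data and check that they agree. Let $A$ measure the orthonormal basis $\{\psi_x\}_{x \in X_A}$ and $B$ the orthonormal basis $\{\phi_y\}_{y \in X_B}$, so that $\hat P_x(a) = \ket{\psi_x}\bra{\psi_x} a \ket{\psi_x}\bra{\psi_x}$ and similarly for $B$. Fix an initial density matrix $\rho$.

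\emph{Step 1: the sequence $(A,B)$.} By the definition of $P_\modelM$ we have
\[
P_\modelM(Ax, By) = \Tr\big(\hat Q_y \hat P_x(\rho)\big) = \bra{\psi_x}\rho\ket{\psi_x}\,|\braket{\phi_y \mid \psi_x}|^2 ,
\]
since after the first measurement the state is proportional to the pure projector $\ket{\psi_x}\bra{\psi_x}$. Summing over $y$, and using that $\{\phi_y\}$ is an orthonormal basis, gives $P_\modelM(Ax) = \bra{\psi_x}\rho\ket{\psi_x}$. Hence by \eqref{eq:cond-formula}, whenever $P_\modelM(Ax)>0$,
\[
P_\modelM(By \mid Ax) = |\braket{\phi_y \mid \psi_x}|^2 .
\]
The key observation is that this is independent of $\rho$, because a rank-one projective measurement completely resets the state. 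Diagrammatically, $\hat P_x$ factors as the effect $\psi_x$ followed by the state $\psi_x$, so the $\rho$-dependent scalar splits off and is removed by normalisation.

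\emph{Step 2: the sequence $(B,A)$.} Symmetrically, $P_\modelM(Ax \mid By) = |\braket{\psi_x \mid \phi_y}|^2$ whenever $P_\modelM(By) > 0$. Since $|\braket{\phi_y\mid\psi_x}| = |\braket{\psi_x\mid\phi_y}|$ by conjugate symmetry of the inner product, the two conditionals coincide. This is the law of reciprocity for $\MABrho$.

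\emph{The only real obstacle: zero-probability outcomes.} If $P_\modelM(Ax) = 0$ (or $P_\modelM(By)=0$), the convention \eqref{eq:cond-formula} sets the conditional to $0$, whereas the other side may be nonzero. We would handle this in one of two ways. The first is to restrict the claim to outcomes of nonzero probability, or to states of full support. The second is to adopt, as in the footnote on conditional data, the convention that conditionals on null outcomes are defined as the canonical channel $|\braket{\phi_y\mid\psi_x}|^2$. We would state explicitly which convention is used; under either, the proof above is complete.
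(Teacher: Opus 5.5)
Your argument is correct and is essentially the paper's proof. Both compute $P_\modelM(Ax,By) = P_\modelM(Ax)\,|\braket{\phi_y \mid \psi_x}|^2$, cancel the $\rho$-dependent factor to obtain a state-independent conditional, and conclude via $|\braket{\phi_y \mid \psi_x}| = |\braket{\psi_x \mid \phi_y}|$. Your caveat on null outcomes is needed, and the paper's proof passes over it by dividing by $P_\modelM(Ax)$ without comment. Under \eqref{eq:cond-formula}, take mutually unbiased bases in dimension $n$ and $\rho = \ket{\psi_1}\bra{\psi_1}$: for $x \neq 1$ one gets $P_\modelM(By \mid Ax) = 0 \neq \frac{1}{n} = P_\modelM(Ax \mid By)$. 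So restricting to $P_\modelM(Ax), P_\modelM(By) > 0$ is the right fix, and it matches the paper's own footnote to the double-stochasticity corollary.
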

\begin{proof}
Suppose that $A, B$ measure the respective orthonormal bases $\{A_x\}^n_{x=1}, \{B_y\}^n_{y=1}$ of $\hilbH=\mathbb{C}^n$. Let $\modelM=(\hilbH,\rho,A,B)$ for some normalised state $\rho$. Then for each $x, y$ we have: 
\[
\PMAxBy  \ \ = \ \ \input{ONBproof.tikz}  \ \ = \ \ P_\modelM(Ax) \left|\braket{By \mid Ax}\right|^2
\]
Hence: 
\[
P_\modelM(B y \mid A x) := \frac{\PMAxBy}{P_\modelM(Ax)}
= 
\left|\braket{B_y \mid A_x}\right|^2 = \left|\braket{A_x \mid B_y}\right|^2 = P_\modelM(B_y \mid A_x)
\]
\end{proof}

\subsection{Double stochasticity}

Another feature of quantum basis measurements identified in \cite{busemeyer2012quantum} is the following. Say that data $\PAB$ is \emph{well-supported} if its marginals have full support, i.e.~for each $x \in X_A$, $\PAxBy$ is non-zero for at least one $y \in X_B$, and conversely for each such $y$ it is non-zero for at least one $x \in X_A$. Whenever this holds, 
$P(B \mid A)$ will be a probability channel and so stochastic as a matrix. 
Hence we will have: 
\[
\input{stoch.tikz}
\]
Equivalently 
\[
\sum_{y \in X_B} P(By \mid Ax) = 1
\]
for all $x \in X_A$. 
For certain datasets, however a form of converse may also hold. Recall from Proposition~\ref{prop:probPT} that on each classical object $X$ there is a normalised state $\discardflip{}$ such that:
\[
\begin{tikzpicture}
	\begin{pgfonlayer}{nodelayer}
		\node [style=none] (0) at (0, 0.75) {};
		\node [style=downground] (1) at (0, -0.75) {};
		\node [style=classcopoint] (2) at (0, 0.75) {$x$};
		\node [style=label] (3) at (-0.75, 0) {$X$};
	\end{pgfonlayer}
	\begin{pgfonlayer}{edgelayer}
		\draw [style=cwire] (0.center) to (1);
	\end{pgfonlayer}
\end{tikzpicture}
 \ \ = \ \ \frac{1}{|X|}
\]
for all $x \in X$. 

\begin{definition}[Double Stochasticity]
We say that conditional decision data $P(B \mid A)$ satisfies \emph{double stochasticity} when: 
\[
\sum_{x \in X_A} P(By \mid Ax) = 1
\]
for all $y \in X_B$. Rescaling both sides of the equation we can write this diagrammatically as:  \[
\input{doublestochscalars.tikz}
\]
We say that instruments $A, B$ on $S$ satisfy \emph{double stochasticity in state $\rho$} (resp.~is well-supported) when the model $\MABrho$ induces data $P_\modelM$ which does.
\end{definition}

\begin{lemma}
Any well-supported dataset (or instruments) satisfying reciprocity satisfies double stochasticity. 
\end{lemma}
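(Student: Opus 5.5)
The plan is a direct computation: rewrite the column sum using reciprocity, then use well-supportedness to see it is a row sum of a stochastic matrix.

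Fix $y \in X_B$. By the law of reciprocity, for every $x \in X_A$ we have
\[
P(By \mid Ax) = P(Ax \mid By).
\]
Hence
\[
\sum_{x \in X_A} P(By \mid Ax) = \sum_{x \in X_A} P(Ax \mid By).
\]
The right-hand side is the total mass of the conditional distribution of $A$ given $By$. So it suffices to show that $P(A \mid B)$ is a genuine probability channel at $y$, rather than the zero column produced by the partial-channel convention in \eqref{eq:cond-formula}.

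This is where well-supportedness is used. It guarantees that for each $y \in X_B$ some $P(Ax, By)$ is non-zero, so the marginal $P(By) > 0$. Then, by \eqref{eq:cond-formula},
\[
\sum_x P(Ax \mid By) = \frac{\sum_x P(Ax,By)}{P(By)} = 1.
\]
The same condition on the $X_A$ marginal ensures that $P(B \mid A)$ is itself a channel, so that each term $P(By \mid Ax)$ is given by the non-degenerate branch of the formula. Diagrammatically, this is the same fact as the stochasticity of $P(B\mid A)$ noted just before the definition, with the roles of $A$ and $B$ swapped: discarding the output of the channel $P(A \mid B)$ gives discarding on $X_B$. Reciprocity identifies $P(A\mid B)$ with the transpose (bent-wire version, via the cap \eqref{eq:classbend}) of $P(B\mid A)$. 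Composing the uniform state $\discardflip{}$ then yields the rescaled equation in the definition of double stochasticity.

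For the instrument version, I apply this argument to the induced data $P_\modelM$ of the model $\MABrho$. Well-supportedness and reciprocity in state $\rho$ are, by definition, exactly the corresponding properties of $P_\modelM$, so nothing further is needed.

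The only real obstacle is bookkeeping about which conditional is taken from which joint distribution. In the sequential setting, $P(A \mid B)$ should be read as conditioning the joint distribution over $X_A \times X_B$ on its second marginal, consistent with how the reciprocity definition is stated. I will fix that reading explicitly. I will also note that well-supportedness gives positivity of both marginals, so that both conditionals are total channels and the partial-channel zero convention never intervenes.
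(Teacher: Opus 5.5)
Your proposal is correct and is essentially the paper's proof: reciprocity turns the column sum into $\sum_{x} P(Ax \mid By)$, which equals $1$ because $P(A \mid B)$ is a channel (the paper then repeats this diagrammatically, cancelling the cap). One caveat on the reading you fix: in the paper's conditional-data convention $P(A \mid B)$ is built from the sequence $(B,A)$, not by conditioning $P(A,B)$ on its second marginal (the reciprocity lemma for basis measurements depends on this), but totality at each $y$ still follows from your hypotheses, since a zero column would, by reciprocity and $P(Ax)>0$, force $P(Ax,By)=0$ for all $x$, contradicting well-supportedness.
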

\begin{proof}
We have $\sum_{x \in X_A} P(By \mid Ax) = \sum_{x \in X_A} P(Ax \mid By) = 1$ since $P(A \mid B)$ is a channel. Equivalently, in diagrams:
\[
\input{dstochproofscalar.tikz}
\]
from which the result follows using that the cap is cancellative (which follows from \ref{eq:snake} shown later).
\end{proof}

\begin{corollary}
Any quantum orthonormal basis measurements satisfy double stochasticity.\footnote{More precisely, in any state $\rho$ with non-zero inner product with all the basis vectors.} 
\end{corollary}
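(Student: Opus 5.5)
The corollary should follow by combining the two preceding lemmas. The first says that projective instruments for orthonormal basis measurements satisfy the law of reciprocity in every state $\rho$. The second says that any well-supported instruments satisfying reciprocity satisfy double stochasticity. So the plan is to fix quantum orthonormal basis measurements $A, B$ on $\hilbH = \mathbb{C}^n$, with bases $\{\ket{A_x}\}$ and $\{\ket{B_y}\}$, and a state $\rho$. We then check that, under the hypothesis in the footnote, the induced data $P_\modelM$ for $\MABrho$ is well-supported. Both lemmas then apply.

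For well-supportedness, note that $P_\modelM(Ax) = \braket{A_x \mid \rho \mid A_x}$. The footnote's assumption of non-zero overlap with all basis vectors gives $P_\modelM(Ax) > 0$ for every $x$, so $P(B \mid A)$ is a genuine channel. We also need the reverse direction, namely $P(A \mid B)$ being a channel, which is what the double stochasticity lemma actually uses.

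The main subtlety is which conditional is meant. Here $P(A \mid B)$ means the conditional built from the sequence $(B, A)$, exactly as in the reciprocity proof. Its well-definedness needs $P_\modelM(By) = \braket{B_y \mid \rho \mid B_y} > 0$ for all $y$, which again follows from the footnote's hypothesis applied to the $B$ basis. In the $(A,B)$ ordering, the $B$-marginal is $\sum_x P_\modelM(Ax)\left|\braket{B_y \mid A_x}\right|^2$. This is positive because the $A_x$ span $\hilbH$ and $B_y \neq 0$, so some $x$ has non-zero overlap with $B_y$, and that $x$ has $P_\modelM(Ax) > 0$.

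With well-supportedness established, the argument is direct. The reciprocity lemma gives $P_\modelM(By \mid Ax) = \left|\braket{B_y \mid A_x}\right|^2 = P_\modelM(Ax \mid By)$. Then
\[
\sum_x P_\modelM(By \mid Ax) = \sum_x \left|\braket{A_x \mid B_y}\right|^2 = \|B_y\|^2 = 1 .
\]
This is Parseval's identity in the orthonormal basis $\{A_x\}$, and it can serve as a one-line check independent of the diagrammatic lemma. I expect no real obstacle; the only care needed is stating the support hypothesis precisely.
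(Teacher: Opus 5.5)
Your proposal is correct and follows the paper's route. The corollary is obtained by combining the reciprocity lemma for basis measurements with the lemma that well-supported reciprocal data is doubly stochastic, with the footnote's overlap hypothesis supplying positivity of the relevant marginals; your reading of $P(A\mid B)$ as coming from the $(B,A)$ sequence matches the reciprocity proof. Your direct Parseval computation is a useful addition, since it shows that only $\langle A_x|\rho|A_x\rangle > 0$ for all $x$ is actually needed.
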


\begin{dataexample}
In \cite{busemeyer2012quantum} the categorisation-decision data of Example \ref{ex:Cat-Dec-Exp} is given in two forms, for `narrow' and `wide' faces respectively. In conditional form it appears as follows. 
\[ 
\begin{bmatrix}
P(\actdecD y \mid \catdecC y) & P(\actdecD y \mid \catdecC n) \\ 
P(\actdecD n \mid \catdecC y) & P(\actdecD n \mid \catdecC n)
\end{bmatrix}_\mathsf{narrow}
=
\begin{bmatrix}
0.41 & 0.63 \\ 
0.59 & 0.37
\end{bmatrix}
\qquad \qquad
\begin{bmatrix}
P(\actdecD y \mid \catdecC y) & P(\actdecD y \mid \catdecC n) \\ 
P(\actdecD n \mid \catdecC y) & P(\actdecD n \mid \catdecC n)
\end{bmatrix}_\mathsf{wide}
=
\begin{bmatrix}
0.35 & 0.52 \\ 
0.65 & 0.48
\end{bmatrix}
\]
One can see that the narrow face data approximately satisfies double stochasticity, while the wide face data does not, and hence cannot be given a quantum model using orthonrmal basis measurements. 
\end{dataexample}
\subsection{Disjunction effect}

Like the conjunction fallacy, the following effect, first noted by Tversky and Shafir \cite{tversky1992disjunction} can be understood as a logical fallacy in human reasoning. We follow the account in \cite{busemeyer2012quantum}. 

\begin{definition}[Disjunction Effect]
We say that conditional data for decisions $A, B$ with outcomes $\{y,n\}$
satisfies a \emph{disjunction effect} when both of the following hold: 
\[
\input{gdatay4.tikz} \ \ > \begin{tikzpicture}
	\begin{pgfonlayer}{nodelayer}
		\node [style=cmap] (0) at (4, 0) {$\data(\gtwo)$};
		\node [style=none] (1) at (4, 0.25) {};
		\node [style=classcopoint] (2) at (4, 1.5) {$\gam$};
	\end{pgfonlayer}
	\begin{pgfonlayer}{edgelayer}
		\draw [style=cwire] (2) to (1.center);
	\end{pgfonlayer}
\end{tikzpicture}
 \ \ < \ \ \input{gdatay3.tikz}
\]
i.e.~$P(\Bet y) < P(\Bet y \mid \Win y)$ and $P(\Bet y) < P(\Bet y \mid \Win n)$. Similarly we say instruments $\Win, \Bet$ on $S$ with outcomes $\{y,n\}$ satisfy a \emph{disjunction effect in state $\rho$} when the data $P_\modelM$ induced by the model $\modelM=(S,\rho,\Win,\Bet)$ does. 

\end{definition}

The disjunction effect only requires as data a distribution $P(B)$ and channel $P(B \mid A)$.  If we however are also given a distribution $P(A)$ and hence the sequential distribution $\PAB := P(B \mid A)P(A)$ over $X_A \times X_B$, we
can see that a disjunction effect will always require an interference effect of $\Win$ on $\Bet$. 

\begin{proposition} \ 
\begin{enumerate}
\item 
\label{enum:disj-needs-int}
For data of the form $\{\PAB$, $\PB\}$ a disjunction effect requires an interference effect of $\Win$ on $\Bet$.
\item \label{enum:nobayesdis}
Bayesian models fail to satisfy disjunction effects.
\end{enumerate}
\end{proposition}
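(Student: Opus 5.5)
For part \eqref{enum:disj-needs-int} I would argue by contraposition: assume there is no interference effect of $\Win$ on $\Bet$, and show that the two strict inequalities of a disjunction effect cannot both hold. With the data $\{\PAB, \PB\}$ and the conditional channel $P(\Bet \mid \Win)$ built from $\PAB$ via \eqref{eq:cond-formula}, the law of total probability gives
\[
\sum_{x \in \{y,n\}} P(\Win x, \Bet y) = P(\Win y)\, P(\Bet y \mid \Win y) + P(\Win n)\, P(\Bet y \mid \Win n).
\]
I would restrict to outcomes $x$ with $P(\Win x) > 0$, since the others contribute zero. Absence of interference means exactly that the left-hand side equals $P(\Bet y)$. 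So $P(\Bet y)$ is a convex combination of $P(\Bet y \mid \Win y)$ and $P(\Bet y \mid \Win n)$, with weights $P(\Win y), P(\Win n)$ summing to $1$.

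The key step is then the elementary observation that a convex combination cannot be strictly smaller than every term that carries positive weight. Hence $P(\Bet y) \ge \min$ of the relevant conditionals, contradicting the disjunction effect. Diagrammatically this is the disintegration $\PAB = P(\Bet\mid\Win)\circ P(\Win)$ (after copying $X_\Win$), followed by discarding $X_\Win$. The one subtlety, and the main obstacle, is the degenerate case where some $P(\Win x) = 0$. There \eqref{eq:cond-formula} sets $P(\Bet y \mid \Win x) = 0$, so that inequality cannot be strict anyway; equivalently, by the footnoted convention, conditional data are assumed to be channels and hence well-supported. I would note this explicitly so that every conditional occurring in the definition has positive weight.

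For part \eqref{enum:nobayesdis}, I would take a Bayesian model $(S,\rho,\Win,\Bet)$. By Proposition \ref{prop:interference} \eqref{enum:Bayesnoint} its induced data $P_\modelM$ exhibits no interference of $\Win$ on $\Bet$, and $P_\modelM(\Bet \mid \Win)$ is by definition the conditional of $P_\modelM(\Win,\Bet)$. Part \eqref{enum:disj-needs-int} therefore applies directly to $\{P_\modelM(\Win,\Bet), P_\modelM(\Bet)\}$, so no disjunction effect can hold.
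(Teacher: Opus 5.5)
Your proposal is correct and matches the paper's proof: part 1 uses total probability and the bound $P(By)\ge\min\{P(By\mid Ay),P(By\mid An)\}$ under no interference, and part 2 combines part 1 with Proposition~\ref{prop:interference}~\eqref{enum:Bayesnoint}. The degenerate case $P(Ax)=0$ that you flag is not really an obstacle, since $P(Ax,By)=P(By\mid Ax)P(Ax)$ still holds there and a convex combination is bounded below by the minimum of all its terms, whatever the weights.
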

\begin{proof}
\eqref{enum:disj-needs-int}: 
For data showing no interference effect we have: 
\begin{align*}
\PBy & =  
\PAyBy + \PAnBy \\  
& = P(By \mid Ay)\PAy + P(By \mid An)\PAn \\ 
& \geq \mathsf{min}(P(By \mid Ay), P(By \mid An))
\end{align*}
since $\PAy + \PAn = 1$, which contradicts the disjunction effect.

\eqref{enum:nobayesdis}: Any instrument model $\modelM$ has conditionals induced by $P_\modelM(\Win, \Bet)$. Hence by \eqref{enum:disj-needs-int}
it must show interference of $\Win$ on $\Bet$, but Bayesian models never show interference by Proposition \ref{prop:interference} \eqref{enum:Bayesnoint}.
\end{proof}

The (failure of the) derivation in the proof of \eqref{enum:disj-needs-int} above corresponds to the usual view of the disjunction effect as a logical `fallacy'. An experimental example of a disjunction effect is the following.

\begin{dataexample}[Gambler's Fallacy\footnote{
Originally the Tversky and Shafir study \cite{tversky1992disjunction} was conceived as a test of the `Sure-thing' principle of decision theory: if under both state of the world $X$ and state of the world $X^\bot$ individually you prefer decision $A$ over decision $B$, then you should prefer decision $A$ over $B$ even when you do not know the state of the world. As noted in \cite[9.2.3]{busemeyer2012quantum} however, it is difficult to interpret the experiment this way since the latter requires an experiment of the same participant under different conditions, rather than proportions for each condition.}
\footnote{
When modelling Example \ref{ex:Gamblers} the information as to whether the previous bet was successful is not really a `decision' made by the participant. Hence the data is given only as conditionals $P(\Bet \mid \Win)$ and $P(\Bet)$. Nonetheless when modelling with instruments this makes no difference and amounts to modelling $\Win$ the same as a previous decision, providing a distribution $\PMAB$.}]
\label{ex:Gamblers}
In an experiment, participants were asked whether they wish to gamble ($\Bet$) their money in a game. They were told they had played the game previously, with separate conditions under which participants were either told they had won the previous game ($\Win y $), lost the previous game ($\Win n $), our the outcome was unknown to them. 

It was found that the proportion of participants willing to gamble in the unknown condition $P(\Bet y)$ was lower than both the proportion when told they had won previously $P(\Bet y \mid \Win y)$ \emph{and} for when told they had lost previously $P(\Bet y \mid \Win n)$, with respective probabilities: 
\begin{center}
$\begin{array}{l|ccc}
& \gtwo\gam  &  \gtwo\gam \mid \gone\win  & \gtwo \gam \mid \gone\lose\\ \cline{1-4}
P &0.37 &   0.69 & 0.58
\end{array}$
\end{center}
\end{dataexample}
The disjunction effect can be modelled by quantum instruments or non-Bayesian classical instruments allowing interference. 

\begin{example}
A projective quantum model of Example \ref{ex:Gamblers} on $\hilbH=\mathbb{C}^3$  was given by Aerts \cite{aerts2009quantum}, \cite[F.2]{busemeyer2012quantum}. We model $\Win $ as a measurement of the orthonormal basis $\ket{0}, \ket{1}, \ket{2}$ interpreted respectively as 
 knowing the previous game was won, lost, or `neither' (unknown).
We define another basis $\ket{0'}, \ket{1'}, \ket{2'}$ 
 via a unitary $U$ with $\ket{i'} = U^\dagger \ket{i}$ for $i=0,1,2$. given by: 
\[
U = \begin{bmatrix}
    \sqrt{a}       & e^{i \beta} \sqrt{\frac{(1-a)(1-b)}{a}} & -\sqrt{\frac{(1-a)(a+b-1)}{a}}\\
    0      & e^{i \beta} \sqrt{\frac{(a + b -1)}{a}} & \sqrt{\frac{1-b}{a}} \\
    \sqrt{1-a}      & -e^{i \beta} \sqrt{1-b} & \sqrt{a + b -1}
\end{bmatrix}
\]
 where $0 \leq a,b < 1$ and $\beta \in [-\pi, \pi]$. 
 We then model $\Bet$ as the projective measurement $\{\Bet_y,\Bet _n\}$ with 
 \[
B_y = U^\dagger(\ket{0}\bra{0} + \ket{1}\bra{1})U \qquad \qquad B_n=U^\dagger\ket{2}\bra{2}U
\]
and use pure initial state $\rho = \ket{\psi}\bra{\psi}$ where 
$\psi = \psi_0 \ket{0} + \psi_1 \ket{1}$.
For the $P(By)$ condition we require: 
\begin{align*}
P(\Bet y) = \| \Bet_y \psi \|^2 
&= \|(\ket{0}\bra{0} + \ket{1}\bra{1})U \psi \|^2 \\ 
&= 
\left| \sqrt{a} \psi_0 + e^{i \beta} \psi_1
\sqrt{\frac{(1-a)(1-b)}{a}} \right|^2 + \frac{a+b-1}{a} \left|\psi_1\right|^2 \\ 
&=
a\left|\psi_0\right|^2 + b\left|\psi_1 \right|^2 + 2 \cos(\beta)\psi_0\psi_1\sqrt{(1-a)(1-b)}
\end{align*}
in the last step assuming $\psi_0, \psi_1 \in \mathbb{R}$.
 Setting $\psi_0 = \psi_1 = \frac{1}{\sqrt{2}}$ gives: \[
 {P(\Bet y) = \frac{a + b}{2} + \cos(\beta)\sqrt{(1-a)(1-b)}}
 \]
For the $P(\Bet y \mid \Win y)$ condition, after conditioning on outcome $\Win y$ we can assume  $\psi = \ket{0}$ and then calculate as above with $\psi_1 = 0$, obtaining $P(\Bet y \mid \Win y) = a$. For the known lose condition we similarly can set $\psi_0 = 0$ and obtain $P(\Bet y \mid \Win n) = b$. Now $\beta=0.7875 \pi$ produces $P(\Bet y)=0.36$. 
\end{example}

\begin{example} \label{ex:disj-fallacy}
We can give a simple Markov model of any data for a disjunction effect on $S=\{\star, \sany, \sann\}$, by fitting the distributions $P(\Win )$ and $\PAB$ exactly as in \eqref{ex:class-interference}. If we merely wish to fit the distributions $P(\Bet )$, $P(\Bet \mid \Win y)$ and $P(\Bet  \mid \Win n)$ we are free to define $T_\Win $ as any arbitrary channel with $T_\Win(\sany \mid \star) > 0$ and $T_\Win(\sany \mid \star) > 0$. 
\end{example}

\section{Further Cognitive Effects} \label{app:further-effects}

Thanks to Theorem \ref{thm:-simple-M-Model}, none of the cognitive effects of Section \ref{sec:cog-effects} or Appendix \ref{app:cond-data} can structurally rule out classical instrument models. 
In this Section we consider two further aspects of modelling from \cite{busemeyer2012quantum} which one may hope to rule out classical models instead, and see how to describe both aspects diagrammatically in any process theory. In each case  we find once again that while Bayesian models are ruled out, general classical instrument models remain available. 

\subsection{Temporal Bell inequality}

Beyond modelling decisions in discrete sequences, we could consider models of decisions taking place over \emph{time}. 
The time evoluton of the participant's mental state on $S$ from a state $\rho_0$ at time $t_0$ to a state $\rho_1$ at time $t_1$ can be modelled in a process theory by a \emph{time evolution} channel $U$ on $S$:
\[
\input{time.tikz} 
\]

\begin{example}
 In a quantum model a time evolution $U= U(t_1, t_0)$ is a unitary operator determined by the Hamiltonian of the system. In a classical model, $U$ would instead be given by a Markovian time evolution. 
\end{example}

Thus we wish for ways to distinguish data arising from a classical versus quantum time evolution. 
 One way proposed in \cite{atmanspacher2010proposed} is through the so-called `Temporal Bell Inequalities' which, like the conventional Bell inequalities, are structural inequalities satisfied by all Bayesian classical models. Here we will follow the presentation in  \cite[8.1.3]{busemeyer2012quantum}. 

We consider a model $\modelM$ featuring an object $S$ in initial state $\rho$. The state then evolves between three time steps: an initial time $t_1$ followed by an evolution $U_1$ to later time $t_2$ and evolution $U_2$ to a later time $t_3$. At each time $t_1, t_2, t_3$ we respectively apply a decision $D_1, D_2, D_3$, each given by one of the two instruments:
\[
\input{TBIB2.tikz}
\]
 with outcomes $X = \{1,-1\}$. Here the decision $B$ is `thought of as doing `nothing', always returning outcome $1$. 
Like any channels, we can also think of the evolutions $U_1$ and $U_2$ as decisions with a single outcome (more specifically given by time evolution channels). We are then interested in the probability distributions:
\[
P(D_1 \then D_2 \then D_3) := P_\modelM(D_1 \then U_1 \then D_2 \then U_2 \then D_3)
\]
where each $D_1, D_2, D_3 \in \{A,B\}$. Explicitly:
\[
\input{PD1D2D3.tikz}
\]
We write:
\[
P(D_1D_2D_3 = -1) = \sum_{a,b,c \mid abc = -1} P(D_1aD_2bD_3c)
\] 
for the induced probability that the multiplication of the outcomes of each decision is equal to $-1$. The inequality applies only to the distributions $P(A \then A \then B)$, $P(A \then B \then A)$ and $P(B \then A \then A)$, for the three scenarios in which $B$ appears once, and two decisions $A$ are made. 

\begin{lemma}[Temporal Bell Inequality] \label{lem:TBI}
Suppose that the instrument $A$ is \emph{non-disturbing} in that: 
\[
\input{mnondist.tikz}
\]
Then we have:
\begin{equation} \label{eq:TBI}
P(AAB=-1) + P(BAA=-1) \geq P(ABA=-1) 
\end{equation}
\end{lemma}
\begin{proof}
Using that $A$ is non-disturbing and the definition of $B$, note that: 
\[
\input{PABA.tikz}
\]
Now define:
\[
\input{minus1def.tikz} \ \ := \ \ \begin{cases} 1 & xyz = -1 \\ 0 & \text{otherwise}\end{cases}
\qquad \qquad \qquad 
\input{minus1def2.tikz} \ \ := \ \ \begin{cases} 1 & xy = -1 \\ 0 & \text{otherwise}\end{cases}
\]
Then from the above we have: 
\[
P(ABA=-1) \ = \ 
\input{minus1.tikz}
\]
Then $P(BAA=-1)$ and $P(AAB)=-1$ are given similarly by discarding the relevant wire for $B$. Moreover it is easy to see that: 
\[
\input{TBIpf2.tikz}
\]
on any state $\rho$. Indeed for any $(x,y,z) \in X \times X \times X$ if $xz = -1$ then $x \neq z$ and so either $x \neq y$, and $xy = -1$, or $z \neq y$ and $yz = -1$. Hence:
\begin{align*}
\text{LHS \eqref{eq:TBI}}
=
\input{AAB.tikz}  \ \ + \ \ \input{BAA.tikz}  \ \ 
 \geq 
\ \ 
\input{ABA.tikz} = \text{RHS \eqref{eq:TBI}}
\end{align*}
\end{proof}

\begin{corollary}
Any Bayesian model will fail to satisfy the Temporal Bell Inequality.
\end{corollary}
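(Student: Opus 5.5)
The plan is to read the corollary as saying that no Bayesian model can \emph{violate} the Temporal Bell Inequality \eqref{eq:TBI}, i.e.\ every Bayesian model satisfies it. The whole argument is then a reduction to Lemma~\ref{lem:TBI}: it suffices to check that, in any Bayesian model, the instrument $A$ is non-disturbing in the sense required there.

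First, write the Bayesian instrument $A$ in the form \eqref{eq:bayes-inst}: copy the state on $\cHi$, then apply a probability channel $\bayesmAI \colon \cHi \to X$ to one copy. Now discard the outcome wire $X$. Since $\bayesmAI$ is a channel, it preserves discarding, so the discard passes through $\bayesmAI$ and leaves a discard directly on one output of the copy. By the counit law for copying (part of the Markov/cd-category axioms listed for classical objects), copy followed by discarding one output is the identity on $\cHi$. Hence discarding the outcome of $A$ yields $\id{\cHi}$, which is exactly the non-disturbance equation assumed in Lemma~\ref{lem:TBI}.

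The remaining hypotheses of the lemma concern the auxiliary decision $B$ and the evolutions $U_1, U_2$. The trivial decision $B$ returns outcome $1$ and leaves the state unchanged, so it is itself Bayesian. The evolutions $U_1, U_2$ need only be channels, and the proof of Lemma~\ref{lem:TBI} used nothing else about them. Applying the lemma therefore gives $P(AAB=-1) + P(BAA=-1) \geq P(ABA=-1)$ for every initial state $\rho$, with the ordering $\rho, D_1, U_1, D_2, U_2, D_3$ as in the setup.

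I expect no real obstacle. The only point needing care is the interpretive one: the intended reading is that Bayesian models cannot violate \eqref{eq:TBI}. A secondary point is whether the evolutions $U_i$ in a ``Bayesian model'' are required to be anything more than channels; the lemma does not need them to be, so the argument goes through for arbitrary $U_1, U_2$.
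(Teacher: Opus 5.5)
Your proposal is correct and follows the paper's proof. The paper likewise shows that a Bayesian instrument is non-disturbing (discard the outcome, push the discard through the channel $m$, and use the counit law for copying to recover the identity on $S$) and then applies Lemma~\ref{lem:TBI}; your reading of the corollary as ``Bayesian models cannot violate \eqref{eq:TBI}'' is the one that argument actually establishes.
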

\begin{proof}
Any Bayesian instrument is non-disturbing since it shows no interference with the identity channel on $X$. Explicitly: 
\[
\input{bayesnodist.tikz}
\]
\end{proof}
In contrast, quantum models can violate the inequality, including those based on basis measurements, for an example see \cite[8.1.3]{busemeyer2012quantum}. As such, in \cite{atmanspacher2010proposed} the inequality is proposed as a means to identify quantum evolution of a participants mental state. However, more general classical instruments can exhibit interference and are easily seen not to be non-disturbing, and so may fail to satisfy the inequality. Hence once again this result fails to rule out more general classical instruments.
Another issue is that the essence of the temporal Bell inequality derives from the sequential decisions and not the time evolutions $U_1, U_2$. Indeed we did not need to specify any of their properties in the proof above and in fact they could simply be absent with $U_1=U_2 = \id{} $. Hence the inequality can be seen simply as a property of decision data, and not fundamentally about the time evolution.  

We note that further work has been done on identifying quantum aspects of evolution in psychology, including so-called `Quantum Zeno effects' \cite{yearsley2016zeno}; we leave an analysis of such further work for the future. 

\subsection{Changes of perspective} 

Another area of cognition in which quantum models have been applied is in modelling `changes of perspective'. Recall that a channel $U \colon S \to S$ is reversible when it comes with an inverse $U^{-1}$ satisfying $U\circ U^{-1} = U^{-1} \circ U = \id{S}$. In process theories such as $\Class$ and $\QC$ these can be used to relate states as follows.

\begin{examples}
Any two sharp states $a, b$ of a (classical) finite set $S$, corresponding to point distributions, are related by any (non-unique) isomorphism $f \colon S \to S$ with $f \circ a = b$. For any quantum system $\hilbH$ any two pure states $\psi, \phi$ are related by a (non-unique) unitary transformation with $U \circ \psi = \phi$. 
\end{examples}

Any such reversible transformation can be used to represent a change of perspective in the following manner.

\begin{definition}
Let $I$ be an instrument on $S$ and $U$ a reversible channel on $S$. We define the \emph{change of perspective} instrument $U(I)$ on $S$ by: 
\[
\input{change-perspective.tikz}
\]
\end{definition}

The following hypothetical example is used to introduce various aspects of quantum-like modelling in \cite[Section 2]{busemeyer2012quantum}, including changes of perspective, and given a more succinct treatment in \cite{jacobs2017quantum}, which we here follow and generalise. 

Consider a scenario in which a man and a woman are each making the same decision from each of their perspectives. Their decisions are respectively denoted by $\carinst$ and $W$ with outcomes $X$. In the original example they are choosing which car to purchase from options $X=\{a,b,c\}$ from $a$ (Audi), $b$ (BMW) and $c$ (Cadillac). 

We model the man's decision as an instrument $\carinst$ on $S$ with outcomes $X$, with preferences captured via an initial state $\initM$. We consider a reversible channel $U$ on $S$ which captures the change to the woman's perspective, so that the woman's decision is given by the instrument $W = U(\carinst)$. Thus we consider an overall model $\modelM=(S,\initM,\carinst,W)$ of the form below. 
\begin{equation} \label{eq:carmodel}
\input{carmodel.tikz}
\end{equation}
Given such a model $\modelM$, the woman's initial preferences are then in turn derived from the state $\initW := U \circ \initM$ of $S$. Concretely, the man's initial preferences for the options $X$ are given by the distribution $P_\modelM(M)$ over $X$, and the woman's by $P_\modelM(W)$, which we may find to be given as below. 
\[
\input{car-dist.tikz}
\]
Additionally the man, beginning from his own preferences, can first evaluate an option $X$ from the woman's perspective, and then a further option $Y$ from his own perspective. This is captured by the distribution $P_\modelM(W \then M)$ over $X \times Y$ where $Y=X$, given by starting from $\initM$ and then applying $W$ and then $M$. 
\[
\input{WMcarstart.tikz}
\]
\begin{remark}
Alternatively, the woman could consider option $X$ from her own perspective, by starting in initial state $\initW$ and then applying $\carinst$, and then consider option $Y$ from the man's, by switching perspective back via $U^{-1}$. This would yield a distribution $\PMW$ over $X \times Y$.  
In fact we can see that this is precisely $P_\modelM(W,M)$ again, as below. This fact is derived in \cite{jacobs2017quantum} but yet more easily here diagrammatically.  
\[
\input{WMcar3.tikz}
\]
\end{remark}
These change of perspective distributions can be used to model various psychological phenomena. For example, suppose the woman prefers option $c$ (Cadillac). The man's preference for option $c$ may be increased by first considering the woman's preference for $c$, and then his own preferences, represented by a model satisfying the following inequality.
\begin{equation} \label{eq:MW-effect}
\input{WM-effect-simpler.tikz} 
\end{equation}

From the inequality above one can see that this requires an interference effect of $W = U(\carinst)$ on $\carinst$. 

\begin{lemma}
Let $\upd{m}$ be a Bayesian instrument and $U$ a deterministic reversible channel on $S$. Then $U(\upd{m}) = \upd{m \circ U}$ and in particular is again a Bayesian instrument.
\end{lemma}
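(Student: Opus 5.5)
The plan is to compute $U(\upd{m})$ directly from the definition of a change of perspective. I read that definition as conjugation by $U$: first apply $U$ to $S$, then the instrument, then $U^{-1}$ to the output copy of $S$. In formulas this is $U(I) = (\id{X} \otimes U^{-1}) \circ I \circ U$. This matches the statement we want, since the expected answer $\upd{m \circ U}$ measures $m$ after applying $U$.

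First I substitute the Bayesian form \eqref{eq:bayes-inst}, $\upd{m} = (m \otimes \id{S}) \circ \tinycopy_S$, into this definition. The argument then reduces to sliding $U$ through the copy map. The key lemma needed is that a deterministic channel $f \colon S \to T$ in $\Class$ commutes with copying:
\[
\tinycopy_T \circ f = (f \otimes f) \circ \tinycopy_S .
\]
I would check this pointwise. Both sides send $s \in S$ to the point distribution at $(f(s), f(s))$, because $f(t \mid s) = \delta_{f(s),t}$. This is the only place where determinism of $U$ is used; for a genuinely probabilistic $U$ the identity fails.

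Using the lemma, the composite becomes
\[
(\id{X} \otimes U^{-1}) \circ (m \otimes \id{S}) \circ \tinycopy_S \circ U
= (m \otimes U^{-1}) \circ (U \otimes U) \circ \tinycopy_S .
\]
By functoriality of $\otimes$ this equals $\left( (m \circ U) \otimes (U^{-1} \circ U) \right) \circ \tinycopy_S = ((m \circ U) \otimes \id{S}) \circ \tinycopy_S$. Since $m \circ U \colon S \to X$ is again a probability channel, this is exactly the Bayesian instrument $\upd{m \circ U}$, by Example \ref{ex:c-updates}.

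In diagrams, the proof amounts to pulling the $U$ box up through the copy dot, so that one copy lands below $m$ and the other cancels against $U^{-1}$. The main point to check carefully is the convention for $U(I)$, namely which side carries $U$ and which carries $U^{-1}$. If the convention were the opposite, the identical argument would give $\upd{m \circ U^{-1}}$, and the statement's form with $U$ fixes the convention above.
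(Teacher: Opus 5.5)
Your proof is correct and is essentially the paper's own diagrammatic argument. With $U(I) = (\mathrm{id}_X \otimes U^{-1}) \circ I \circ U$, you slide $U$ up through the copy map (deterministic channels are copyable) and cancel $U^{-1} \circ U$ on the output wire, which leaves the Bayesian instrument for $m \circ U$.
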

\begin{proof}
\[
\input{change-perspective-bayes.tikz}
\] 
\end{proof}
Since Bayesian models admit no interference effects, Bayesian models (with deterministic reversible channels) are therefore unable to account for change of perspective effects.
Instead quantum projective models have been applied to yield models such as the following.

\begin{example}
The following quantum projective model of the above scenario is spelled out geometrically in \cite[Section 2]{busemeyer2012quantum} and more succinctly in terms of `effect logic' in \cite{jacobs2017quantum}, which we recount here.\footnote{In our terminology the latter is a special case where the instruments are given by quantum update instruments; see Section \ref{sec:meas-models}. Our presentation here extends this beyond updating to general (quantum, classical or otherwise) instruments exhibiting interference.}  
We model $\carinst$ as a measurement on $\hilbH=\mathbb{C}^3$ for the orthornomal basis $\ket{0} = \ket{a}$, $\ket{1} = \ket{b}$, $\ket{2} = \ket{c}$. We set $\initM = \ket{m}\bra{m}$, $U$ to be a channel induced by a unitary $U$ given below, so that $\initW = U^\dagger M U = \ket{w}\bra{w}$ with $\ket{w} = U^\dagger \ket{m}$, as follows.

\[ 
\ket{m} = \begin{bmatrix} -0.6963 \\ 0.6963 \\ 0.1741 \end{bmatrix}
\qquad \qquad 
U = \begin{bmatrix} 
\frac{1}{\sqrt{2}} & \frac{1}{2} & \frac{-1}{2} \\ 
\frac{1}{\sqrt{2}} & \frac{-1}{2} & \frac{1}{2} \\ 
0 & \frac{1}{\sqrt{2}} & \frac{1}{\sqrt{2}}
 \end{bmatrix}
 \qquad \qquad 
 \ket{w} = \begin{bmatrix} 0 \\ -0.5732 \\ 0.8194 \end{bmatrix}
\]
Then we obtain the following probabilities:
\begin{align*}
\PM(a) &= \left \lvert {\braket{a \mid m}} \right \rvert^2 = 0.485 &  
\PW(a) &= \left \lvert {\braket{a \mid w}} \right \rvert^2 = 0 \\ 
\PM(b) &= 0.485 &
\PW(b) &= 0.329\\  
\PM(c)& = 0.030 & 
 \PW(c) &= 0.671
\end{align*}
The distribution $\PWM$ has for example:
\[
\PWMcc = \left|\braket{c \mid w}\right|^2  \left|\braket{c \mid U^\dagger c}\right|^2 
=
\left|\braket{c \mid w}\right|^2  \left|\braket{c \mid U c}\right|^2 
=
0.671 \times 0.5
=
0.336
\]
so that when the man takes the woman's perspective into account, the probability of choosing $c$ is increased from $0.030$ to $0.336$.
\end{example}

We might wonder whether the ability to model interference effects specifically with change of perspective instruments (i.e.~giving a model $\modelM = (\initM, M, W)$ with $W=U(M)$ of the form \eqref{eq:carmodel}) is a quantum feature not possible for classical models. However, we can in fact do so using general non-Bayesian classical instruments, thanks to the following result. 

\begin{proposition}
For any classical instrument model $\modelM = (S, \rho, A, B)$ one can specify a classical model $\modelM' = (S', \rho', A', B')$ where $B':= U(A')$ for an isomorphism $U$ on $S'$, inducing the same distributions $P_\modelM=P_{\modelM'}$, i.e.~$P_\modelM(A \then B) = P_{\modelM'}(A' \then B')$, $P_\modelM(B \then A)=P_{\modelM'}(B' \then A')$ etc. 
\end{proposition}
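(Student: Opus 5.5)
The plan is to double the state space and let the isomorphism $U$ act as a ``flag swap'' that selects which of the two original instruments is applied. Set $S' := S \times \{0,1\}$, the disjoint union of two copies of $S$, with initial state $\rho' := \rho \otimes \delta_0$. Let $U \colon S' \to S'$ be the deterministic reversible channel $(s,i) \mapsto (s, 1-i)$, so that $U^{-1} = U$. The instrument $A'$ will be built so that on the $0$-copy it behaves like $A$ and on the $1$-copy it behaves like $B$, while the flag is never changed:
\[
A'(x, (s',j) \mid (s,i)) := \delta_{i,j} \cdot \begin{cases} A(x, s' \mid s) & i = 0 \\ B(x, s' \mid s) & i = 1. \end{cases}
\]
This requires $X_A = X_B$. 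If they differ, the first step is to replace both outcome sets by $X_A \sqcup X_B$, extending $A$ and $B$ by zero on the extra outcomes; the induced distributions then agree with the originals on their supports. Since $A'$ is a channel on each block, it is a classical instrument.

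Next, compute $B' := U(A')$, the instrument that applies $U$, then $A'$, then $U^{-1}$ to the output state. On an input $(s,0)$, the map $U$ sends it to $(s,1)$; then $A'$ acts as $B$ and keeps flag $1$; then $U^{-1}$ restores flag $0$. So on the $0$-copy we get $B'(x,(s',0)\mid(s,0)) = B(x,s'\mid s)$, and symmetrically $B'$ acts as $A$ on the $1$-copy. The key invariant is that both $A'$ and $B'$ preserve the flag, and on flag $0$ they act exactly as $A$ and $B$ respectively.

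The final step is a short induction on sequence length. For any sequence of decisions starting from $\rho' = \rho\otimes\delta_0$, the state stays supported on $S\times\{0\}$ throughout. On that copy the primed instruments coincide with the original ones, so every joint sub-distribution (state together with outcomes so far) agrees with the corresponding one in $\modelM$, tensored with $\delta_0$. Discarding $S'$ at the end then gives $P_{\modelM'}(D_1 \then \dots \then D_n) = P_\modelM(D_1 \then \dots \then D_n)$ for any word in $A, B$, in particular $A\then B$ and $B\then A$. Diagrammatically, this is the statement that $A' \circ (\id{S}\otimes\delta_0) = (A \otimes \delta_0)$, and similarly for $B'$ with $B$, which lets the point state slide up through each instrument.

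The main obstacle is bookkeeping rather than any conceptual difficulty: checking the direction convention for $U(I)$ (whether it is $U^{-1} I U$ or $U I U^{-1}$), and handling mismatched outcome sets. The involution choice makes the convention irrelevant, since $U = U^{-1}$.
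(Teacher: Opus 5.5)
Your proposal is correct and is essentially the paper's own construction. Your $S \times \{0,1\}$ is the paper's $S_A \coprod S_B$, your flag swap is its involution $U(s_A)=s_B$, $U(s_B)=s_A$, and your $A'$ acts as $A$ on the first copy and as $B$ on the second, with $\rho'$ supported on the first copy; your padding of the outcome sets to $X_A \sqcup X_B$ handles correctly a detail the paper leaves implicit.
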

\begin{proof}
Define: 
\[
S'= S_A \coprod S_B \qquad S_A := \{s_A \mid s \in S\} \qquad S_B :=\{s_B \mid x \in S\}
\]
For $s, s' \in S$  and $x \in X_A, y \in X_B$, define: 
\begin{align*}
U(s_A) &= s_B & \rho'(s_A)&=\rho(s) & A'(s'_A, x \mid s_A) &= A(s', x \mid s) \\ 
U(s_B) &= s_A & \rho'(s_B)&=0 & A'(s'_B, y \mid s_B) &= B(s', y \mid s)
\end{align*}
Then one may verify the result straightforwardly. By construction the initial state $\rho$ begins in $S_A$ and the state remains there when applying $A'$ or $B'=U(A')$. The instrument $A'$ applies $A$ within $S_A$, while $B'$ moves the state via $U$ to $S_B$ then applies $B$ and moves back to the corresponding state in $S_A$. Hence we obtain $P_\modelM=P_{\modelM'}$. 
\end{proof}

In particular, we can model interference via change of perspective such as \eqref{eq:MW-effect} classically by first defining a classical instrument model $(\rho,M,W)$ of any given distributions $P(M)$, $P(W)$ and $\PWM$ (in particular those with interference effects) and then using the above result to define an equivalent model $\modelM = (S',\rho',M',W')$ yielding the same distributions and now with $W'=U(M')$. Thus, like other sequential effects, changes of perspective rule out Bayesian models but can be handled by more general classical instruments.

\section{Proofs} \label{app:proofs}

\begin{lemma} \label{lem:full-faithful-embedding}
Any probabilistic process theory $\catC$ has a full and faithful monoidal embedding $\Class \hookrightarrow \catC$.  
\end{lemma}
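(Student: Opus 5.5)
We define the functor $F \colon \Class \to \catC$ on objects by sending each finite set $X$ to the classical object $X$ of $\catC$. A morphism $M \colon X \to Y$ of $\Class$, i.e.~a matrix of values $M(y \mid x) \in [0,1]$ with $\sum_y M(y\mid x) \leq 1$, is sent as follows. For each $x$, the sub-distribution $(M(y \mid x))_{y}$ corresponds by axiom \eqref{enum:states-to-sub-dist} to a unique state $M_x$ of $Y$. We then use the copower property of classical objects (with $A = I$) to obtain the unique morphism $F(M) \colon X \to Y$ with $F(M) \circ x = M_x$ for all $x \in X$.

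By construction, and using \eqref{eq:jointmonic} together with the states correspondence, $F(M)$ is characterised by its values $y \circ F(M) \circ x = M(y \mid x)$, as in \eqref{eq:class-values}. Conversely, any $f \colon X \to Y$ in $\catC$ yields the values $f(y \mid x) := y \circ f \circ x$. For each $x$, $f \circ x$ is a state of $Y$ and hence a sub-distribution, so these values define a morphism of $\Class$. Again by uniqueness in the copower property and axiom \eqref{enum:states-to-sub-dist}, these two assignments are mutually inverse. This gives fullness and faithfulness at once, with faithfulness amounting to the statement that $f$ is determined by its matrix entries.

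For functoriality we must check two things.
\begin{itemize}
\item Identities: $F(\id{X}) = \id{X}$ follows since $y \circ x = \delta_{x,y}$ by \eqref{eq:deltaxy}.
\item Composition: $F(N \circ M) = F(N) \circ F(M)$. It suffices to show that the state $F(N) \circ M_x$ has entries $\sum_y N(z\mid y)M(y\mid x)$, i.e.~that postcomposition of a state $\omega$ of $Y$ with $F(N)$ is computed by the matrix formula.
\end{itemize}

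The composition check is the main obstacle, since $\catC$ has no assumed addition on morphisms. The plan is to decompose $\omega$ in terms of the point states $y$, exploiting the copower structure. Write $\omega$ as a classical "mixture", namely $\omega$ itself viewed as the composite of the copy map followed by testing. Concretely, consider the morphism $g \colon Y \to Z$ defined by the copower property with $g \circ y = F(N)\circ y$, and compare $z \circ F(N) \circ \omega$ with the state of $Y \otimes Z$ obtained by applying copy then $F(N)$ on one leg.

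Applying the effect $y \otimes z$ to the latter gives the scalar $\omega(y)\, N(z \mid y)$, using that copy sends $y$ to $y \otimes y$. Discarding $Y$ instead, the sum rule \eqref{enum:sum-rule}, applied to the induced states of $Y$ after fixing $z$, gives $\sum_y \omega(y) N(z\mid y)$. If needed, normalisation \eqref{enum:normalisation} and cancellativity \eqref{enum:scalar-cancellative} can be used to reduce to normalised states.

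Monoidality is handled similarly. It requires $F(M \otimes N)$ to equal $F(M) \otimes F(N)$ and $F$ to respect swaps. We verify this on point states $(x,w) = x \otimes w$ and test with effects $y \otimes z$, which gives the Kronecker product $M(y\mid x)N(z\mid w)$ by multiplicativity of scalars. We then conclude by \eqref{eq:jointmonic}, using $X\otimes Y = X\times Y$. Finally, discarding maps to discarding, since $\discard{} \circ x = 1$.
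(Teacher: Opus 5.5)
Your proposal is correct and follows essentially the same route as the paper: $F(M)$ is the unique morphism with $y \circ F(M) \circ x = M(y \mid x)$, and composition is checked by copying the intermediate classical wire and applying the sum rule. Two differences are worth noting. You make explicit what the paper leaves implicit, namely the use of axiom~\eqref{enum:states-to-sub-dist} together with the copower property, and the resulting bijection on hom-sets that gives fullness and faithfulness. The one step to state more carefully is the identity $(y \otimes h)\circ \mathrm{copy} = (h \circ y)\cdot y$ for effects $h$, which follows by testing on point states and using copower uniqueness, not by decomposing $\omega$ as a mixture, since no addition of morphisms is available.
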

\begin{proof}
We send each finite set $X$ to the corresponding classical object of $\catC$. We send each sub-channel $M \colon X \to Y$ in $\Class$ to the unique such morphism in $\catC$ satisfying \eqref{eq:class-values}. Indeed since $M(y \mid x)$ is a sub-distribution of $Y$, for each $x \in X$, there exists a unique such morphism in $\catC$ by axiom \eqref{enum:states-to-sub-dist} of Def.~\eqref{def:prob-PT}. It is easy to see that this mapping respects identities, $\otimes$ and $\discard{}$. Finally we check that it respects composition. For any $M \colon X \to Y$ and $N \colon Y \to Z$ in $\catC$ we have: 
\[
\input{functor.tikz}
\]
\end{proof}

\begin{proof}[Proof of Proposition \ref{prop:probPT}]

\eqref{enum:sum-rule}, \eqref{enum:normalisation}: 
Let $(\omega(x))_{x \in X}$ be a non-zero sub-distribution over $X$. Write $p = \sum_{x \in X} \omega(x)$. Define $(\hat \omega(x))_{x \in X}$ to be the normalised distribution over $X$ with $\hat \omega(x) := \frac{\omega(x)}{p}$ for all $x \in X$. Then by axiom \eqref{enum:states-to-sub-dist} $\hat \omega$ corresponds to a normalised state of $X$. Then since $\omega(x) = \hat \omega(x) \cdot p$ for all $x \in X$ we have $\omega = p \cdot \hat \omega$ as states of $X$, again by axiom \eqref{enum:states-to-sub-dist}.  But then since $\hat \omega$ is a normalised state $\discard{} \circ \omega = p$ as required for \eqref{enum:sum-rule}, and then \eqref{enum:normalisation} holds with $\normop(\omega) = \hat \omega$. 

\eqref{enum:unif-state}: a special case of axiom \eqref{enum:states-to-sub-dist}.

\eqref{enum:joint-monic}:
Recall the `cap' process $\tinycap \colon X \times X \to I$ defined by \eqref{eq:classbend}.
Then composing with each state for $x \in X$ one may verify that: 
\begin{equation} \label{eq:snake}
\input{snake2.tikz}
\end{equation}
Now $f, g \colon A \to X \otimes B$ agree on all $x \colon X \to I$ as in the LHS of \eqref{eq:jointmonic} iff $f, g$ agree when composing with $\tinycap$. Composing again with $\tinycopy \circ \discardflip{}$ and applying \eqref{eq:classbend} we obtain $\frac{1}{|X|} \bullet f = \frac{1}{|X|} \bullet g$. Hence $f =g $ by axiom \eqref{enum:scalar-cancellative}.
Axiom \eqref{enum:scalar-cancellative}: 
Suppose that $f, g \colon A \to B$ have $p \cdot f = p \cdot g$ for some  $p \in (0,1]$. Letting $\frac{1}{n} < p$ and multiplying by $\frac{1}{pn}$ we obtain $\frac{1}{n} \cdot f = \frac{1}{n} \cdot g$. Then letting $X$ be a classical object with $|X|=n$ we have:
\[
\input{fgx.tikz}
\] 
for all $x \in X$, using \eqref{enum:unif-state}. Hence $\discardflip{} \otimes f = \discardflip{} \otimes g$ by \eqref{enum:joint-monic}. Applying $\discard{X}$ we have $f = g$. 

Axiom \eqref{enum:states-to-sub-dist}:
Any state $\omega$ of $X$ is uniquely determined by the collection of probabilities $(\omega(x))_{x \in X}$ thanks to \eqref{enum:joint-monic}. Moreover since $\sum_{x \in X} \omega(x) = \discard{} \circ \omega \in [0,1]$ these are indeed a sub-distribution, and a normalised distribution precisely when $\omega$ is a normalised state. Conversely, let $(\omega(x))_{x \in X}$ be any sub-distribution. It remains to show there exists a state $\omega$ of $X$ satisfying \eqref{eq:probs}. For this, first define $e \colon X \to I$ by $e \circ x = \omega(x)$ for all $x$, and then define $\omega$ to be the normalisation of the state:
\[
\input{def2.tikz}
\]
\end{proof}

\begin{proof}[Proof of Theorem \ref{thm:-simple-M-Model}]
Using that the data satisfies no-backwards signalling, for each subsequence $\q_1,\dots,\q_k$ of a full-length sequence in the data define a distribution $P(\q_1 \an_1 \thenv \dots \thenv \q_k \an_k)$ over $\ans_1 \times \dots \times \ans_k$ by \eqref{eq:subseq-dist}. 

Each element $s$ of $\Hi$ is by definition a sequence $\q_1 \an_1 \dots \q_k \an_k$ of decision-outcome pairs $(\q_j, \an_j)$
 of length $k \leq n-1$. Define $\star = ( ) \in \Hi$ as the empty sequence.
Define each $f_\q = f \colon \Hi \to \A$ to send a sequence $s$ to its last answer $\an_k$, and set $f(\star)$ to be an arbitrary element of $\Hi$. For each $\q \in \Q$ and $s \in \Hi$ define 
\begin{equation} \label{eq:def-cond-prob}
\ctrans{\q}(\hi' \mid \hi) := 
\frac{P(\q_1 \an_1 \then \dots \thenv \q_{k+1} \an_{k+1})}{P(\q_1 \an_1 \then \dots \thenv \q_k \an_k)}
\end{equation}
for each pair with $s = \q_1 \an_1 \dots \q_k \an_k \in \events_k$  and
\begin{equation}
\hi' = 
\begin{cases}
(\q_1 \an_1 \dots \q_k \an_k\q\an_{k+1}) & 1 \leq k < n-1 \\ 
(\q,\an_{k+1}) & k \in \{0,n-1\}
\end{cases}
\end{equation}
where $(\q_1 \an_1 \dots \q_k \an_k\q\an_{k+1})\in \events_{k+1}$, setting $\ctrans{\q}(\hi' \mid \hi) = 0$ otherwise. For the case $k=0$ we mean that $\ctrans{\q}(\q \an_1 \mid \star) = \data(\q \an_1)$. We now verify this model fits the data . Setting $\hi_0 = \star$ we have:
\begin{align*}
\sum_{\hi_1,\dots,\hi_n \text{ s.t. } 
f(\hi_j) = \an_j } \prod^n_{j=1} \ctrans{\q_j}(\hi_j \mid \hi_{j-1})
&= \prod^n_{k=0} 
\frac{\data(\q_1 \an_1 \thenv \dots \thenv \q_{k+1} \an_{k+1})}{\data(\q_1 \an_1 \dots \q_k \an_k)} \\ 
&= \data(\q_1 \an_1 \thenv \dots \thenv \q_{n} \an_{n})
\end{align*}
\end{proof}

\begin{proof}[Proof of Theorem \ref{Thm:alt-two}]
Let $(\an_1,\dots,\an_k)$ be an alternating sequence of length $1 \leq k \neq n$. 
For any outcomes $\an_1,\dots, \an_k \in \ans$ define $P(\q_1 \an_1 \thenv \dots \thenv \q_k \an_k)$ just as in \eqref{eq:subseq-dist}, where $(\q_1,\dots,\q_n)$ is the unique extension to an alternating sequence.
For each sequence $(\an_1,\dots,\an_k)$ of outcomes of length $1 \leq k \neq n$, denote the corresponding state in $\Hi$ by $\hi = \san_1\dots\san_k$. Define $\star = ( \ )$ as the empty sequence. 
Define $f(\hi) = \an_k$.
For $k= 0$ we mean $\hi = \star$ and define $f(\star)$ arbitrarily.

We define each $T_\q$ as follows. 
Set $T_\q(\hi' \mid \hi) = 0$ unless
\begin{equation} 
\hi' = 
\begin{cases}
\san_1 \dots \san_k \san_{k+1} & 1 \leq k < n-1 \\ 
\san_{k+1} & k=n-1
\end{cases}
\end{equation}
When $\hi'$ is of the above form, define $\ctrans{\q}(\hi' \mid \hi)$ by the expression \eqref{eq:def-cond-prob}
where the sequence $(\q_1,\dots,\q_{k+1})$ is given by $(\q^\bot,\q,\dots,\q^\bot, \q)$ when $k$ is even and $(\q,\q^\bot,\dots,\q^\bot, \q)$ when $k$ is odd, always ending in $\q$. Here we mean $\ctrans{\q}(\an_1 \mid \star) = \data(\q \an_1)$. 

Now let us verify that this yields a valid model of the data. 
Inspecting the values for which the sum below is non-zero we see the following. 
\begin{align*}
& \sum_{\hi=(\hi_j)^n_{j=1} \text{ s.t. } 
f_{\q_j}(\hi_j) = \an_j } \ctrans{\q_j}(\hi_j \mid \hi_{j-1})
\\ 
&= 
\ctrans{\q_1}(\san_1 \mid \star) 
\ctrans{\q_2}(\san_1 \san_2 \mid \san_1)
\dots
\ctrans{\q_{n-1}}(\san_1\dots\san_{n-1} \mid \san_1\dots\san_{n-2})
\ctrans{\q_n}(\san_n \mid \san_1 \dots \san_{n-1})
\\ 
&=
\data(\q \an_1) \prod^{n-1}_{k=1} \frac{P(\q_1 \an_1 \thenv \dots \thenv \q_{k+1} \an_{k+1})}{P(\q_1 \an_1 \thenv \dots \thenv \q_k \an_k)}
\\ 
&= \data(\q_1 \an_1 \thenv \dots \thenv \q_n \an_n)  
\end{align*}
\end{proof}

\begin{proof}[Proof of Theorem \ref{thm:RRE-no-order}]
Let $(\povmA_x)_{x \in X}$ and $(\povmB_y)_{y \in Y}$ be POVMs on $\hilbH$. If the conditions are satisfied then RRE is satisfied by Proposition \ref{prop:RRE} \eqref{enum:idem-com-RRE}. Conversely, suppose that they satisfy RRE in the state $\rho$. Let $a_x = \sqrt{\povmA_x}$ and $b_y = \sqrt{\povmB_y}$ for each $x \in X, y \in Y$.
First assume $\rho = \ket{\psi}\bra{\psi}$ is a pure state corresponding to a normalised vector $\psi$. RRE on $\rho$ amounts to precisely the following equalities (and similar swapping $\povmA$ and $\povmB$).
\begin{align}
\|a_y a_x \psi\|^2 = \delta_{x,y} \|a_x \psi\|^2 \label{eq:pf-rep} \\ 
\|a_z b_y a_x \psi\|^2 = \delta_{x,z} \|b_y a_x \psi\|^2 \label{eq:pf-RRE}
\end{align}
Since the $\povmA_x$ form a POVM we have: 
\[
a_x \psi = \sum_y a^2_y a_x \psi = a^3_x \psi
\]
using that $a_y a_x \psi = 0$ for $x \neq y$ by repeatability \eqref{eq:pf-rep}. It follows from Lemma \eqref{lem:helpful} that we have:
\begin{equation}
a_x^2 \psi = a_x \psi
\end{equation}
for all $x$. Combined with \eqref{eq:pf-rep} it follows that $\hat \povmA$ is idempotent on $\rho$. It also follows that they show no order effect on $\rho$ since:
\begin{align*}
\| a_x b_y \psi \| &= \left\| \sum_z a_x b_y a_z \psi \right\| \\ &=_\eqref{eq:pf-RRE} \| a_x b_y a_x \psi \|  \\  &=_\eqref{eq:pf-RRE} \|b_y a_x \psi \|
\end{align*}
More strongly we will show that they commute on $\rho$. First note that we have:
\begin{equation} \label{eq:step}
a_x b_y \psi = \sum_z a_x b_y a_z \psi = a_x b_y a_x \psi
\end{equation}
Now from \eqref{eq:pf-RRE} it follows that for $x \neq z$ we have:
\[
a_x a^2_z b_y a_x \psi = 0
\]
Hence:
\[
a_x b_y \psi 
=_\eqref{eq:step} a_x b_y a_x \psi
= \sum_z a_x a^2_z b_y a_x \psi = a^3_x b_y a_x \psi 
= a^2 (a_x b_y a_x) \psi 
= a^2_x a_x b_y \psi 
= a^3_x b_y \psi
\]
It follows from Lemma \ref{lem:helpful} again that $a_x b_y \psi = a^2_x b_y \psi$. But then: 
\[
a_x b_y \psi 
= a^2_x b_y \psi 
= a^2_x b_y a_x \psi 
= \sum_z a^2_z b_y a_x \psi 
= b_y a_x \psi
\]
and so $a_x, b_y$ commute on $\psi$ as desired. 

Finally, we claim the result generalises to the general case where $\rho$ is a density matrix. Write $\rho = \sum p_j \ket{\psi_j}\bra{\psi_j}$ as a convex mixture of pure states where each $p_j > 0$. Then we claim that if $(\hat \povmA, \hat \povmB, \rho)$ satisfies RRE then so does $(\hat \povmA, \hat \povmB, \ket{\psi_j} \bra{\psi_j})$ for each $j$. Indeed repeatability and RRE amount to precisely the following equalities. 
\begin{align*}
\sum_j p_j \|a_y a_x \psi_j\|^2 &= \delta_{x,y} \sum_j p_j \|a_x \psi_j\|^2 \\ 
\sum_j p_j \|a_z b_y a_x \psi_j \|^2 &= \delta_{x,z} \|b_y a_x \psi_j\|^2
\end{align*}
Since each operator $a_y$ has norm $\leq 1$ we have $\|a_y \phi\|^2 \leq \|\phi\|^2$ for all $\phi$. Hence in each of the above equations equality can only hold if each $j$ indexed summand is an equality. Hence $(\hat \povmA, \hat \povmB, \ket{\psi_j} \bra{\psi_j})$ satisfies RRE for each $j$. Hence $\hat \povmA, \hat \povmB$ are idempotent and commute (exhibit no order effect) on $\ket{\psi_j}\bra{\psi_j}$ for all $j$, from which the result follows for the mixture $\rho$.
\end{proof}

\begin{proof}[Proof of Proposition \ref{prop:RRE} \eqref{enum:cor-meas-POVM}]
Let $\measm=(\measm_x)_{x \in X}$ be a POVM on $\hilbH$ for which $\hat \measm$ is repeatable, and $\psi$ be a unit vector in $\hilbH$. Then since $\hat \measm$ is repeatable on $\ket{\psi}\bra{\psi}$, just as in the proof of Theorem \ref{thm:RRE-no-order} we have $\measm_x \circ \measm_y \circ \psi = 0$ for $x \neq y$ and $\measm^2_x \psi = \measm_x \psi$, for all $x$. Since $\psi$ was arbitrary, $\measm$ is a projection. 

Conversely, if $\measm$ is a projective measurement then $\hat \measm$ is repeatable by Proposition \ref {prop:RRE} \eqref{enum:proj-q-idem}. 
\end{proof}

\begin{proof}[Proof of Theorem \ref{thm:joint-dec-thm}]
Treat the joint decision data as a sequential dataset of distributions $\PAB := P(A \jdec B) =: \PBA$ for $A \in \mathcal{A}, B \in \mathcal{B}$. Then no signalling \eqref{eq:no-signalling} now ensures that the data satisfies no backward-signalling, and so we can define a simple Markov model as in Theorem \ref{thm:-simple-M-Model}. That is, set 
\[ 
S = \{\star\} \cup \{Ax\}_{A \in \mathcal{A}, x \in X_A} \cup \{Bx\}_{B \in \mathcal{B}, x \in X_B} \cup \{AxBy\}_{A \in \mathcal{A}, B \in \mathcal{B}, x \in X_A, y \in X_B}
\]
For each $A \in \mathcal{A}$, define $\markovfA(AxBy) = x = \markovfA(A_x)$ and define $\markovfA(s)$ arbitrarily otherwise. Then define
\[
\markovTA(s' \mid s) := 
\begin{cases}
P(Ax) & s= \star, s'=Ax \\ 
P(Ax \thenv By \mid By) & s'=AxBy, s=By \\ 
\delta_{s,s'} & \text{otherwise}
\end{cases}
\]
Then one may verify that this forms a model of the $\PAB, \PBA$ data as required, and for for $A \in \mathcal{A}, B \in \mathcal{B}$ that $\markovTA \circ \markovTB = \markovTB \circ \markovTA$ and so $A$ and $B$ commute. 
\end{proof}

\begin{lemma} \label{lem:helpful}
If $a$ is a positive bounded operator on a Hilbert space and $a^3 \psi = a \psi$ then $a^2 \psi = a \psi$. 
\end{lemma}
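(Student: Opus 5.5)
The plan is to diagonalise $a$ and argue eigenvector by eigenvector. Since the paper only uses this lemma for operators on the finite-dimensional space $\hilbH$, I would first treat that case and remark on the general bounded case at the end. By the spectral theorem, a positive operator $a$ has an orthonormal eigenbasis $\{e_i\}$ with eigenvalues $\lambda_i \geq 0$. Expand $\psi = \sum_i c_i e_i$. Then the hypothesis $a^3\psi = a\psi$ reads $\sum_i (\lambda_i^3 - \lambda_i) c_i e_i = 0$. Hence for every $i$ with $c_i \neq 0$ we get $\lambda_i(\lambda_i^2 - 1) = 0$.

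Next, use positivity to pin down these eigenvalues. Since $\lambda_i \geq 0$, the condition $\lambda_i(\lambda_i^2-1)=0$ forces $\lambda_i \in \{0,1\}$; the root $-1$ is exactly what positivity rules out. For such $\lambda_i$ we have $\lambda_i^2 = \lambda_i$. Therefore
\[
a^2\psi = \sum_i \lambda_i^2 c_i e_i = \sum_i \lambda_i c_i e_i = a\psi ,
\]
which is the claim.

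For a general (possibly infinite-dimensional) bounded positive $a$, I would avoid the eigenbasis and argue via inner products. Set $\phi := a^2\psi - a\psi$. Then
\[
\|\phi\|^2 = \langle \psi, (a^4 - 2a^3 + a^2)\psi\rangle .
\]
Applying the hypothesis $a^3\psi = a\psi$ gives $a^4\psi = a\psi^{\prime}$ with $\psi^{\prime} = a\psi$, i.e.\ $a^4\psi = a^2\psi$. Substituting, $\|\phi\|^2 = \langle\psi, (a^2 - 2a + a^2)\psi\rangle = 2\langle\psi,(a^2-a)\psi\rangle$.

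It then remains to show $\langle \psi, (a^2 - a)\psi\rangle = 0$. I would do this using $b := a^{1/2}$, noting that $\langle\psi,(a^3-a)\psi\rangle = 0$ and that $a^3 - a = a(a-1)(a+1)$. The main obstacle is precisely this step: extracting $\langle\psi,(a^2-a)\psi\rangle = 0$ from the cubic relation. The cleanest route is the functional calculus. The spectral measure $\mu_\psi$ of $\psi$ satisfies $\int (t^3-t)^2\,d\mu_\psi = \|a^3\psi - a\psi\|^2 = 0$, so $\mu_\psi$ is supported on $\{t \geq 0 : t^3 = t\} = \{0,1\}$. On that support $t^2 = t$, so $\|a^2\psi - a\psi\|^2 = \int (t^2-t)^2\,d\mu_\psi = 0$. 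This spectral-measure argument handles both cases uniformly, so I would present it as the main proof.
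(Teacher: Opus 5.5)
Your argument is correct, and your first two paragraphs are exactly the paper's proof. The paper expands $\psi$ in an orthonormal eigenbasis of $a$, notes that $c_i \neq 0$ forces $\lambda_i(\lambda_i^2-1)=0$, uses $\lambda_i \geq 0$ to get $\lambda_i \in \{0,1\}$, and concludes $\lambda_i^2 c_i = \lambda_i c_i$. Where you go further is in making the spectral-measure version your main proof. The paper simply assumes an orthonormal eigenbasis exists, which is only guaranteed in finite dimensions; that is enough for its use in the proof of Theorem~\ref{thm:RRE-no-order}, where the Hilbert space is finite-dimensional. Your computation $\int (t^3-t)^2\,d\mu_\psi = \|a^3\psi - a\psi\|^2 = 0$ shows that $\mu_\psi$ is concentrated on $\sigma(a)\cap\{t : t^3=t\} \subseteq \{0,1\}$. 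This genuinely covers the lemma as stated, for arbitrary bounded positive operators, at the cost of invoking the spectral theorem. The unfinished inner-product detour can simply be deleted. If you want the general case without any spectral theory, set $v := a^2\psi$. The hypothesis gives $a^2 v = a(a^3\psi) = a^2\psi = v$, so $(a+1)(a-1)v = 0$. Since $a+1 \geq 1$ is injective, $av = v$, and hence $a\psi = a^3\psi = av = v = a^2\psi$.
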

\begin{proof}
Choose an orthonormal basis $e_i$ of eigenvalues of $a$ with $a e_i = \lambda_i e_i$ with all $\lambda_i \geq 0$. Write $\psi = \sum c_i e_i$. Then 
\[
a^3 \psi - a \psi = \sum_i \lambda_i (\lambda_i^2 -1) c_i e_i = 0
\]
Hence for all $i$ either $c_i  = 0$ or $\lambda_i \in \{0,1\}$. In either case we have $\lambda_i^2 c_i = \lambda_i c_i$ for all $i$. Hence 
\[
a^2 \psi = \sum c_i \lambda_i^2 e_i = \sum c_i \lambda_i e_i = a \psi
\]
as required. 
\end{proof}

\section{Markov Models satisfying RRE} \label{sec:general-RRE-Markov-models}

We have seen that any decision data satisfying no-signalling from the future can be given a classical model. If we wish to take the subjects memories of their previous answers into account, it is natural to seek classical models moreover satisfying RRE. By using a slightly larger state space we can indeed provide classical models with this and further properties.

\begin{proposition} \label{prop:alt-two-RRE}
Any decision data of the form $\{\PAB,~\PBA\}$ 
has a Markov model with idempotent instruments satisfying RRE, with state space:
\[
\Hi = (\ansone \cup \{\sundef\}) \times (\anstwo \cup \{\sundef\})
\] 
\end{proposition}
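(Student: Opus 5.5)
The plan is to give an explicit Markov instrument model in which the state $s=(a,b)$ records the subject's current committed answer to each question, with $\sundef$ meaning ``not yet answered''. The model has three ingredients.

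\begin{itemize}
\item \emph{Initial state.} Take $\rho=\delta_{(\sundef,\sundef)}$.
\item \emph{Observation channels.} Take the deterministic maps $\markovfA(a,b):=a$ and $\markovfB(a,b):=b$, with an arbitrary outcome assigned when the relevant coordinate is $\sundef$. We will see that such states are never observed after a transition.
\item \emph{Transitions.} Define $\markovTA$ to act only on the first coordinate, depending on whether it is already filled:
\begin{itemize}
\item If the first coordinate is some $a\in\ansone$, then $\markovTA$ is the identity.
\item From $(\sundef,\sundef)$ it moves to $(a,\sundef)$ with probability $\data(Aa)$, the first marginal of $\PAB$.
\item From $(\sundef,b)$ it moves to $(a,b)$ with probability $\data(Aa\mid Bb):=\PBA(b,a)/\data(Bb)$. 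This is defined arbitrarily, as any distribution, when $\data(Bb)=0$.
\end{itemize}
Define $\markovTB$ symmetrically on the second coordinate, using $\data(Bb)$ and $\data(Bb\mid Aa)$ from $\PAB$.
\end{itemize}
Each column of $\markovTA$ and $\markovTB$ is a probability distribution, so these are probability channels and the instruments are genuine Markov instruments.

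\emph{Step 1: the model fits the data.} Starting from $(\sundef,\sundef)$, applying $A$ yields $(a,\sundef)$ with outcome $a$ and probability $\data(Aa)$. Applying $B$ then yields $(a,b)$ with outcome $b$ and probability $\data(Bb\mid Aa)$. The product of these is $\PAB(a,b)$, and symmetrically for $\PBA$. This is just the construction of Lemma~\ref{lem:C-Gdata}, except that instead of overwriting, the state now stores both answers.

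\emph{Step 2: idempotence.} The key observation is that after any application of $\markovTA$ the first coordinate lies in $\ansone$. A second application of $\markovTA$ is therefore the identity, and it reads off the same value $\markovfA(s')$. Hence applying $A$ twice equals $A$ followed by copying its outcome, which is exactly \eqref{eq:idempotent}. Equivalently, in the deterministic-style decomposition, $\markovTA\circ\markovTA=\markovTA$ and $\markovfA$ is constant along the second step. The argument for $B$ is the same.

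\emph{Step 3: state-independent RRE.} Repeatability follows from idempotence. For the $ABA$ condition, the first $A$ fixes the first coordinate to $a$. Next, $\markovTB$ only modifies the second coordinate, so the first coordinate stays $a$. The final $A$ then acts as the identity and outputs $a$, giving the condition $\PAaBbAad=\delta_{a,a'}\PAaBb$ diagrammatically for any input state. The $BAB$ condition is symmetric.

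I expect the main obstacle to be bookkeeping rather than ideas. It consists of checking these identities at the level of instruments with $S$ as an input, not merely on $\rho$. That requires confirming that every input state, including those with $\sundef$ entries, is handled, and that the arbitrary values of $\markovfA$ on states with $\sundef$ in the first coordinate never contribute, since $\markovfA$ is only ever evaluated after $\markovTA$. A secondary point is handling zero-probability conditionals, which are chosen arbitrarily but still as normalised distributions.
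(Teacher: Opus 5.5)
Your proposal is correct and is essentially the paper's own proof. It uses the same state space, the initial state with both coordinates undefined, and the deterministic read-off maps. The transitions for $A$ likewise freeze an already-answered first coordinate, otherwise sample it from $P(A)$ or $P(A \mid B)$, and never touch the second coordinate, so idempotence and RRE follow because an answered coordinate is never changed.

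Your remark that the conditionals must be taken as arbitrary normalised distributions when the conditioning probability is zero, so that $T_A$ is genuinely a channel, is a detail the paper leaves implicit.
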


\begin{proof}[Proof of Proposition \ref{prop:alt-two-RRE}]
A hidden state $\hi \in \Hi$ is thus a pair $\hi = (x_\qone,x_\qtwo)$ where $x_\qone \in \ansone$ or $x_\qone = \sundef$ (is `undefined') and $x_\qtwo \in \anstwo$ or $x_\qtwo = \sundef$ (is `undefined'). We use initial state $\star = (\sundef, \sundef)$. We define $f_\qone(\hi) = x_\qone$ if $x_\qone \in \ansone$, and define $f_\qone(\hi)$ arbitrarily if $x_\qone = \sundef$. Similarly $f_\qtwo(\hi) = x_\qtwo$ when this is in $X_\qtwo$ and define it arbitrarily otherwise. For any $x$ in $\ansone, \anstwo$ respectively define: 
\[
P(\qone x) := \sum_{x' \in \anstwo} P(\qone x \thenv \qtwo x')
\qquad \qquad 
P(\qtwo x) := \sum_{x' \in \ansone} P(\qtwo x \thenv \qone x')
\]
We define the channel $\ctrans{\qone}$ as follows.
\[
\ctrans{\qone}(x'_\qone x_\qtwo \mid x_\qone x_\qtwo) \ \ := \ \ 
\begin{cases}
\delta_{x_\qone, x'_\qone} & x_\qone \in \ansone \\ 
\frac{P(\qtwo x_\qtwo \thenv \qone x'_\qone)}{P(\qtwo x_\qtwo)} & x_\qone = \star, x'_\qone \in \ansone, x_\qtwo \in \anstwo \\ 
P(\qone x'_\qone) & x_\qone = \star, x_\qtwo = \star, x'_\qone \in \ansone \\ 
0 & \text{otherwise}
\end{cases}
\]
and with $\ctrans{\qone}(x'_\qone x'_\qtwo \mid x_\qone x_\qtwo) = 0$ for $x'_\qtwo \neq x_\qtwo$. We define $\ctrans{\qtwo}$ analogously. Then by inspecting the values for which each channel is non-zero, we see that: 
\begin{align*}
\sum_{\substack{s_1, s_2 \text{ s.t. } \\  f_\qone(s_1) = a, f_\qtwo(s_2) = b}} \ctrans{\qtwo}(s_2 \mid s_1)\ctrans{\qone}(s_1 \mid \star)& =
\ctrans{\qtwo}((a,b) \mid (a,\sundef))\ctrans{\qone}((a,\sundef) \mid \star) 
\\ & = 
\frac{P(\qone a \thenv \qtwo b)}{P(\qone a)} P(\qone a)
\\ &  = P(\qone a \thenv \qtwo b)
\end{align*}
The other condition is verified similarly. Finally we need to verify that the induced instruments are idempotent and satisfy RRE. 

One may see that to show that the instrument for $\qone$ is idempotent we must show that: 
\[
\ctrans{\qone}(a_2b_0 \mid a_1b_0) \ctrans{\qone}(a_1 b_0 \mid a_0 b_0)
= \delta_{a_1, a_2} \ctrans{\qone}(a_1 b_0 \mid a_0 b_0)
\]
Whenever the LHS above is non-zero we have that $a_1 \neq \sundef$ so that $a_1 \in \ansone$. In this case the left-hand term is equal to $\delta_{a_1,a_2}$ as required. 

Next, we verify RRE. One may see that to verify the $\qone \qtwo \qone$ condition of RRE we need to prove that we have:  
\[
\ctrans{\qone}(a_2b_1 \mid a_1b_1) \ctrans{\qtwo}(a_1b_1 \mid a_1b_0)\ctrans{\qone}(a_1b_0 \mid a_0 b_0) 
=
\delta_{a_1,a_2} \ctrans{\qtwo}(a_1b_1 \mid a_1b_0)\ctrans{\qone}(a_1b_0 \mid a_0 b_0)
\]
Indeed note that whenever the above is non-zero we have $a_1 \neq \sundef$, so that $a_1 \in \ansone$. But then from the definition of $\ctrans{\qone}$ we see that the left-hand term will be equal to $\delta_{a_1,a_2}$ as required. The $\qtwo \qone \qtwo$ condition for RRE holds similarly. 
\end{proof}

For yes-no decisions $\ans=\{y,n\}$ such as the Clinton-Gore data (Example \ref{ex:Clinton-Gore}), the Markov model provided by Proposition \ref{prop:alt-two-RRE} uses state space:
\[
\Hi = \{\star, \sundef\sany, \sundef\sann, \sany \sundef, \sany \sany, \sany \sann, \sann \sundef, \sann \sany, \sann \sann\}
\]
and where $\ctrans{A}$ is defined as follows. 
For each state with no outgoing arrows we transition back to the same state with probability $1$, resulting in the instruments being idempotent.
\[
\input{Trans-2.tikz}
\]
The instrument for $B$ and $T_B$ are defined in the same way, exchanging $A$ and $B$ and swapping the two components e.g. $\sann\sany \mapsto \sany\sann$, $\sany \sundef \mapsto \sundef \sany$ etc. We interpret the elements of each pair as the outcomes to $A, B$ respectively from `yes', `no' or `not yet asked'. The outcome to $A$ or $B$ returns the corresponding element of the pair, following the transition.

\section{Related Work: Entangled Concepts} \label{sec:entangled-concepts}

Bell-type arguments around the entangled nature of mental states have already been explored in the quantum-like cognition literature, most commonly in the representation of human \emph{concepts}, as in the work of Aerts and collaborators \cite{
aerts2005theory,
aerts2009quantum,
aerts2018spin}, as well as \cite[Chapter 5]{busemeyer2012quantum}, beginning with Gabora and Aerts \cite{gabora2002contextualizing}. 

In these approaches, multi-aspect concepts are (implicitly) understood as being modelled by joint decision data 
where each decision $A,A', B, B'$ has two outcomes, gathered from real world experiments on human participants who are asked to interpret a concept by choosing from a set of possible exemplars. 

\begin{dataexample} \label{ex:wind-directions}
In \cite{aerts2018spin} 
 participants are asked to choose, from a given set of options, two different examples of wind directions which together best form an example of the concept \confont{Two Different Wind Directions}.\footnote{In the original experiment, the directions are represented graphically as arrows on a compass.} The decisions $A, A'$ are the possible sets of options for the first wind direction, and $B, B'$ for the second, given as follows.
\[
A = \{N, S\}, A'=\{E, W\}, B=\{NE, SW\}, B'=\{SE, NW\}
\] 
For example, in a run of the joint decision $A \jdec B$ participants must choose whether (N, NE), (N, SW), (S, NE), or (S, SW) best fits the concept. The following joint decision data gives the proportions of choices for each pair under each condition, where for short we write $A1$ for $AN$, $B'2$ for $B'NW$ etc. 
\begin{center} \label{eq:table-PAB}
    $\begin{array}{c|cccc}\toprule
         &  $B1$&  $B2$&  $B'1$& $B'2$\\\midrule
         $A1$&  0.13&  0.55&  0.47& 0.12\\
         $A2$&  0.25&  0.07&  0.06& 0.35\\
         $A'1$&  0.13&  0.38&  0.09& 0.44\\
         $A'2$&  0.42&  0.07&  0.38& 0.09\\ \bottomrule
    \end{array}$
    \end{center}
\end{dataexample}

A related prominent area in which entanglement has been argued for in cognition has been in so-called \emph{concept combinations}, in which two or more concepts (words) are combined.

\begin{dataexample}
In \cite{aerts2011quantum}, Aerts and Sozzo consider the combined concept \confont{The Animal acts}. In an experiment, participants are asked to jointly choose a pair of words from given sets of options which best fit the concept. The first sets of decisions $A = \{\confont{Horse}, \confont{Bear}\}$ and $A' = \{\confont{Tiger}, \confont{Cat}\}$ relate to the first word \confont{Animal} and the second $B = \{\confont{Growls}, \confont{Whinnies}\}$ , $B' = \{\confont{Snorts}, \confont{Meows}\}$ to \confont{Acts}. For example, under conditions $A \jdec B$ participants must choose from the options \confont{Horse growls}, \confont{Horse whinnies}, \confont{Bear growls}, or \confont{Bear whinnies}. Again the data can be represented in a table similar to \eqref{eq:table-PAB}.
\end{dataexample}

\begin{dataexample}
Bruza et al.~in \cite{bruza2015probabilistic}, summarised in \cite[Chapter 5.5]{busemeyer2012quantum}, study entanglement in bi-ambiguous concept combinations such as \confont{Boxer bat} and \confont{Spring plant}. In particular the concept combination \confont{Ring pen} is argued to be `non-classical'. It is bi-ambiguous in that the word \confont{Ring} and \confont{pen} may each be interpreted in one of two senses; either as a shape or as an object. An experiment consists in asking participants for an interpretation of the concept from which ultimately we deduce which sense of interpretation of each word has been used. Roughly, before being presented with the concept, participants are primed for one of the senses of the word and then the test determines whether the priming was successful in each case. $A, A'$ refer to priming for each of the two senses for the first word, and $B, B'$ for the second word, with outcome $y$ when successful and $n$ unsuccessful. 
The data for \confont{Ring pen} is as follows. 
\begin{center} 
    $\begin{array}{c|cccc}\toprule
         &  By&  Bn&  B'y& B'n\\\midrule
         Ay&  0.26&  0&  0.21& 0\\
         An&  0.33&  0.41&  0.03& 0.76\\
         A'y&  0.69&  0&  0.49& 0\\
         A'n&  0&  0.31&  0& 0.51\\ \bottomrule
    \end{array}$
\end{center}
\end{dataexample}

For each of these examples, the probability tables indeed violate the Bell-CHSH inequality.\footnote{The quantity $\QQ(A,B)$ is the trace of the upper left 2x2 matrix, $\QQ(A',B)$ that below, and so on, so that in \ref{eq:table-PAB} we see that: 
$\left| \QQ(A,B) + \QQ(A',B) + \QQ(A',B') - \QQ(A,B') - 1 \right| = |-1.42| > 1$
}
 Unfortunately however, these approaches do not satisfy the requirements \ref{enum:nat} - \ref{enum:Tsirelson-bound} for a Bell argument laid out in Section \ref{subsec:Bell-arguments}. Most crucially, each of the above datasets fails to satisfy the no-signalling condition \eqref{eq:no-signalling}, as pointed out in \cite{dzhafarov2014selective}. Hence it is not possible to give parallel decision models of any of the datasets, defeating the use of Bell's Theorem to rule out classical such models.

 In response, Aerts et al.~argue that \eqref{eq:no-signalling} is a sufficient but not necessary condition to prove that no signalling occurs in the data \cite{aerts2014quantum}, and also suggest that data violating the condition should be explored more seriously in physical settings such as physical Bell experiments. 
More importantly here, rather than using parallel decision models, they propose to model concepts with general joint measurements of the form:
\begin{equation} \label{eq:product-meas}
\input{joint-meas-aerts.tikz}
\end{equation}
The authors then apply results of \cite{pitowsky1989quantum} to show that CHSH violations rule out Bayesian (`Kolmogorov') models of the above form. They thus propose the use of quantum models, which they assume to be based on pure quantum measurements, from which it follows that each joint decision $A \jdec B$ is now modelled by an entangled quantum measurement.

However, unfortunately there is no analogous Bell-like theorem to rule out more general classical models of the form \eqref{eq:product-meas}, based on probability channels. For example we can give a trivial classical model in this sense with $S_1 = S_2 = \emptyset$, $\omega=1$, and set $A \jdec B = P(A \jdec B)$. To structurally rule out classical models would require either alternative no-signalling data, or an alternative Bell-style theorem applicable to effects forcing the measurements \eqref{eq:product-meas} to be entangled).

Nonetheless, though we are not forced structurally to do so, there may be good reasons for using (entangled) joint quantum measurements of the form \eqref{eq:product-meas} as representations of concepts. Indeed one of the authors of this work has co-authored a quantum model inspired by Gärdenfors' framework of conceptual spaces in \cite{tull2024conceptual}, as discussed in Section \ref{sec:comp-models}. One argument for using quantum joint measurements could be a potential advantage in the use of quantum computers (via entanglement) to model correlations between variables, when doing so may be `more expensive' classically. It would be interesting to further explore the arguments for the modelling of concepts in this manner, within cognitive science or AI, beyond strict structural constraints. 

\end{document}